\documentclass[a4paper,twocolumn,superscriptaddress,notitlepage]{revtex4-2}
\usepackage{bbm}
\usepackage{amsmath}
\usepackage{amssymb}
\usepackage{amsthm}
\usepackage{amsfonts}
\usepackage{amsmath}
\usepackage{graphicx}
\usepackage{xcolor}
\usepackage{wrapfig}
\usepackage{csquotes}
\usepackage{braket}
\usepackage{setspace}
\usepackage{verbatim}
\usepackage[colorlinks=true,linkcolor=teal,citecolor=teal,urlcolor=teal]{hyperref}
\usepackage{mathtools}
\usepackage{booktabs}
\usepackage{braket}
\usepackage{enumitem}
\usepackage{pgfplots}

\usepackage{pdfpages}
\usepackage{pgffor}
\makeatletter
\AtBeginDocument{\let\LS@rot\@undefined}
\makeatother

\usepackage{mleftright}
\usepackage{times}
\usepackage{newtxtext}

%\usepackage{newtxmath}

%\bibliographystyle{unsrtnat}
% MISC

%\newcommand{\transp}{\ensuremath{\scriptscriptstyle T}}

\newcommand*{\Tr}{\operatorname{Tr}}

% VECTORS

% OPERATORS

% MATHCAL

\providecommand{\calA}{\ensuremath{\mathcal{A}}}

\providecommand{\calC}{\ensuremath{\mathcal{C}}}

\providecommand{\calM}{\ensuremath{\mathcal{M}}}
\providecommand{\calN}{\ensuremath{\mathcal{N}}}
\providecommand{\calO}{\ensuremath{\mathcal{O}}}
\providecommand{\calP}{\ensuremath{\mathcal{P}}}

% MATHBB

\providecommand{\bbP}{\ensuremath{\mathbb{P}}}

% MATHFRAK

%%%% Marker:defs from Yihui's cache %%%%
\def\01{\{0,1\}}
\newcommand{\eps}{\epsilon}

\newcommand{\C}{\ensuremath{\mathcal{C}}}

\newcommand{\mN}{\ensuremath{\mathcal{N}}}

\newcommand{\ketbra}[2]{|{#1}\rangle\!\langle{#2}|}

%%%%%% Comments
\usepackage{xcolor}
\definecolor{ypurple}{RGB}{139,0,139}

% fix booktabs
\AtBeginDocument{
\heavyrulewidth=.08em
\lightrulewidth=.05em
\cmidrulewidth=.03em
\belowrulesep=.65ex
\belowbottomsep=0pt
\aboverulesep=.4ex
\abovetopsep=0pt
\cmidrulesep=\doublerulesep
\cmidrulekern=.5em
\defaultaddspace=.5em
}

\newtheorem{theorem}{Theorem}
\newtheorem{problem}{Problem}
\newtheorem{lemma}{Lemma}

\newtheorem{defn}{Definition}

\newcommand{\beq}{\begin{equation}}
%       BEGIN EQUATION MODE WITH LABEL
\newcommand{\beql}[1]{\begin{equation}\label{#1}}
%       END EQUATION MODE
\newcommand{\eeq}{\end{equation}}
%       END EQUATION MODE WITH A PERIOD
\newcommand{\eeqp}{\;\;\;.\end{equation}}
%       END EQUATION MODE WITH A COMMA
\newcommand{\eeqc}{\;\;\;,\end{equation}}

\def\01{\{0,1\}}

\newcommand{\Q}{\mathcal{Q}}

\newcommand{\mA}{\mathcal{A}}

\DeclarePairedDelimiterX{\infdivx}[2]{(}{)}{%
  #1\;\delimsize\|\;#2%
}

\newcommand{\mC}{\mathcal{C}}

\newcommand{\Par}{\ensuremath{\mathsf{PARITIES}}}
\newcommand{\poly}{\ensuremath{\operatorname{poly}}}

\definecolor{dgreen}{RGB}{0,100,0}

%%%%%%%%%%%%%%%%
% Let us use this color for new text

\begin{document}	
\title{Exponentially tighter bounds on limitations of quantum error mitigation}

\author{Yihui Quek}
\address{Dahlem Center for Complex Quantum Systems, Freie Universit{\"a}t Berlin, 14195 Berlin, Germany}
\affiliation{School of Engineering and Applied Sciences, Harvard University, Cambridge, MA 02318, USA}

\author{Daniel Stilck Fran\c{c}a}
\address{Department of Mathematical Sciences, University of Copenhagen, 2100 K{\o}benhavn, Denmark}
\address{Univ Lyon, Inria, ENS Lyon, UCBL, LIP, F-69342, Lyon Cedex 07, France.}
\address{Dahlem Center for Complex Quantum Systems, Freie Universit{\"a}t Berlin, 14195 Berlin, Germany}

\author{Sumeet Khatri}
\address{Dahlem Center for Complex Quantum Systems, Freie Universit{\"a}t Berlin, 14195 Berlin, Germany}

\author{Johannes Jakob Meyer}
\address{Dahlem Center for Complex Quantum Systems, Freie Universit{\"a}t Berlin, 14195 Berlin, Germany}

\author{Jens Eisert}
\address{Dahlem Center for Complex Quantum Systems, Freie Universit{\"a}t Berlin, 14195 Berlin, Germany}

\address{Helmholtz-Zentrum Berlin f{\"u}r Materialien und Energie, 14109 Berlin, Germany}

	\date{\today}

	\maketitle

{\bf Quantum error mitigation has been proposed as a means to combat unwanted and unavoidable errors in near-term quantum computing without the heavy resource overheads required by fault-tolerant schemes. Recently, error mitigation has been successfully applied to reduce noise in near-term applications. In this work, however, we identify strong limitations to the degree to which quantum noise can be effectively ‘undone’ for larger system sizes. Our framework rigorously captures large classes of error mitigation schemes in use today. By relating error mitigation to a statistical inference problem, we show that even at shallow circuit depths comparable to the current experiments, a superpolynomial number of samples is needed in the worst case to estimate the expectation values of noiseless observables, the principal task of error mitigation. Notably, our construction implies that scrambling due to noise can kick in at exponentially smaller depths than previously thought. They also impact other near-term applications, constraining kernel estimation in quantum machine learning, causing an earlier emergence of noise-induced barren plateaus in variational quantum algorithms and ruling out exponential quantum speed-ups in estimating expectation values in the presence of noise or preparing the ground state of a Hamiltonian.}
%While improvements in quantum hardware will push noise levels down, if error mitigation is used, ultimately this can only lead to an exponential time algorithm with a better exponent when compared with classical algorithms, putting up a strong obstruction to the hope for exponential quantum speedups in this setting.
%Our results imply that any hope of efficiently scaling error mitigation up to larger system sizes at $\textrm{poly log log(\textit{n})}$ depth.
%They also impact other near-term applications, constraining kernel estimation in quantum machine learning and causes an earlier emergence of noise-induced barren plateaus. %This is no detail: 
%What is at stake is no less than the question of to what extent there is scope for quantum computing before the advent of fault-tolerant quantum computers.}

%\JJM{Above a proposal for a shortened abstract.}

%\JJM{Nature Communications guidelines: up to 5000 words for the main text (currently ~8500) and 3000 for the methods, up to 10 figures and 70 references. This means we have to cut nearly half of the text!}
%\je{Nature Physics asks for 3000 words, in fact.}

%\subsection*{Introduction}

Quantum computers promise to efficiently solve some computational tasks out of reach even of classical supercomputers. As early as in the 
1980s~\cite{Feynman-1986},
%,QuantumTuring}, 
it was suspected that quantum devices may have computational capabilities that go substantially beyond those of classical computers. 
%\DSF{"This part can be squeezed"}
% True, let us see whether we need it
Shor's algorithm, presented in the mid 1990s, confirmed this suspicion by presenting an efficient quantum algorithm
for factoring, for which no efficient classical algorithm is known~\cite{Shor-1994}. 
Since then, the quantum computer has been a hugely inspiring theoretical idea. 
%It also served as a guiding principle in devising actual quantum devices. 
Soon, it became clear that unwanted interactions with the environment and hence the concomitant decoherence would be the major threat against
realizing quantum computers as actual physical devices. 
Early fears that decoherence could not be overcome in principle, fortunately,
were proven wrong. 
The field of quantum error correction presented ideas that show that 
%even if one cannot read out
%logical quantum information along the way without necessarily perturbing the very same quantum information, 
one can still correct for %errors, in fact 
arbitrary unknown errors~\cite{PhysRevA.52.R2493}. 
This key insight triggered a development that led to the blueprint of
what is called the fault-tolerant quantum computer~\cite{QEC2,Roads}, a (so far still fictitious) device that allows for arbitrary local errors and
can still maintain an arbitrarily long and complex quantum computation.
That said, known schemes for fault tolerance come along with demanding, possibly prohibitive overheads 
%In the most popular of fault tolerant schemes, one would 
%think of realizing surface codes, while computation would be performed by
%so-called lattice surgery and magic state distillation. 
%This means that 
%Logical qubits are encoded in a vastly larger number of physical qubits
~\cite{Roads}.
For the quantum devices that have been % experimentally developed
realized 
in recent years, %---extremely impressive devices after all the time of 
%quantum computers being primarily 
%objects of theoretical thoughts---
such prescriptions still seem 
%by far 
out of scope.

For this reason, \emph{quantum error mitigation} has gained traction recently~\cite{PhysRevX.7.021050,PhysRevLett.119.180509,PhysRevX.8.031027,MitigationReview} as a possible near-term surrogate for quantum error correction. The idea is to correct the effect of quantum noise on a near-term computation's result via classical post-processing on measurement outcomes, without mid-circuit measurements and adaptive gates, as is done in error correction. This minimizes the overheads in physical hardware.

While a compelling prospect, we ask: {To what extent can we really achieve such classical correction of quantum noise {\em post-hoc}?}

In this {work}, we argue that the current generation of error mitigation schemes must be 
{seriously reconsidered} in order to reach this goal. Empirically, some of them seem to come at a severe quantum resource penalty: At least one specific protocol, 
\emph{zero-noise extrapolation}, requires a number of samples scaling exponentially in the number of gates in the light-cone of the observable of interest, with the exponent depending on the noise levels~\cite{PhysRevX.8.031027}. A similar exponential scaling besets the technique of \emph{probabilistic error cancellation} under a sparse noise model~\cite{IBMexponential}. Our results contribute to a theoretical understanding of the conditions under which this happens.

We identify striking obstructions to error mitigation: even to mitigate noisy circuits slightly beyond constant depth requires a super-polynomial number of uses of those circuits in the worst case. These obstructions come from turning the lens of statistical learning theory onto the problem of error mitigation.
%they require exponential in 
%, using the tools of
%statistical learning theory
%we find they have fundamental limitations % of quantum error mitigation
%by proving lower bounds on their %its 
%worst-case complexity.
%Since this work is aimed at providing 
%rigorous bounds, we have to be 
%precise what we exactly mean
%by a quantum error mitigation protocol.
%We basically discuss two types of schemes.
%In the first type, 
We formulate this problem rigorously as one where the mitigation algorithm works with a classical description
of a noiseless circuit and specimens of the noisy circuit's output state, on which it can perform arbitrary measurements. 

We then distinguish two tasks: In the first (\emph{weak error mitigation}),
the goal is to output a collection of expectation values of the noiseless circuit's
output state. %We refer to this task as % family of
%algorithms as
%\emph{expectation-value-error 
%mitigation} or
%\emph{weak error mitigation}. 
This approach is taken to mitigate errors in variational
quantum circuits~\cite{PhysRevX.7.021050,McClean_2016},
an important family of circuits for near-term quantum devices.
In other situations (\emph{strong error mitigation}), the goal is to output a sample from the clean output state when measured in the computational basis. 
%We refer to this task as % family of methods as
%\emph{sample-error mitigation}
%or \emph{strong error mitigation}. 
This approach is taken to mitigate errors in algorithms for hard combinatorial optimization problems, a class that includes the famous \emph{quantum approximate
optimization algorithm} (QAOA)~\cite{QAOA}. Our framework and results encompass many  error-mitigation protocols used in practice, including  \emph{virtual distillation}~\cite{huggins2021virtual,koczor2021exponential}, \emph{Clifford data regression} (CDR)~\cite{czarnik2021error}, \emph{zero-noise extrapolation} (ZNE)~\cite{PhysRevLett.119.180509} and \emph{probabilistic error cancellation} (PEC) \cite{PhysRevLett.119.180509}. %PhysRevX.8.031027}.

%\ynote{KIV: For both families of methods, we prove
%rigorous limitations.} While the language and framework used
%is that of mathematical physics, to come up with precise
%statements, our formalism also captures practically minded algorithms as they are actually 
%being used in today's laboratories around the world.
%Related work

Novel error mitigation schemes are being intensely developed, even as we write~\cite{PhysRevX.7.021050,PhysRevLett.119.180509,PhysRevX.8.031027}. Unsurprisingly, given the high expectations for such techniques, their limitations have also been studied previously.
%and so has the study of the limitations of these techniques. 
{In particular, we build on the key work
\cite{TEMG21}
%,TTG22} 
that has first identified limitations to quantum error mitigation by studying how noise affects the distinguishability of quantum states.} There as well as in Refs.~\cite{TTG22, Fisher22} concurrent to our work, the authors
take an information-theoretic approach to study the sample complexity of weak error mitigation under depolarizing noise {(more generally, Pauli channels and thermal noise applied to the identity circuit) and show that it scales exponentially, but only in the depth of the circuit. This is not a limitation when depth is $O(\log n)$, as is the case in practice.} 
%Similar to us, their bounds arise from considering how noise affects the distinguishability of quantum states.

{In fact, the quantum community has known for some time that quantum states being manipulated by noisy quantum circuits undergoing depolarizing noise exponentially
quickly (in depth) converge to maximally mixed states \cite{PRXQuantum.3.040329,MullerHermes,FG20}. This behavior
suggests that quantum advantage is lost once circuits exceed logarithmic depth (if no error correction is used). 
Strikingly, in contrast to this natural expectation, our limitations kick in for circuits that come much closer to
constant-depth. This is because we introduce a novel dependence of the sample cost on the \emph{width} of the circuits ($n$). We are also, to our knowledge, the first work to analyze error mitigation under non-unital noise, a highly physically relevant class of noise that includes $T_1$ decay, the primary source of noise in superconducting qubit architectures. In addition to the mathematical tools we originate for analyzing noisy computation, we make a strong conceptual point: error mitigation schemes in practice, empirically found to have $\exp(n)$ sample complexity, are pretty much as good as they can get -- unless they are crafted to address circuits with a special structure that eludes our bounds.} 

While quantum error mitigation has seen practical success on small noisy quantum devices~\cite{MitigationReview}, our results put hard limits on how much we can expect error mitigation to scale with the size of such devices. We have demonstrated {\em worst-case} circuits that must be run exponentially-many times for error mitigation on them to work. Lower bounds shine light on upper bounds: our work suggests that future sample-efficient error mitigation schemes must dodge the limitations we have identified. How then can we design circuits to be more resilient to noise, and how far into the feasible regime can we push the line between error mitigation and error correction?
%Could other models of quantum computation, say dissipative preparation~\cite{verstraete2009quantum}, offer good and more robust alternatives to the circuit model in the NISQ era? 
%Our work also shines some preliminary light on the little-studied domain of strong error-mitigation protocols, which output samples (as opposed to expectation values) from the noiseless circuit, and invites proposals of how to do so efficiently.

%\JJM{I propose to keep the above introduction, but to just relentlessly condense it. Also I would put the related works here and shorten it.}
%\je{Indeed, here is where we cite off past work. We have no space in the main text for a long discussion of previous work. We can keep it in the appendix. But let us not shorten the introduction too much, this is important for Nature Physics.}

\subsection*{Introduction to the technique}
To establish the framework, we start off by defining what constitutes `error mitigation' in the rest of this work. In the literature, the term `error mitigation' has been used to describe protocols that are appended after a noisy quantum algorithm. Such protocols reduce the unwanted effect of noise on a quantum circuit by measuring it and classically post-processing the results (see Fig.~\ref{fig:error_mitigation_general}). %At its core, an error mitigation algorithm converts noisy quantum states to a classical representation of the noiseless quantum state -- 
They then output either samples from the would-be noiseless circuit or its expectation values on observables of interest, depending on the purpose of the original quantum algorithm. More concretely, we formulate the problem of error mitigation as follows.

%The appropriate representation varies: error mitigation is often appended after a quantum algorithm, to revert its outputs to their noiseless version, so the goal of error mitigation should be output the desired output of the quantum `outer loop' into which it is inserted (see Fig.~\ref{fig:error_mitigation_general}) 
%for a schematic depiction.

%Let $\mathcal{C}$ and $\mathcal{C}'$ denote the noiseless circuit and noisy circuit we wish to run, respectively, and let $\sigma$ and $\sigma'$ denote their output states. We formulate the problem of error mitigation of the noisy circuit as follows.

\begin{problem}[Error mitigation (informal)\label{problem:EM_intro}]
Upon input of:
\begin{enumerate}[itemsep=0mm,topsep=0mm,parsep=0mm]
    \item[(i)] a classical description of a noiseless circuit $\mC$ and a finite set $\mathcal{M} = \{O_i\}$ of observables;
    \item[(ii)] copies of the output state $\sigma'$ of the noisy circuit $\mC'$, and the ability to perform collective measurements,
\end{enumerate}
output either:
\begin{enumerate}[itemsep=0mm,topsep=0mm,parsep=0mm]
    \item[(i)] estimates of the expectation values $\Tr(O_i \sigma)$ for each $O_i \in \mathcal{M}$ (weak error mitigation), where $\sigma$ is the output state of the noiseless quantum circuit $\mC$; or
    \item[(ii)] samples from a probability distribution close to the distribution of $\sigma$ when measured in the computational basis (strong error mitigation).
\end{enumerate}
The number $m$ of copies of $\sigma'$ needed is the \emph{sample complexity} of (weak or strong) error mitigation.
\label{prob:EM_intro}
\end{problem}
%\begin{problem}[Error mitigation, informal definition]\label{prob:EM_intro}
%Given as input:
%\begin{itemize}
%    \item Classical descriptions of a noiseless circuit $\mC$, some noise map that transforms $\mC$ to $\mC'$ and a set $\mathcal{M} = \{O_i\}$ of observables.
%    \item Copies of $\sigma'$, and the ability to measure those copies.
%\end{itemize}
%Output either:
%\begin{itemize}
%    \item Estimates of the expectation values $\Tr(O_i \sigma)$ for each $O_i \in \mathcal{M}.$ We call this {\em weak} error mitigation; denote the number of copies of $\sigma'$ needed as $m_{\text EM, \text{weak} }$.
%    \item Samples from a probability distribution which is approximately that associated with measuring $\sigma$ in the computational basis, i.e., $x \sim \Tr(\ketbra{x}{x}\sigma)$. We call this {\em strong} error mitigation; denote the number of copies of $\sigma'$ needed as $m_{\text EM, \text{strong} }$.
%\end{itemize}
%\end{problem} 
{The bounds we prove apply even to error mitigation protocols that are given an {\em exact} description of the noise model affecting the circuit. Naturally, this also means our bounds apply to those protocols that know nothing about the noise and must learn it on-the-fly, as they are a subset of the protocols mentioned in the previous sentence.}
We refer to the Supplemental Material for rigorous definitions of error mitigation algorithms (see Definitions~\ref{def:Weak_EMalg} and~\ref{def:Strong_EMalg} therein) and for an explanation of how these definitions connect to well-known protocols such as zero-noise extrapolation and probabilistic error cancellation (see Section~I %~\ref{suppmat:protocols} 
therein). These protocols, for instance, run $\mC$ with varying levels of noise, apply simple quantum post-processing to $\mC$, or modify the circuit they run based on the output of intermediate measurements. {This level of abstraction of error mitigation protocols builds on those of Refs.~\cite{TEMG21,TTG22,Fisher22}}.
% Corrected this.

%our analysis that bounds the sample complexity of Problem \ref{prob:EM_intro}, can be extended even to ostensibly more complicated error mitigation protocols (such as zero-noise extrapolation and probabilistic error cancellation)  

% We also consider error mitigation protocols with other types of outputs. An error mitigation algorithm is often appended after a quantum algorithm, to revert its outputs to their noiseless version. Hence the desired output of error mitigation depends on the desired output of the quantum `outer loop' into which it is inserted. In general, an error mitigation algorithm can have two types of outputs:
% \begin{itemize}
%     \item Expectation values of a set $\mathcal{M}$ of observables, i.e. the values $\Tr\left[O_i \sigma \right]$ for $O_i \in \mathcal{M}$. We call this {\em weak} error mitigation. This is made rigorous in Definition \ref{def:Weak_EMalg}.
%     \item Samples from a probability distribution which is approximately that associated with measuring $\sigma$ in the computational basis, i.e. $x \sim \Tr[\ketbra{x}{x}\sigma]$. We call this {\em strong} error mitigation. This is made rigorous (for two notions of the word `approximately') in Definitions \ref{def:Strong_EMalg},\ref{def:Robust_EMalg}.
% \end{itemize}

In this work, we reduce a {\em statistical inference} problem on noisy quantum states to error mitigation on the circuits that produced them. 
%---where `statistical inference' is an umbrella term for problems that deal with identifying underlying parameters of an underlying distribution based on (possibly corrupted) samples. 
%In this view, the underlying distribution in error mitigation comes from the noiseless state, and the samples are those from measuring the noisy state.
What motivates this perspective is the observation that a `good' error mitigation algorithm should act as an effective denoiser, allowing one to distinguish one state from another, even if the states can only be accessed by measuring their noisy versions. Seen from this angle, error mitigation solves the following problem:
%In fact, consider
%us to consider 

\begin{problem}[Noisy state discrimination (informal)]
Upon input of:
\begin{enumerate}[itemsep=0mm,parsep=0mm,topsep=0mm]
    \item[(i)] classical descriptions of a set $S = \{\rho_0, \rho_1, \ldots,  \rho_N\}$ of $n$-qubit quantum states, a noiseless circuit $\mC$; and
    \item[(ii)] $m$ copies of a state $\C'(\rho_i)$ (where $\C'$ is the noisy version of $\C$), with $i \in \{0,\ldots,  N\}$ unknown, and the ability to perform collective measurements,
\end{enumerate}
output $\hat{i} \in \{0,\ldots , N\}$ such that $\hat{i} = i$ (`success') with high probability.
\label{prob:state_intro}
\end{problem}

%A central observation in this work is that noisy state discrimination for a particular choice of $S$ and $\mC$ reduces to weak error mitigation. The reduction can be sketched as follows:
%
%\begin{proof}[Proof Sketch] The bound on the probability of success originates from Ref.~\cite{Tsybakov_2009} (Corollary 2.6). We refer the reader to Lemma \ref{lem:fano} for an extended statement. 
%
%We now prove the reduction, i.e., Inequality~\eqref{eq:reduction_intro}, by exhibiting the aforementioned set of instances. 
Problems~\ref{prob:EM_intro} and \ref{prob:state_intro} are intimately related. To see this, consider, in Problem~\ref{prob:state_intro}, discriminating between the maximally mixed state $\rho_N \coloneqq \mathbb{I}/2^n$ and the states $\rho_x \coloneqq \ketbra{x}{x}$ for $x\in \01^n$. This problem can be solved by performing weak error mitigation with the observables $\mathcal{M} = \{\mathcal{C}(Z_i)\}_{i\in [n]}$. This is because a weak error mitigation algorithm should output the estimates $\Tr(\mathcal{C}(Z_i)\mathcal{C}(\rho_x)) = \Tr(Z_i \rho_x)$. Now consider two cases: (I) if the unknown state had been $\mC'(\rho_x)$ for $x\in \01^n$, then $\Tr(\rho_x Z_i) = 2x_i-1$; (II) if instead the unknown state had been $\mC'(\rho_N)$, then $\Tr(\rho_N Z_i) = 0 \quad \forall i\in [n]$. The output of Problem \ref{prob:EM_intro} thus completely identifies the label of the state, $i$, thus solving Problem \ref{prob:state_intro}. The upshot is as follows: if {\em at least} $m$ copies of the noisy state must be used for successful discrimination in the above setting (solving Problem \ref{prob:state_intro}), then {\em at least} the same number of copies are needed for successful weak error mitigation (solving Problem \ref{prob:EM_intro}). We use the information-theoretic generalized Fano method for multiple hypothesis testing~\cite{Tsybakov_2009} to provide a lower bound on $m$. The crucial quantity to study, it turns out, is a quantum relative entropy
\begin{equation}\label{eq:relent_intro}
D(\mC'(\rho_x) \Vert \mathbb{I}/2^n).
\end{equation}
The inverse rate of decay of this quantity bounds the sample complexity of error mitigation. We use tools for precise control of quantum relative entropies of unitary 2-designs under unital and non-unital quantum noise in order to construct circuits $\mC'$ that yield our strong bounds on error mitigation.

\subsection*{Results}

With our information-theoretic perspective, we are able to establish the following fundamental limits on a broad swath of error mitigation protocols.

%\textbf{How many noisy copies of $\sigma'$ are required for mitigating local depolarizing noise?} 
\textbf{What is the sample complexity of error mitigation for local depolarizing noise?} Intuitively, one expects the sample complexity of error mitigation to scale in the size of the noiseless circuit $\mC$ and the amount of noise affecting it. Our Theorem~\ref{thm:2_intro} confirms this intuition starkly: it shows that the dependence of the resource requirement on these parameters is {\em exponentially} higher than previously known. Concretely,  Theorem~\ref{thm:2_intro} hinges on showing that the distance of a noisy circuit's output state to the maximally-mixed states goes below constant order at $\poly \log \log n$ depth, while the onset of this effect at the exponentially-larger $\log n$ depth pointed out in previous works, was already detrimental to many near-term applications~\cite{TEMG21,TTG22,Fisher22,Wasserstein_variational_22, LosAlamos_EM,LosAlamos_Kernels,FG20,Wasserstein_variational_22,Wang2021,Abhinav_21}. 

\begin{theorem}[Number of samples for mitigating depolarizing noise scales exponentially in the number of qubits and depth]\label{thm:2_intro}
% %\begin{theorem}[Mitigating depolarizing noise requires exponential-in-$n, D$ samples]\label{thm:2_intro}
% For $D> \Omega(\log^2(n))$, weak error mitigation on an $n$-qubit, $D$-layer quantum circuit affected by local depolarizing noise of parameter $p\in(0,1)$, requires at least $p^{-\Omega(n \,D)}$ copies of the circuit's output state.
Let $\mA$ be a weak error mitigation algorithm that mitigates the errors in an $n$-qubit, $D$-layer quantum circuit $\mC$ affected by local depolarizing noise $\mN$ of parameter $p$. For some parameter $s>0$ and depths $D \geq \Omega(\log^2(n/s))$, $\mA$ requires as input at least $s^{-1}p^{-\Omega(n \,D/s)}$ copies of the output state $\sigma'$ of the noisy quantum circuit $\mC'$.
\end{theorem}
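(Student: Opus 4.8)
The plan is to derive the bound by reducing the noisy state discrimination task of Problem~\ref{prob:state_intro} to weak error mitigation and then lower-bounding the discrimination sample complexity via the generalized Fano method. As the worst-case instance I would take $\mC$ to be a depth-$D$ brickwork circuit whose two-qubit gates are drawn independently from a unitary $2$-design on two qubits, with $\mN$ a layer of single-qubit depolarizing noise of parameter $p$ applied after each gate layer, so $\mC'$ is the interleaving of the two. The hidden states are $\rho_x=\ketbra{x}{x}$ for $x\in\01^n$ (optionally together with $\Id/2^n$), and the observable set handed to $\mA$ is $\mathcal{M}=\{\mC(Z_i)\}_{i\in[n]}$, each of unit operator norm. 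Since $\Tr(\mC(Z_i)\,\mC(\rho_x))=\Tr(Z_i\rho_x)=1-2x_i$, any weak error mitigation algorithm that achieves additive accuracy below $1$ with constant success probability recovers $x$ from the signs of its outputs, hence solves the discrimination problem using the \emph{same} number $m$ of copies of the noisy output. So it suffices to lower-bound $m$ for discrimination.

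For that lower bound, the $2^n$ hypotheses produce $m$-copy states $\mC'(\rho_x)^{\otimes m}$, and the generalized Fano inequality applied with the reference state $(\Id/2^n)^{\otimes m}$ — with data processing ensuring that arbitrary collective measurements are covered for free — shows that success probability bounded away from $2/3$ forces
\begin{equation}
m\;\geq\;\Omega\!\left(\frac{n}{\mathbb{E}_{x}\,D(\mC'(\rho_x)\Vert\Id/2^n)}\right) .
\end{equation}
Everything therefore hinges on the relative-entropy estimate foreshadowed around \eqref{eq:relent_intro}: that a typical circuit of the ensemble satisfies $\mathbb{E}_{x}\,D(\mC'(\rho_x)\Vert\Id/2^n)\leq O(ns)\,p^{\Omega(nD/s)}$, which inserted above yields the claimed $s^{-1}p^{-\Omega(nD/s)}$. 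To get this I would pass to the purity via $D(\tau\Vert\Id/2^n)\leq\log(2^n\Tr(\tau^2))\leq 2^n\Tr(\tau^2)-1$, then use Jensen across the circuit average together with the Pauli-conjugation symmetry that makes all $\rho_x$ equivalent, reducing the task to bounding the expected excess purity $2^n\,\mathbb{E}_{\mC}\Tr(\mC'(\rho_0)^2)-1$. A Markov-type step then extracts one fixed worst-case circuit $\mC^\ast$ for which $\mathbb{E}_x\,D(\mC^{\ast\prime}(\rho_x)\Vert\Id/2^n)$ is small, which is exactly what the reduction needs.

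The heart of the argument, and the step I expect to be the main obstacle, is the second-moment analysis of the expected excess purity of a \emph{noisy} random circuit. Writing $\Tr(\tau^2)=2^{-n}\sum_{P}\hat\tau(P)^2$ in the Pauli basis, each depolarizing layer multiplies $\hat\tau(P)$ by $(1-p)^{|P|}$, so low-weight Paulis decay only slowly, while averaging a brickwork layer over the local $2$-design induces a classical Markov dynamics on Pauli supports that tends to spread weight. The naive worst-case relative-entropy contraction of local depolarizing noise is merely a constant ($\sim p^2$) per layer — precisely the previously known bound, exponential only in $D$. The point is that scrambling concentrates the Pauli weight onto operators of weight $\Omega(n)$, so the \emph{effective} per-layer contraction becomes $(1-p)^{\Omega(n)}$, and multiplying this over the remaining $\Omega(D)$ layers yields an exponent scaling with $nD$. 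Making this quantitative requires (i) a mixing/anti-concentration estimate showing that after $\Omega(\log^2(n/s))$ noisy layers the purity contribution of the low-weight Paulis has already been driven below $s\cdot 2^{-n}$ — the $\log^2$, rather than $\log$, reflecting that one must suppress this tail all the way to exponentially small size, not merely shift the typical weight — and (ii) showing that each subsequent layer contracts the remaining excess purity at the anticipated rate, with the parameter $s$ governing the trade-off between the burn-in depth $\Omega(\log^2(n/s))$ and the per-layer rate and thereby producing the stated exponent $\Omega(nD/s)$. Controlling this noise-driven Markov chain — tracking the bulk and the exponentially small low-weight tail simultaneously — is the delicate part; the unital structure of depolarizing noise is what keeps $\Id/2^n$ an exact fixed point and makes purity a clean Lyapunov functional, whereas the non-unital extensions mentioned later require a shifted reference state.
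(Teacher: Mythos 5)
Your reduction skeleton (Fano over the hypotheses $\{\rho_x\}_{x\in\01^n}$ with observables $\{\mC(Z_i)\}$, then bounding $D(\mC'(\rho_x)\Vert\Id/2^n)$ by the log-purity and extracting a fixed circuit by Markov) matches the paper's route. But there is a fatal gap in the step you yourself flag as the heart of the argument: the choice of circuit ensemble. A depth-$D$ brickwork of independent two-qubit $2$-design gates cannot deliver the estimate $\mathbb{E}_x\,D(\mC'(\rho_x)\Vert\Id/2^n)\leq p^{\Omega(nD/s)}$ at the depths the theorem covers. First, at depth $D=O(\log^2(n/s))$ (let alone $\poly\log\log n$) a geometrically local brickwork has light cones of size $O(D)\ll n$, so the image of any low-weight Pauli in $\rho_x=\ketbra{x}{x}$ has weight $O(D)$, its purity contribution decays only as $p^{O(D^2)}$, and the excess purity stays far above $p^{nD}$. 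Second, even ignoring locality, the ensemble of independent local $2$-design gates is essentially the ensemble $\mathcal B$ of Ref.~\cite{Abhinav_21} that the paper explicitly contrasts itself against: with probability $q^D$ a given qubit sees only product gates and remains unentangled, so the \emph{ensemble-averaged} distance to $\Id/2^n$ is lower-bounded by $\exp(-cD)$ with no $n$-dependence. Your Markov step, applied to that average, can therefore only recover the previously known $\exp(\Omega(D))$ sample bound, not $s^{-1}p^{-\Omega(nD/s)}$. Your assertion that "scrambling concentrates the Pauli weight onto operators of weight $\Omega(n)$" is exactly what fails for this ensemble.

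What the paper actually does is to hand-craft the circuit from \emph{exact, global} Pauli-mixing Clifford $2$-designs built via the all-to-all construction of Ref.~\cite{Cleve16}, implementable in depth $\poly\log$ of the block size (whence the threshold $D\geq\Omega(\log^2(n/s))$ — it is the depth needed to realize the design, not a burn-in for a low-weight tail as you conjecture). Exact Pauli mixing sends every nonidentity Pauli to a uniformly random nonidentity Pauli, so after one block the purity contribution of \emph{every} term is redistributed according to the binomial ${n\choose k}3^k/(4^n-1)$, concentrated at weight $\approx 3n/4$; the following depolarizing layer then damps it by $\mathbb{E}[p^{2w}]\approx p^{\Theta(n)}$, and iterating over the $\Theta(D/\log^2(n/s))$ blocks (with the noise inside each design-forming circuit handled separately in the Supplement) yields the $p^{\Omega(nD/s)}$ decay. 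Without replacing your brickwork ensemble by such a globally mixing construction — and without actually carrying out the purity recursion you defer — the claimed relative-entropy bound, and hence the theorem, does not follow.
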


We refer to Theorem 4
%~\ref{thm:overall_lightcones} 
in the Supplemental Material for the full, formal statement. In other words, setting $s=\mathcal{O}(1)$ tells us that error mitigation of even a circuit of poly-logarithmic depth demands exponentially-many (in qubit number) samples, in the worst case.  By picking, e.g., the parameter $s=\Omega(n/\log^2(n))$, we see that even at the low depth $D= \textrm{poly log log}(n)$, error mitigation already requires {\em super}-polynomially-many samples. Figure \ref{fig:error_mitigation_englament} illustrates the intuition behind our construction of circuits that saturate this bound: they are rapidly entangling, shifting weight onto high-Hamming weight Paulis, and this makes them particularly sensitive to noise. 

Note, however, that these circuits, which leverage a construction by Ref.~\cite{Cleve16}, require all-to-all connectivity to implement, so that they mix rapidly (i.e., in short depth). In 
Sec.~II.E.1
%\ref{sec:geo_local} 
of the Supplemental Material, we show analogous results for geometrically local circuits on a $d$-dimensional lattice. In that setting, we show that exponentially many samples are required for error mitigation at depths of $\mathcal{O}(n^{\frac{1}{d}}\textrm{poly log}(n))$ and already at depths $\tilde{\mathcal{O}}(\log^{\frac{2}{d}}(n)\textrm{poly log log}(n))$ it requires super-polynomially many samples. These effects kick in at depths $\tilde{\mathcal{O}}(n^{1/d})$, which is the minimal depth required to ensure that all qubits have a light-cone proportional to the system size. These constructions thus showcase a refined version of the basic intuition: rather than the size of the circuit, it is more precisely the number of gates in the light-cone of observables, that determines the difficulty of error mitigation.
\textbf{What is the sample complexity of error mitigation for non-unital noise?}
Our result in Theorem~\ref{thm:2_intro} relies critically on the structure of depolarizing noise---namely, the fact that it is a Pauli channel, and thus it is unital. Moving beyond this toy model, we are also able to show sample complexity bounds for circuits affected by non-unital noise, which is notably trickier to analyze. Here, the essence of our conclusion remains the same: whenever a family of circuits is highly entangling, which we model through the assumption that it forms a unitary 2-design, mitigating local non-unital noise typically requires a number of samples that is exponential both in the number of channel applications and number of qubits. 

%\begin{theorem}[Mitigating non-unital noise also requires exponential-in-$n, D$ samples]\label{thm:3_intro}
\begin{theorem}[Number of samples for mitigating non-unital noise scales exponentially in the number of qubits and depth]\label{thm:3_intro}
Weak error mitigation on an $n$-qubit, $D$-layer noisy quantum circuit of the form $\bigcirc_{t=1}^D \mathcal{N}_t\circ \mathcal{U}_t$, where the $\mathcal{U}_t$ are drawn independently from a unitary 2-design and $\mN_t=\mN^{\otimes n}$ is an $n$-fold tensor product of a qubit non-unital noise channel $\mathcal{N}$, requires as input at least $c^{-\Omega(n \,D)}$ of the circuit's output state, for some constant $c$ that depends on the noise $\mathcal{N}$.
%Let $\mA$ be a weak error mitigation algorithm that mitigates the errors in an $n$-qubit, $D$-layer quantum circuit $\mC$ affected by non-unital product noise $\mN=\otimes_{i=1}^n \mN_i$, such that the noisy circuit takes the form $T_D=\bigcirc_{t=1}^D \mathcal{N}\circ \mathcal{U}_t$, where $\mathcal{U}_t$ are drawn independently from a $2$-design. Then, in expectation, $\mA$ requires as input at least $c^{-\Omega(n \,D)}$ copies of the noisy circuit output $\sigma'_{\mC,\mN}$ for some constant $c$ that depends on the quantum channel $\mathcal{N}$.
\end{theorem}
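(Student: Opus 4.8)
The plan is to follow the program outlined in the introduction: reduce noisy state discrimination to weak error mitigation, control the sample complexity of discrimination via the generalized Fano method, and concentrate the work on bounding the quantum relative entropy that governs that bound. I would instantiate Problem~\ref{prob:state_intro} with the $2^n+1$ hypotheses $\{\rho_x=\ketbra{x}{x}:x\in\01^n\}\cup\{\mathbb{I}/2^n\}$ and the noisy circuit $\mC'=\bigcirc_{t=1}^{D}\mathcal{N}_t\circ\mathcal{U}_t$. As recalled above, weak error mitigation with the observables $\{\mathcal{C}(Z_i)\}_{i\in[n]}$ solves this discrimination task (the outputs $\Tr[Z_i\rho_x]=2x_i-1$ versus $\Tr[Z_i\,\mathbb{I}/2^n]=0$ pin down the label), so any lower bound on the number $m$ of noisy copies needed to identify the hypothesis is inherited by weak error mitigation. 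The construction is tailored so that $\mC'$ sends the maximally mixed hypothesis \emph{exactly} to the product state $\tau^{\otimes n}$ with $\tau\coloneqq\mathcal{N}(\mathbb{I}/2)$, since $\mathcal{U}_t(\mathbb{I}/2^n)=\mathbb{I}/2^n$ and $\mathcal{N}^{\otimes n}(\mathbb{I}/2^n)=\tau^{\otimes n}$; this furnishes a canonical reference. The generalized Fano inequality then gives, schematically, $m\gtrsim n/\max_{x}D\big(\mC'(\rho_x)\,\|\,\tau^{\otimes n}\big)$. Here the naive reference $\mathbb{I}/2^n$ is unavailable, since for non-unital $\mathcal{N}$ one has $D\big(\mC'(\rho_x)\,\|\,\mathbb{I}/2^n\big)=\Theta(n)$ and the bound would be vacuous; everything thus reduces to showing $D\big(\mC'(\rho_x)\,\|\,\tau^{\otimes n}\big)\le c^{\Omega(nD)}$ for every $x$.

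Since the relative entropy is not a low-degree polynomial in the state, I would pass to a quadratic surrogate, the collision (sandwiched R\'enyi-$2$) relative entropy, for which $D(\rho\,\|\,\tau^{\otimes n})\le\Phi(\rho)\coloneqq\Tr\big[\big(\tau^{-1/4\,\otimes n}(\rho-\tau^{\otimes n})\,\tau^{-1/4\,\otimes n}\big)^2\big]$, the right-hand side being quadratic in $\rho$ and hence exactly computable in expectation over a unitary $2$-design. Writing $\rho_t$ for the state after $t$ layers, the core estimate is a one-layer contraction
\begin{equation}
\mathbb{E}_{\mathcal{U}_t}\!\big[\,\Phi(\rho_t)\,\big|\,\rho_{t-1}\,\big]\;\le\;\kappa(n)\,\Phi(\rho_{t-1}),\qquad \kappa(n)\;\le\;\mu^{\,\Omega(n)}<1,
\end{equation}
with $\mu=\mu(\mathcal{N})<1$. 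The exponentially small factor combines two mechanisms: (i) the $2$-design second-moment formula shows that conjugating by $\mathcal{U}_t$ redistributes the (weighted, traceless) fluctuation $\rho_{t-1}-\tau^{\otimes n}$ uniformly over all non-identity Pauli strings, whose weight is therefore concentrated near $\tfrac34 n$; and (ii) the tensor-product noise $\mathcal{N}^{\otimes n}$ damps the component along a weight-$w$ Pauli by a factor exponentially small in $w$, the single-qubit damping being governed by the contractive part of $\mathcal{N}$, while the non-unital ``identity leakage'' of $\mathcal{N}$ only pushes weight back toward $\tau^{\otimes n}$ and so cannot hurt. Carrying out the second-moment integral, the weighted norm factorizes over qubits into $n$ identical single-qubit factors each strictly below one, which yields the stated bound on $\kappa(n)$.

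Iterating the contraction over the $D$ layers against the initial value, which for a computational-basis state is bounded by $\Phi(\rho_0)\le C^{\,n}$ with $C=C(\mathcal{N})$, gives $\mathbb{E}_{\mathcal{U}_{1:D}}[\Phi(\rho_D)]\le\kappa(n)^D C^{\,n}\le c^{\Omega(nD)}$ once $D$ exceeds a noise-dependent constant. A union bound over the $2^n$ values of $x$ (absorbed into the exponent) together with Markov's inequality then produces a concrete tuple $(\mathcal{U}_1,\dots,\mathcal{U}_D)$ for which $D\big(\mC'(\rho_x)\,\|\,\tau^{\otimes n}\big)\le\Phi(\mC'(\rho_x))\le c^{\Omega(nD)}$ simultaneously for all $x$; substituting into the Fano bound gives $m\ge c^{-\Omega(nD)}$ for weak error mitigation on this particular circuit, which is the worst-case claim.

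I expect the one-layer contraction estimate, and specifically its non-unital ingredients, to be the main obstacle. One must choose the weighting operator (here $\tau^{-1/4\,\otimes n}$, possibly in an asymmetric-in-$\tau$ form) so that the second moment still factorizes cleanly despite $\mathcal{N}$ not commuting with $\tau$ and despite its identity-leakage terms moving weight to higher-weight Paulis, and one must verify that the extracted single-qubit constant $\mu(\mathcal{N})$ is uniformly below one for \emph{every} non-unital qubit channel --- a statement that, unlike the Pauli-channel case behind Theorem~\ref{thm:2_intro}, does not reduce to a one-line eigenvalue computation. Secondary points are that noise with a rank-deficient fixed point (e.g.\ pure amplitude damping) keeps $\tau$ full rank away from the extreme point but the constants $C,\mu$ degrade near the boundary and must be tracked, and that one should confirm the $\Omega(nD)$ term in the final exponent survives the $2^n$ union bound, which is what forces $D$ above a constant in the statement.
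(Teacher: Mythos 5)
There is a genuine gap, and it sits at the foundation of your Fano reduction: the claim that $\mC'(\mathbb{I}/2^n)=\tau^{\otimes n}$ with $\tau=\mathcal{N}(\mathbb{I}/2)$. This is only true for $D=1$. After the first layer the state is $\tau^{\otimes n}$, but $\tau^{\otimes n}$ is \emph{not} the maximally mixed state, so the second $2$-design layer scrambles it, $\mathcal{U}_2(\tau^{\otimes n})\neq\tau^{\otimes n}$, and the subsequent noise does not restore it (indeed $\mathcal{N}(\tau)\neq\tau$ in general, e.g.\ for amplitude damping, whose fixed point is $\ketbra{0}{0}$ rather than $\mathcal{N}(\mathbb{I}/2)$). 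The same defect kills your one-layer contraction inequality $\mathbb{E}_{\mathcal{U}_t}[\Phi(\rho_t)\,|\,\rho_{t-1}]\le\kappa(n)\,\Phi(\rho_{t-1})$: take $\rho_{t-1}=\tau^{\otimes n}$, so the right-hand side is zero while the left-hand side is strictly positive because $\tau^{\otimes n}$ is not a fixed point of $\mathcal{N}^{\otimes n}\circ\mathcal{U}_t$. The honest recursion has an additive drift term, $\mathbb{E}[\Phi(\rho_t)]\le\kappa\,\Phi(\rho_{t-1})+r$, whose iteration saturates at $r/(1-\kappa)$ rather than decaying like $c^{\Omega(nD)}$. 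So the quantity you propose to feed into Fano cannot be made exponentially small by this route; anchoring to \emph{any} fixed reference state fails because the non-unital dynamics interleaved with scrambling unitaries has no time-independent product fixed point.

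The underlying mechanism you invoke --- the $2$-design second moment spreads a traceless fluctuation uniformly over non-identity Paulis of typical weight $\Theta(n)$, and local noise damps a weight-$w$ component exponentially in $w$ --- is the right engine, but it must be applied to an object that evolves \emph{homogeneously} under the linear map $\mathcal{N}^{\otimes n}\circ\mathcal{U}_t$. That object is the difference of two trajectories driven by the same realization of the unitaries: $\rho_t-\sigma_t=(\mathcal{N}^{\otimes n}\circ\,\mathcal{U}_t)(\rho_{t-1}-\sigma_{t-1})$ is traceless for all $t$, the drifting ``center'' cancels, and a genuine multiplicative contraction of $\mathbb{E}\,\Tr[(\rho_t-\sigma_t)^2]$ can be extracted. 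This is in fact what the paper does: it computes the \emph{expected overlap} $\mathbb{E}\,\Tr[\rho_D\sigma_D]$ (together with the purities) of two output states after $D$ alternating design-plus-noise layers, and converts closeness of all pairs of outputs into the discrimination lower bound, rather than bounding a relative entropy to a fixed product state. Your secondary worries (rank-deficient $\tau$, whether a single-qubit constant $\mu(\mathcal{N})<1$ holds for every non-unital channel, identity leakage raising Pauli weight) are real but subordinate; the fix you need first is to abandon the fixed reference $\tau^{\otimes n}$ and run the contraction on pairwise differences or overlaps.
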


We refer to Theorem~7
%\ref{thm:3} 
of the Supplemental Material for the full, formal statement. To prove it, we compute the expected overlap of two quantum states that are output by $D$ alternating layers of a unitary sampled from a unitary $2$-design followed by a non-unital noise channel. While this model of only applying local noise after a (global) unitary is simplified, to the best of our knowledge this is the first result to study error mitigation under non-unital noise beyond the setting of the variational algorithms studied in Ref.~\cite{Wasserstein_variational_22}. Our observation here that global unitaries make for fast decay to the maximally-mixed state under non-unital noise further strengthens the evidence for a connection between the entanglement generated by a circuit and the rate at which noise spreads.
%---one that already emerged in our analysis of depolarizing noise.
%: as long as the underlying noisy circuit is highly entangling and errors are not corrected, the cost of error mitigation is exponential in the depth and the number of qubits. 
% Admittedly, the model of only applying local noise after a (global) unitary is artificial; however, we believe that this result showcases that the conclusions we achieved under depolarizing are expected to transfer to the non-unital case: as long as the underlying noisy circuit is highly entangling, the cost of error mitigation is exponential in the depth and the number of qubits.

We prove Theorems~\ref{thm:2_intro} and \ref{thm:3_intro} in the setting of weak error mitigation, and we assume that the error mitigation algorithm does not use knowledge of the input state to the circuit. While this is the setting most often used in practice, it does not cover all proposals in the literature. 
We now ask if our results can be extended to close variants of this setting that are also practically relevant.
\begin{figure}
     \centering
 %    \begin{subfigure}[b]{0.5\textwidth}
         \centering

\includegraphics[width=1\columnwidth]{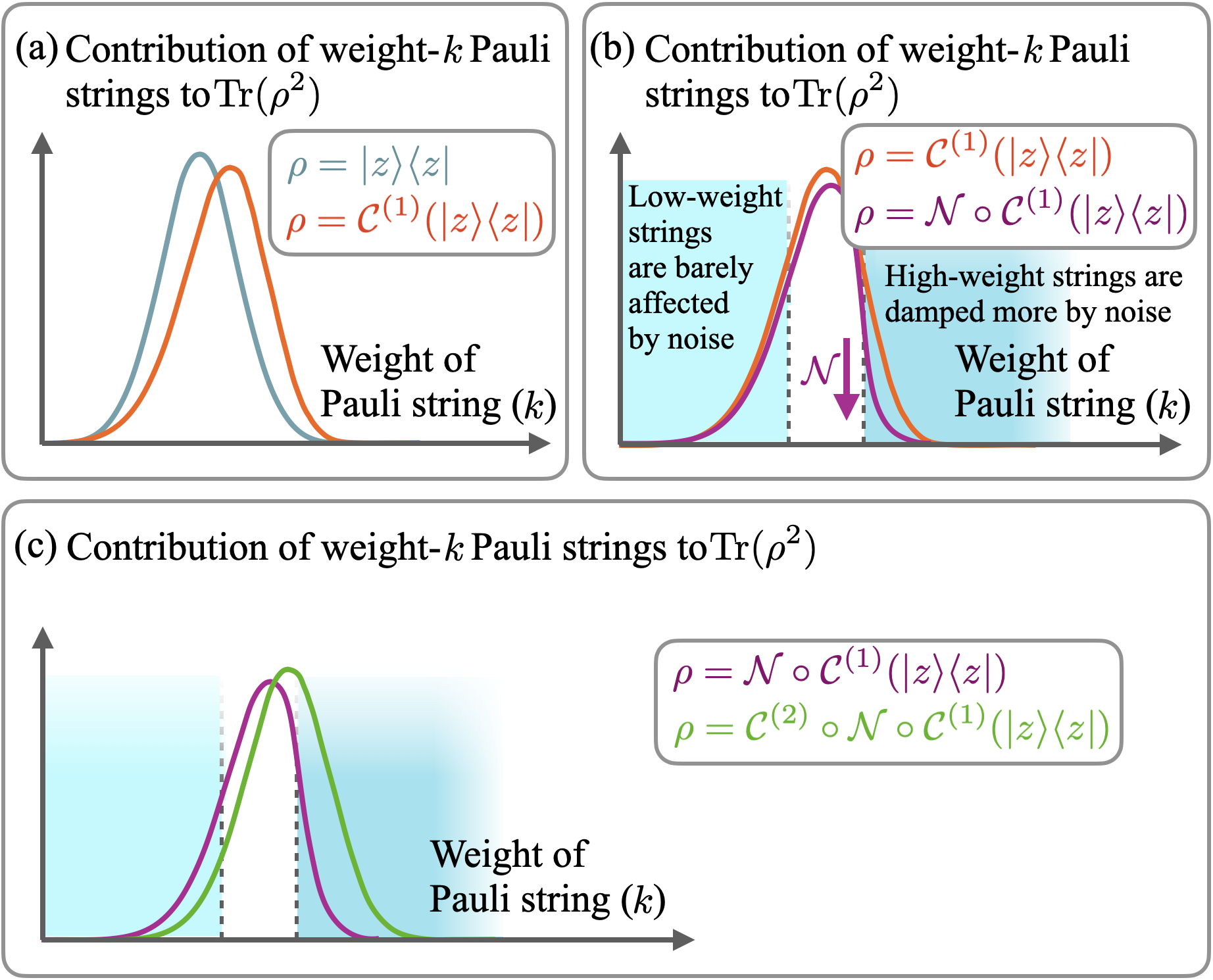}
    \caption{Intuition for our circuit construction: the higher the weight of a Pauli string, the more sensitive it is to Pauli noise, as showcased in Eq.~\eqref{eq:noisesensitivity} in the Supplementary Material.
    Whereas for product pure states there are Pauli strings with constant-order expectation values for all weights, most quantum states only have correlators of high weight. Thus, most random states are significantly more sensitive to noise than product quantum states. (a) The effect of applying a noiseless Pauli mixing circuit to a computational
    basis state is to shift the binomial of contributions to a weighted binomial (compare Eq.~\eqref{eq:purity_compstate} to Eq.~\eqref{eq:Panel1b}). (b) The effect of applying one subsequent layer of depolarizing noise on the output of the aforementioned circuit (compare Eq.~\eqref{eq:Panel1b} to Eq.~\eqref{eq:Panel2}). (c) The effect of applying yet another Pauli mixing layer to the state output by the aforementioned circuit (as captured in Eq.~\eqref{eq:Panel1b}).}
\label{fig:error_mitigation_englament}
  %   \end{subfigure}
 %    \hfill
 \end{figure}
\vspace{3mm}

\textbf{How does fixing the input state change the picture?}
The strong limitations we have proven so far are based on the assumption that the error mitigation algorithm does not make any use of information about the input state to the noisy circuit---hence that the error mitigation algorithm is \emph{input-state agnostic}. Does lifting this 
restriction---thereby including \emph{input-state aware} error-mitigation protocols---change the picture significantly? A first observation is that now, we can no longer rule out the possibility of successful error mitigation with sub-exponential worst-case sample complexity, simply because classical simulation with no resort to the noisy quantum device at all is a valid error mitigation algorithm with zero sample complexity---albeit possibly being computationally hard. Our no-go result in this case thus takes a different perspective: we show that an error mitigation algorithm that makes use of a noisy quantum device must use exponentially many copies of its noisy output state to produce an output meaningfully different from some fully-classical procedure which does not invoke the quantum device. 

\begin{theorem}[Resource cost of successful error mitigation]
A successful weak error mitigation algorithm $\calA$ must use $m = c^{\calO(n D)}$ copies of the noisy output state $\sigma'$ in the worst case, or there exists an equivalent algorithm $\calA'$ with purely classical inputs such that the output of $\calA'$ is indistinguishable from the output of $\calA$.
\end{theorem}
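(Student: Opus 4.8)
The plan is to piggy-back on the quantitative relative-entropy estimates already established in proving Theorems~\ref{thm:2_intro} and~\ref{thm:3_intro}, and to combine them with a simple coupling argument. First I would fix $\mC$ to be one of the worst-case, rapidly entangling circuits from those constructions, applied to a computational-basis input state $\rho$, so that the associated noisy output $\sigma' = \mC'(\rho)$ satisfies $D(\sigma' \Vert \Id/2^n) \le c^{-\Omega(nD)}$ for a constant $c$ depending only on the noise channel. Let $\calA$ be any weak error mitigation algorithm for this instance that consumes $m$ copies of $\sigma'$, and suppose the first alternative of the theorem fails, i.e.\ $m$ is sub-exponential, $m = c^{o(nD)}$; the task is then to produce the classical-input algorithm $\calA'$.

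The key observation I would exploit is that, apart from the $m$ quantum copies, the only inputs to $\calA$ are classical: the descriptions of $\mC$, of the observable set $\mathcal{M}$, of the noise model, and---in the input-state-aware setting---of $\rho$. I would define $\calA'$ to run exactly the procedure of $\calA$ (adaptive choices, internal randomness, quantum post-processing, classical readout, and all), but fed with $m$ copies of the maximally mixed state $\Id/2^n$ in place of the copies of $\sigma'$. Since $(\Id/2^n)^{\otimes m} = \Id/2^{nm}$ carries no information about the quantum device and its measurement statistics are trivially (if not efficiently) computable, $\calA'$ can manufacture this input from its classical data alone; hence $\calA'$ has purely classical inputs and never invokes the noisy circuit.

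It then remains to show that the output distributions of $\calA$ and $\calA'$ are statistically indistinguishable. Both arise by applying the \emph{same} quantum--classical channel (the implementation of $\calA$) to the two $m$-copy input states, so by the data-processing inequality their total variation distance $\delta$ is controlled by the trace distance of the inputs, and quantum Pinsker together with additivity of the relative entropy under tensor powers gives
\begin{equation}
\delta \;\le\; \tfrac12\norm{(\sigma')^{\otimes m} - (\Id/2^n)^{\otimes m}}_1 \;\le\; \sqrt{\tfrac{m}{2}\, D(\sigma' \Vert \Id/2^n)} \;\le\; \sqrt{\tfrac{m}{2}\, c^{-\Omega(nD)}}.
\end{equation}
Because $m = c^{o(nD)}$, the right-hand side is $o(1)$, so the two outputs are indistinguishable; in particular $\calA'$ is (approximately) as successful as $\calA$, which is the conceptual punchline. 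Taking the contrapositive yields the claimed dichotomy: either $\calA$ must spend $m \ge c^{\Omega(nD)}$ copies, or it is matched by the fully classical $\calA'$.

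I expect the genuine work to lie entirely in the first step---securing the exponential-in-$nD$ upper bound on $D(\mC'(\rho) \Vert \Id/2^n)$ for an explicit circuit family---which is exactly what the 2-design and Pauli-mixing machinery behind Theorems~\ref{thm:2_intro} and~\ref{thm:3_intro} delivers, so here it is imported rather than re-derived. The remaining delicacy is definitional: one must fix a notion of ``purely classical inputs'' and of ``indistinguishable outputs'' under which supplying $\calA$ with copies of $\Id/2^{nm}$ genuinely counts as a classical-input procedure and under which the $o(1)$ statistical distance above genuinely counts as indistinguishability. I would handle this by adopting the operational convention that two algorithms are indistinguishable if no observer can tell their outputs apart with advantage bounded away from zero, and deferring the bookkeeping to the Supplemental Material; I do not anticipate a real technical obstacle there.
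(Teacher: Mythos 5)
Your proposal is correct and follows what is essentially the paper's route: take the worst-case rapidly-scrambling circuits whose noisy outputs have exponentially small (in $nD$) relative entropy to $\Id/2^n$, define $\calA'$ by feeding $\calA$ maximally mixed states (preparable from classical randomness alone), and bound the output total-variation distance via data processing, additivity of relative entropy over the $m$ copies, and Pinsker, so that sub-exponential $m$ forces the outputs to be $o(1)$-indistinguishable. The only caveats are cosmetic (the sign convention in the exponent of $c$ and the precise operational meaning of ``purely classical inputs''), neither of which affects the substance.
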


For the full, formal statement, see Theorem~2
%\ref{thm:minimum_number_of_samples_input_aware} 
in the Supplemental Material. In this way, we show that our constructions apply both to the input state-agnostic and input state-aware setting. %with, however, different implications.

\vspace{3mm}
\textbf{Does strong error mitigation imply weak error mitigation, or vice versa?} 
%We also study the relationship between the two types of error mitigation algorithms we have defined (weak and strong). We ask if the output of one is sufficient to obtain the output of the other. 
%One direction of this question is easy to answer---classically, samples of a probability distribution suffice to estimate expectation values of bounded functions. The quantum generalization of this is that 
To our knowledge, our work is the first to define and study strong error mitigation. The relevance of this notion is that, in some cases, weak error mitigation does not quite achieve the algorithmic goal at hand, for instance, solving hard combinatorial optimization problems on a noisy quantum device. In that case, one is not only necessarily interested in the optimal value of the cost function (obtained through weak error mitigation), but also in an assignment that achieves that value (obtained through strong error mitigation). While it is not hard to see that for local observables, strong error mitigation implies weak error mitigation, we also ask the converse question: could there be an algorithm that takes in the output of weak error mitigation (expectation values) and ``bootstraps'' them into the output of strong error mitigation (samples)? We give a partial negative answer that rules out certain types of protocols.

\begin{theorem}[Exponentially-many observables (in the same eigenbasis) are needed to output samples from the same basis]\label{thm:WeaktoStrong_intro}
Suppose we have a weak error mitigation algorithm that, given a set $\mathcal{M}=\{O_i\}_i$ of observables with the same eigenbasis, outputs their expectation value estimates $\{\hat{o}_i\}$. In general, at least $\exp(n)$-many distinct $\hat{o}_i$s must be queried by any algorithm that takes as input the $\{\hat{o}_i\}$ and outputs samples from their eigenbasis.
\end{theorem}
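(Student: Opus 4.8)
The plan is to recast the weak-to-strong conversion as a multiple-hypothesis discrimination problem and to show that, for a suitable hard family, the expectation-value estimates handed back by the weak error mitigation oracle are individually too information-poor to let a sampler succeed unless exponentially many of them are gathered. Since the observables in $\mathcal{M}$ pairwise commute, we may take them diagonal in the computational basis and normalized with $\|O_i\|\le 1$; writing $O_i=\sum_{x\in\01^n}f_i(x)\ketbra{x}{x}$ with $|f_i|\le 1$, the returned estimate $\hat o_i$ agrees, up to the constant additive accuracy $\epsilon$ intrinsic to error mitigation, with the linear functional $\mathbb{E}_{x\sim p}[f_i(x)]$ of the computational-basis distribution $p(x)=\langle x|\sigma|x\rangle$. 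A weak-to-strong converter $\mathcal{B}$ is then any algorithm that adaptively queries observables, receives $\epsilon$-accurate values of these functionals, and must output a sample from a distribution within total-variation distance $\delta$ (some fixed constant $\delta<1/4$, say) of $p$.

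For the hard family I would take the uniform distribution $u$ on $\01^n$ together with the ``hidden-parity'' distributions $p_T(x)=2^{-n}(1+(-1)^{T\cdot x})$ for $T\in\01^n\setminus\{0\}$, i.e.\ the uniform distribution over the coset $\{x:T\cdot x=0\}$, each of which satisfies $\|u-p_T\|_{\mathrm{TV}}=1/2$. The first key computation is that for any diagonal observable $O=\mathrm{diag}(f)$ one has $\mathbb{E}_{p_T}[f]-\mathbb{E}_u[f]=\hat f(T)$, where $\hat f(T)=\mathbb{E}_x[f(x)(-1)^{T\cdot x}]$ is the Walsh--Fourier coefficient of $f$ at $T$; by Parseval, $\sum_{T\ne 0}\hat f(T)^2\le\mathbb{E}_x[f(x)^2]\le 1$, so a single observable can satisfy $|\mathbb{E}_{p_T}[f]-\mathbb{E}_u[f]|>\epsilon$ for at most $1/\epsilon^2$ values of $T$. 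The second step is the reduction proper: fix the internal randomness of $\mathcal{B}$ and run it against the reference oracle that answers every query $\mathrm{diag}(f)$ with the value $\mathbb{E}_u[f]$; this yields a deterministic list of queries $O_1,\dots,O_k$ and a fixed output distribution $\mu$, which must be $\delta$-close to $u$ if $\mathcal{B}$ is correct. If $k<\epsilon^2(2^n-1)$ then, summing the Parseval bound over the $k$ queries, there exists $T\ne 0$ with $|\mathbb{E}_{p_T}[f_i]-\mathbb{E}_u[f_i]|\le\epsilon$ for every $i$, so the reference oracle is simultaneously a legitimate $\epsilon$-accurate weak error mitigation oracle for the state with statistics $p_T$. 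Consequently $\mathcal{B}$ makes exactly the same queries and outputs the same $\mu$ on that instance, yet correctness there would require $\|\mu-p_T\|_{\mathrm{TV}}\le\delta$, contradicting $\|\mu-p_T\|_{\mathrm{TV}}\ge\|u-p_T\|_{\mathrm{TV}}-\delta=1/2-\delta>\delta$. Hence a correct converter must query at least $\epsilon^2(2^n-1)=\exp(\Omega(n))$ distinct observables. If one wants the construction to speak about circuits rather than bare states, $u$ and $p_T$ can be realized as the computational-basis statistics of outputs of the noisy circuit families used in Theorems~\ref{thm:2_intro}--\ref{thm:3_intro}.

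The step I expect to be the main obstacle is handling \emph{adaptive} converters, where the set of queried observables depends on the oracle's answers so that there is no fixed query list to apply Parseval to. This is exactly what the ``answer consistently'' device above is for: since an $\epsilon$-accurate oracle may legitimately return \emph{any} value within $\epsilon$ of the true expectation, we let the adversarial oracle return the \emph{same} number $\mathbb{E}_u[f]$ whether the underlying state has statistics $u$ or $p_T$, which forces $\mathcal{B}$'s whole query--answer transcript, and hence its output, to coincide in the two runs; the Parseval count is then applied to the transcript that $\mathcal{B}$ generates deterministically against the reference oracle. A secondary subtlety to watch is the accuracy model: the statement genuinely relies on the error mitigation outputs being only approximate --- with exact expectation values one observable $\mathrm{diag}(f)$ with injective $f$ would already encode all of $p$, and even a handful of adaptive projector queries would reconstruct it --- so the argument must use solely the constant additive-error guarantee and be consistent with whatever normalization and precision conventions are fixed for weak error mitigation in the Supplemental Material.
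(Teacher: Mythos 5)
Your argument is correct in substance but takes a genuinely different, more self-contained route than the paper. The paper's proof is modular: it encodes the $\Par$ distributions $P_s$ into Clifford circuit outputs, observes (its Lemma on statistical queries) that $\tau$-accurate expectation values of commuting observables are exactly responses of a statistical-query oracle, invokes the Blum--Furst--Jackson--Kearns--Mansour--Rudich $\Omega(\tau^2 2^n)$ SQ lower bound for parities as a black box, and then closes the loop by feeding the $O(n)$ output samples into Yatracos' $3$-proper hypothesis-selection algorithm, using the fact that distinct parity distributions are $1/2$-separated in $d_{TV}$ to recover the hidden string and contradict SQ hardness. You instead inline a direct proof of the SQ lower bound --- the Parseval count showing each diagonal observable can $\epsilon$-distinguish at most $1/\epsilon^2$ of the coset distributions $p_T$ from uniform, combined with the consistent-adversary oracle --- and you replace the hypothesis-selection step by a direct two-point indistinguishability argument on the converter's output distribution. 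What your approach buys is elementarity and transparency (no external SQ theorem, no Yatracos machinery, and the role of the additive tolerance $\epsilon$ is laid bare); what the paper's buys is modularity, an explicit identification of weak error mitigation outputs with statistical queries (which it reuses to discuss adaptivity and quantum SQ extensions), and a conclusion phrased directly in terms of the circuits of Ref.~\cite{ouramazingTgatepaper}. Your hard family is essentially the same as the paper's: $P_s$ is the uniform distribution on the coset $\{z\in\{0,1\}^{n+1}: (s,1)\cdot z=0\}$, so your $u$-vs-$p_T$ ensemble is the paper's ensemble in slightly different clothing, and all states involved are stabilizer states preparable by the required circuits.

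One technical wrinkle you should repair: you ``fix the internal randomness of $\mathcal{B}$'' and then speak of its ``fixed output distribution $\mu$,'' but with the randomness fixed the output is a single string, while the correctness requirement ($d_{TV}(\mu, p)\le\delta$) refers to the distribution over the converter's randomness. The fix is standard: against the reference oracle the joint law of (transcript, output) is a fixed distribution independent of the instance; for each randomness $r$ the bad set $S_r$ of distinguishable $T$'s has size at most $k/\epsilon^2$, so by averaging there exists $T\neq 0$ with $\Pr_r[T\in S_r]$ small whenever $k = o(\epsilon^2 2^n)$, whence the output distributions on the $u$- and $p_T$-instances are close in total variation and the contradiction with $d_{TV}(u,p_T)=1/2$ goes through after adjusting constants. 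With that patch your proof is complete and matches the paper's quantitative bound $m=o(\tau^2\cdot 2^n)$.
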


We refer readers to Theorem~\ref{thm:WeaktoStrong} in the Supplemental Material for the full, formal statement. The proof of this statement takes the perspective of expectation value estimates as {\em statistical queries}~\cite{Reyzin}, and leverages existing results on optimality guarantees of hypothesis selection~\cite{Yatracos85} and on the statistical query hardness of learning parity functions~\cite{BFJKMR94}.

\subsection*{Consequences of our results}

\subsubsection{The interplay of entanglement and noisy computation}
At the heart of our technical results is \emph{entanglement}. The circuits we construct are built of highly-entangling gates that allow them to scramble quickly to the maximally-mixed state, showcasing the intuition that such circuits are exponentially more sensitive to noise than are general circuits. This opens up an intriguing research direction: how does the amount of entanglement generated by a quantum circuit relate to its noise sensitivity? 
%
%\subsubsection{Extending previous distinguishability bounds for unital noise to non-unital noise} In addition, all of the above references can only show exponential convergence under the assumption that the quantum circuit is affected by unital noise. In our work, we go one step further and establish results for a toy model of circuits affected by non-unital noise interspersed by global 2-designs in Section~\ref{sec:beyondunital}. We show that even when the noise is non-unital, we are still able to obtain a decay in distinguishability between different inputs that is exponential in both the depth and the number of qubits. Such an extension is important because many physically relevant noise models, such as amplitude damping, are not unital. Our toy model illustrates that as long as the circuit is entangling enough the decay in distinguishability will be exponential in number of qubits.
%
%
%Additionally, in order to prove Theorem \ref{thm:2_intro}, we do not just construct a single circuit, but a random {\em ensemble} of circuits. We are able to prove that, on expectation, their relative entropy from the maximally-mixed state displays the stated exponential decay. 
%
That studying entanglement could be key to defeating noise is illustrated by the contrast between our results and those of Ref.~\cite{Abhinav_21}. The latter showed that {\em on expectation}, the total variation distance between a noisy random quantum circuit output distribution $\mathcal{D}$ and the uniform distribution decays exponentially in depth only, that is
\begin{align}\label{equ:abhinav}
\mathbb{E}_{\mathcal B}[\lVert \mathcal{D} - U \rVert_{TV}] = \Omega(\exp(-CD)).
\end{align}
This result applies to the noisy circuit ensemble $\mathcal B$, created by applying uniformly random $2$-qubit Clifford gates at each step, while our circuit ensemble is more structured. Viewing the quantity on the left-hand-side of Eq.~\eqref{equ:abhinav} as analogous to our Eq.~\eqref{eq:relent_intro}, however, we see that our circuit ensemble also mixes faster than theirs (as the relative entropy of our ensemble decays exponentially fast in both $n$ and $D$). We believe this difference arises precisely from the fact that $\mathcal{B}$ is not as entangling as our ensemble. 
%and 
%\footnote{
{(We substantiate this with the following oversimplified summary of their proof: a constant fraction of Clifford gates acting on $2$ qubits are a product of $1$ qubit gates. Thus, at each step there is a constant probability $q>0$ that the gate we pick to act on qubit $1$ is product. As the gates at different layers are independent, at depth $D$ with probability at least $q^D$ the first qubit of the system will be in a product state with the rest. And in that case the effect of the noise on that qubit cannot depend on the system size $n$.)}. 
%\JJM{Why do we need to put the above in here? I don't see why we would need to re-explain the proof of another paper. We could just say that our results don't contradict them because they don't have circuits with dynamics and then argue a bit about how dynamics (i.e. making more entanglement along the way) is the crucial thing.}
%\DSF{I think that the idea was to explain why a standard random circuit construction does not work}
%\je{Good, but indeed, this may be a bit too much about Abhinav's paper here? Can we shift this to the methods section part where we discuss other settings? The trouble is, we do not discuss the Abhinav paper in the methods section as is.} 

\subsubsection{Loss of quantum advantage at log log \textit{n} depth}

%In the course of proving Theorem \ref{thm:2_intro}, we prove the existence of very rapidly mixing circuits: circuits whose output states converge exponentially fast, in both number of qubits ($n$) and circuit depth ($D$), to the maximally-mixed state. 
The proof technique of Theorem \ref{thm:2_intro}, to study the convergence of a noisy circuit's output state to the maximally-mixed state, has appeared in various forms before: fast convergence is disastrous for error mitigation \cite{TEMG21,TTG22,Fisher22,Wasserstein_variational_22, LosAlamos_EM}, kernel estimation \cite{LosAlamos_Kernels} in quantum machine learning, as well as the depth at which quantum advantage from sampling noisy random quantum circuits \cite{Abhinav_21} or variational quantum algorithms \cite{FG20,Wasserstein_variational_22,Wang2021} can obtained.
%this metric has been central to many results in the literature (see, e.g., Refs.~\cite{LosAlamos_EM, LosAlamos_Kernels, TEMG21, TTG22, Wasserstein_variational_22, Abhinav_21, FG20}), where 
However, these results share the common feature that they are only able to show an exponential-in-{\em depth} contraction of the circuit's output state towards the maximally-mixed state and only show that the distance to the maximally mixed state goes below constant order at logarithmic depth. Furthermore, one can show that such results are tight for trivial circuits consisting solely of the identity gate.

We point out that the aforementioned applications pertain to \emph{noisy intermediate-scale quantum} (NISQ) processors. Such processors run shallow quantum circuits, which means that their depth scales like $D=O(\log(n))$, so that the rate of convergence in the aforementioned results is inverse polynomial in $n$. In this depth regime, our circuits converge exponentially fast in $n$. The onset of the attendant effects is therefore exponentially earlier in our circuits: at {\em $\log \log (n)$} depth, increasing the number of qubits already brings about a super-polynomial drop in distinguishability from the maximally-mixed state. 

In addition, our results imply that there will be no exponential quantum speedups for estimating expectation values in the presence of noise. This is because there are classical algorithms that exhibit the same exponential scaling in complexity~\cite{PhysRevLett.126.210502,PhysRevA.99.062337} exhibited by our results. As such, both classical and quantum algorithms will have exponential complexity for this task; quantum can at best improve the exponent.

%Our results show that, typically, the number of samples from a noisy device required to estimate the expectation value of the clean circuit will scale exponentially in the number of gates in the light-cone of the observable and therefore rules out the possibility of exponential quantum advantages for this task on noisy devices, as there are classical algorithms that exhibit the same scaling in complexity~\cite{PhysRevLett.126.210502,PhysRevA.99.062337,Bravyi_2021}. As such, one can only hope for better exponents in the worst-case complexity.

\subsubsection{No noisy circuits for ground state preparation}
Our work also bears on a task originating from the quantum sphere: preparation of the ground state of a Hamiltonian.
When optimizing the energy of a Hamiltonian, the outputs of noisy quantum circuits concentrate strongly~\cite{Wasserstein_variational_22,FG20} which causes any possibility of quantum advantage to be lost at a depth $D$ for any circuit when the error probability satisfies $p=\Omega(D^{-1})$, an argument that follows a similar route as our work. Our contributions show that this generic bound can be loose in the worst case, as we show that the 2-R\'{e}nyi entropy can decrease exponentially faster. Thus, in the worst case and at sufficiently high depth $D$ polylogarithmic in $n$, unless the error probability is $p = \calO((nD)^{-1})$, the outputs of noisy circuits will concentrate around strings with energies that do not offer a quantum advantage. Unfortunately, it has been shown that for many important classes of local Hamiltonians~\cite{Wasserstein_variational_22,PhysRevLett.125.260505,whole_graph,8104078,nlts_conjecture},
any circuit that aims at preparing the ground state has to have at least logarithmic depth. For highly entangled ground states of non-local Hamiltonians, we expect even larger lower bounds on the depth at which geometrically local circuits can prepare such ground states, as entanglement has to be built between distant sites. The fast buildup of  entanglement, however, is also a key reason that our circuits display a system-size dependent decay to the maximally mixed state.
Together these statements imply that noisy variational ground state preparation is mostly a lose-lose proposition: Either the depth is insufficient to reach a good approximation of the ground state or it is so high that noise takes over.

\subsection*{Outlook}\label{sec:outlook}

We have established a general rigorous framework that encapsulates large classes of schemes for quantum error mitigation that are already being used on today's noisy quantum processors. For these schemes, we have identified severe information-theoretic limitations that are exponentially tighter than what was previously known {\cite{TEMG21,TTG22,Fisher22}, due to the additional dependence we have now identified of the sampling cost on the width of the
circuit}. 

{While our bounds are still worst-case bounds, they are worst-case within a class of quantum circuits that can be as shallow as $\poly \log \log(n)$ -- substantially closer to practice than the $d = \Omega(n)$ of previous bounds. To close this gap even further, one should aim to lift the requirement of global connectivity in our circuits. This also hints at a prescription for practitioners: choosing suitably local quantum circuits may allow for an improved performance of error mitigation schemes (although potentially coming at the price of a reduced expressivity). It would be intriguing to examine if this is why some error mitigation protocols run in the recent past have been found to fare rather well in practice.}

We reiterate that our results do not rule out quantum error mitigation to tackle suitably small noise levels in existing quantum architectures, but should be viewed as a clarion call for deepened understanding: what aspects of our {\em worst-case} construction of circuits carries over into a typical use case? {We believe the spread of entanglement has to play a role. Broadly speaking, we see two regimes emerge once we go to the Heisenberg picture: one where the evolution does not spread out the observables, or generate much entanglement. But in this regime, we expect certain simulation techniques like tensor networks to perform well too. The other extreme is the one where evolution does generate a lot of entanglement, and this is the regime our circuits operate in. Then error mitigation is likely to require exponential resources, as we have found.
What we have observed, then, is a tension at the heart of near-term quantum computing: while it is known that entanglement needs to spread extensively over the architecture so that one can hope for a quantum advantage, we see here (as well as in Ref.~\cite{CiracNoise}) that this spreading of entanglement can be an adversary: it also assists in noise spreading rapidly.}

%On a positive and practical note, our results provide actionable advice to circuit design to render noise less impactful: raise the possibility of tailoring the error mitigation protocol according to the architecture and type of gates in the circuit being run. 
%\ynote{MORE PHYSICS}
%While by no means is this work aimed at suggesting that quantum error mitigation is no powerful tool to deal with physical noise levels in near-term architectures. 
%At the same time, this work identifies substantial limitations that arise when one aims at scaling up such ideas to near-term quantum devices. 
%In fact, this work can be seen as a strong indication that 

%The development of quantum computers---as with any new technology---is a game of cat and mouse.

We strongly believe that the way forward for dealing with errors in quantum computation will involve new schemes that are intermediate between mere quantum error mitigation and resource expensive fault tolerant quantum computing. In the medium term, some ``parsimonious" form of quantum error correction involving limited amounts of quantum redundancy will presumably be necessary. How much is enough? Clarifying this point will be an important step in bringing quantum computers closer to reality.

\subsection*{Acknowledgments}
%{{We thank the reviewers for constructive comments.}
{This work has been supported by the BMBF (RealistiQ, MUNIQC-Atoms, PhoQuant), the BMWK (PlanQK, EniQmA), the DFG (CRC 183), the Einstein Foundation (Einstein Research Unit on Quantum Devices), the Quant-ERA (HQCC), {the European Research Council (DebuQC)}, and the Alexander von Humboldt Foundation. The research is also part of the Munich Quantum Valley (K-8), which is supported by the Bavarian state government with funds from the Hightech Agenda Bayern Plus. DSF acknowledges financial support from the VILLUM FONDEN via the QMATH Centre of Excellence (Grant no.~10059), the QuantERA ERA-NET Cofund in Quantum Technologies implemented within the European Union's Horizon 2020 Program (QuantAlgo) via the Innovation Fund Denmark and from the European Research Council (grant agreement no.~81876). DSF acknowledges that this work benefited from a government grant managed by the Agence Nationale de la Recherche under the Plan France 2030 with the reference ANR-22-PETQ-0007.}

\subsection*{Author contributions statement}
The following describes the different contributions of all authors of this work, using roles defined by the CRediT (Contributor Roles Taxonomy) project: \textbf{Y.Q.} Conceptualization, formal analysis, methodology, writing (original draft); \textbf{D.S.F.} Conceptualization, formal analysis, methodology, writing (original draft); \textbf{S.K.} Conceptualization, methodology, writing (review and editing); \textbf{J.J.M.} Conceptualization, methodology, writing (review and editing); \textbf{J.E.}: Conceptualization, methodology, writing (review and editing).

\newpage

\begin{figure*}
    \centering
\includegraphics[width=0.9\textwidth]{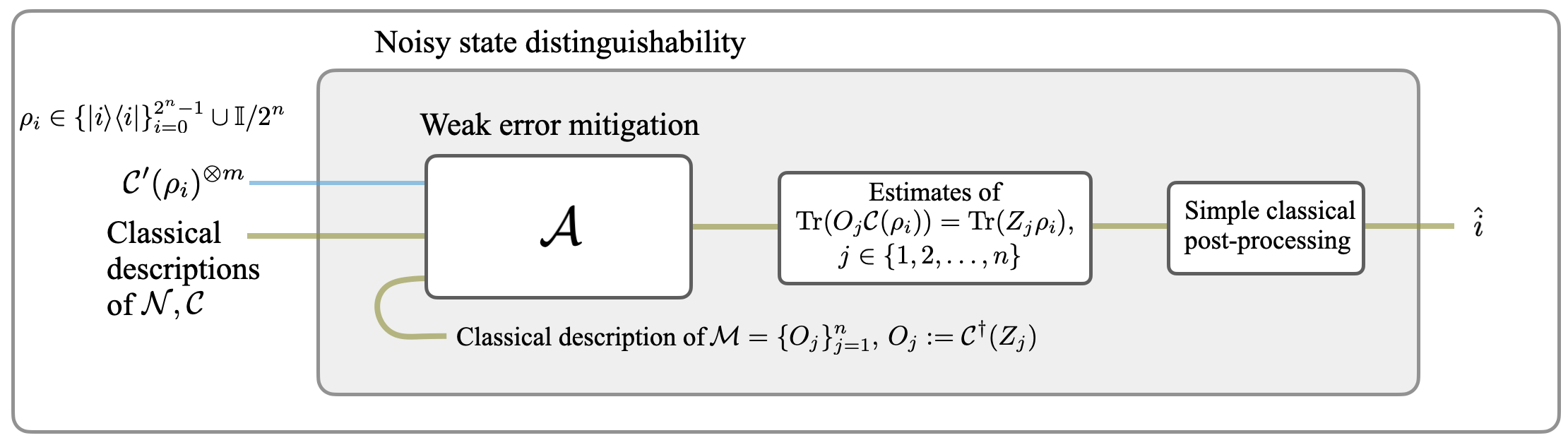}
    \caption{To lower-bound the sample complexity of weak error mitigation, we show that it can be used as a subroutine to solve a constructed problem of distinguishing states under noise.}
    \label{fig:EMoutline}
\end{figure*}

\begin{figure}
     \centering
 %    \begin{subfigure}[b]{0.5\textwidth}
         \centering
\includegraphics[width=1\columnwidth]{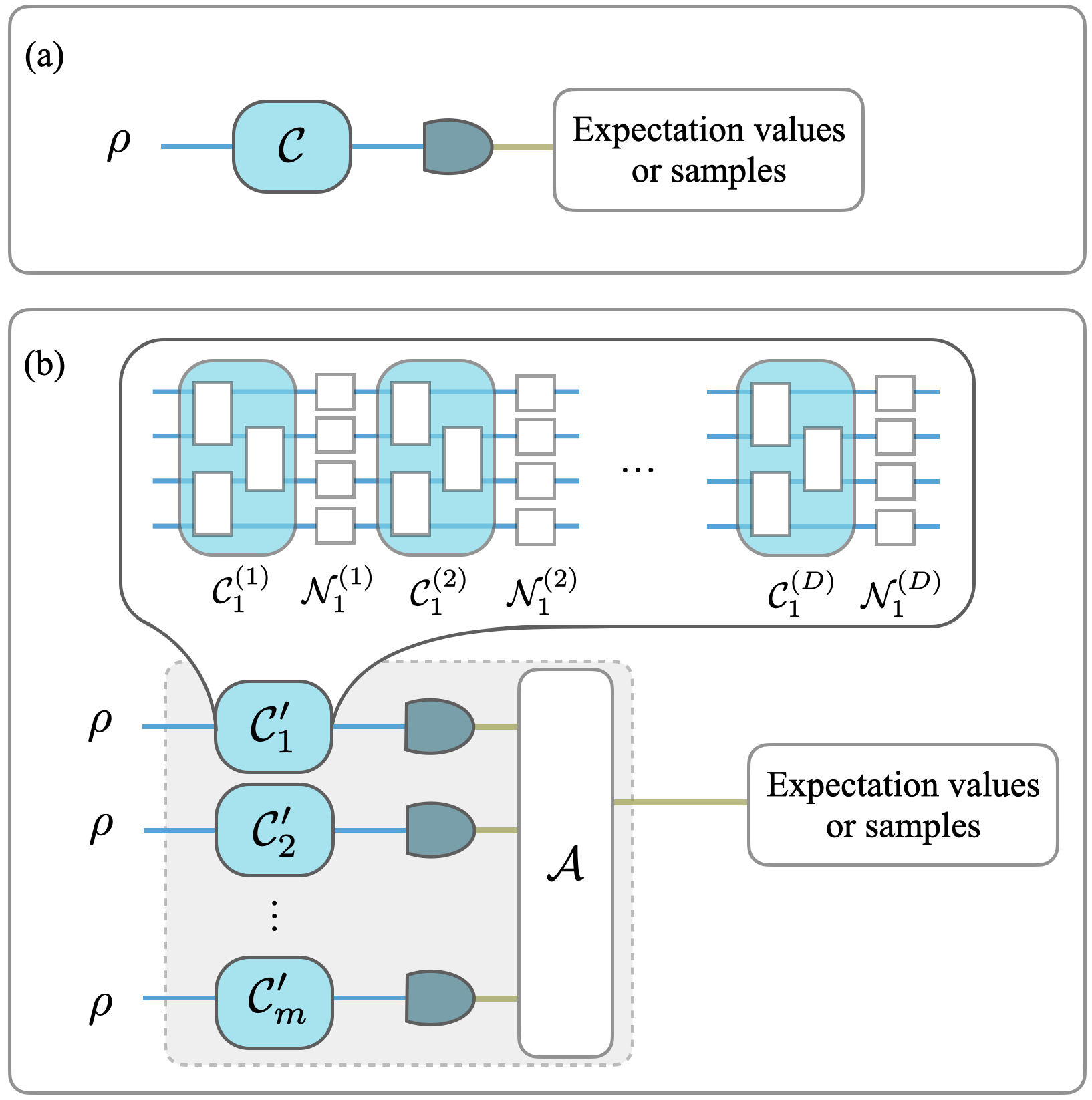}

         \caption{(a) Idealization of near-term quantum algorithms without quantum noise. Most such algorithms work by running an $n$-qubit quantum circuit $\mC$ on an input quantum state $\rho$, measuring the output state, and then returning either samples from the resulting probability distribution or expectation values of specified observables. (b) The model of error mitigation used in this work, {building on the framework established in Ref.~\cite{TEMG21}}. The quantum channel $\mC'_i$ represents the $i^{\text{th}}$ run of $\mC'$, the noisy version of $\mC$. We model the noise acting on $\mC$ by interleaving its layers with layers of a given noise channel. %Our model encompasses those error mitigation protocols that modify $\mC$ on each iteration. 
         In Section~I %~\ref{suppmat:protocols} 
         of the Supplemental Material, we show that this model applies to practical error mitigation protocols such as \emph{virtual distillation}~\cite{huggins2021virtual,koczor2021exponential}, \emph{Clifford data regression} (CDR)~\cite{czarnik2021error}, \emph{zero-noise extrapolation} (ZNE)~\cite{PhysRevLett.119.180509} and \emph{probabilistic error cancellation} (PEC) \cite{PhysRevLett.119.180509,endo2018practical}. In this work, we study how $m$, the number of noisy circuit runs, scales with $n$ and $D$ to reliably recover the expectation values. }\label{fig:error_mitigation_general}
  %   \end{subfigure}
 %    \hfill
 \end{figure}

\newpage

{\small 
\section*{Methods}

In this Methods section, we present substantial details of the arguments of the main text. In Section~\ref{sec:preliminaries}, we provide some background, defining relative entropies, elements of the Pauli group, and unitary 2-designs that we need for our proofs. Then, in Section~\ref{sec:defs}, we introduce our error mitigation setting as well as two variants of error mitigation that we call weak and strong respectively. In Supplementary Material Section I,
%\ref{suppmat:protocols}, 
we further argue that this setting encompasses many error-mitigation protocols used in practice, including  \emph{virtual distillation}~\cite{huggins2021virtual,koczor2021exponential}, \emph{Clifford data regression} (CDR)~\cite{czarnik2021error}, \emph{zero-noise extrapolation} (ZNE)~\cite{PhysRevLett.119.180509} and \emph{probabilistic error cancellation} (PEC) \cite{PhysRevLett.119.180509, PhysRevX.8.031027}. While most theoretical analyses of error mitigation have focused exclusively on the weak error mitigation setting, here we argue with reference to the practical protocols that the strong setting is equally relevant, and we make a first attempt at relating the two settings. We also show that many existing error mitigation protocols fit into the model that we will work with.

\section{Preliminaries: Relative entropy and derived quantities}\label{sec:preliminaries}

At the heart of many of our technical arguments are
notions of relative entropies---measures of distance between two quantum states---and how they contract under noise. %
We first define distance measures on classical probability distributions. Let $R,S$ be two probability measures on the same support $\mathcal{X}$. It will suffice for our purposes to let $\mathcal{X}$ be a finite set. 
%\SK{We are already using the uppercase $P$ and $Q$ for Pauli operators, and the lowercase $p$ is being used to denote the depolarizing probability. So I think we need a different notation for probability distributions.}\ynote{resolved}
\begin{itemize}[leftmargin=*]
    \item \textbf{Total variation (TV) distance.}
    This is defined as
    \begin{equation}\label{eq:TV}
        d_{TV}(R,S)\coloneqq \sup_{A \subseteq \mathcal{X}} |R(A)-S(A)| = \frac{1}{2} \sum_{x\in \mathcal{X}} |R(x)-S(x)|.
    \end{equation}
    \item \textbf{Kullback-Leibler (KL) divergence, or classical relative entropy.}
    The classical relative entropy is defined as
   \begin{equation}
    D(R || S) \coloneqq \sum_{x \in \mathcal{X}} R(x) \log \frac{R(x)}{S(x)} = \mathbb{E}_{x\sim R} \left[\log \frac{R(x)}{S(x)}\right],
    \end{equation}
    where throughout the manuscript we take $\log$ to be base $2$.
\end{itemize}
%\JJM{I would use \texttt{\textbackslash begin\{itemize\}[leftmargin=*]} here to remove the indent, it is weird if the equation starts further left than the item. Same for the other itemizes later.}

Now we introduce distance measures on quantum states. The primary such measure we will consider is quantum relative entropy, which can be understood as a quantum generalization of classical {\em KL divergence}. For this reason, we will use the same notation, $D(\cdot \| \cdot)$, for both quantum relative entropy and classical relative entropy.

Let $\rho, \sigma$ be two quantum states in $D(\mathcal{H}_n)$ (though in general these quantities are defined with $\sigma$ any positive semi-definite operator). We will use the following divergences:

\begin{itemize}[leftmargin=*]
\item \textbf{Relative entropy.}
If $\text{supp} (\rho) \subseteq  
\text{supp} (\sigma)$, 
we define the (quantum) relative entropy
as
\begin{equation}
    D(\rho \|\sigma)\coloneqq
    \Tr(\rho \log\rho)-
    \Tr(\rho \log\sigma).
\end{equation}
\item \textbf{Petz-R\'{e}nyi relative entropy.}
For a parameter $\alpha \in(0,1) \cup(1, \infty)$,
\begin{equation}
D_{\alpha}(\rho \| \sigma)\coloneqq\frac{1}{\alpha-1} \log \operatorname{Tr}\left[\rho^{\alpha} \sigma^{1-\alpha}\right] .
\end{equation}
For $\alpha\in(1,\infty)$, this definition holds in the case $\text{supp}(\rho)\subseteq\text{supp}(\sigma)$. In the limit $\alpha\to 1$, the Petz-R\'{e}nyi relative entropy reduces to the quantum relative entropy, i.e., $\lim_{\alpha\to 1}D_{\alpha}(\rho\Vert\sigma)=D(\rho\Vert\sigma)$.

\item \textbf{Max-relative entropy.} For $\text{supp}(\rho)\subseteq\text{supp}(\sigma)$,
\begin{equation}
    D_{\max }(\rho\Vert \sigma)\coloneqq\inf \left\{\gamma: \rho \leq 2^\gamma \sigma\right\}.
\end{equation}
It holds that $D(\rho\Vert\sigma)\leq D_{\max}(\rho\Vert\sigma)$~\cite{MosonyiHiai}.
\end{itemize}

It will often be illuminating to fix the second argument of the relative entropy to be the maximally mixed state, while putting the state of interest in the first argument of the relative entropy. Relative entropies of this form may even be upper-bounded explicitly in terms of quantities relating to the state of interest, as we now show the following.

\begin{lemma}[Purity controls relative entropy to the maximally mixed state]\label{lem:relent}
For any $\rho$ on $n$ qubits,
\begin{equation}
    D\left(\rho \Big\|  \frac{\mathbb{I}}{2^n}\right) \leq n + \log(\Tr(\rho^2)).
\end{equation}
\end{lemma}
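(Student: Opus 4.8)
The plan is to reduce the statement to a comparison of entropies. First I would expand the relative entropy: since $\log(\mathbb{I}/2^n) = -n\,\mathbb{I}$ with $\log$ taken base $2$, one has
\[
D\!\left(\rho \,\Big\|\, \frac{\mathbb{I}}{2^n}\right) = \Tr(\rho\log\rho) - \Tr\!\big(\rho\log(\mathbb{I}/2^n)\big) = \Tr(\rho\log\rho) + n = n - S(\rho),
\]
where $S(\rho) \coloneqq -\Tr(\rho\log\rho)$ is the von Neumann entropy. Thus the claimed bound is equivalent to $S(\rho) \ge -\log\Tr(\rho^2)$, i.e.\ that the von Neumann entropy dominates the $2$-R\'enyi entropy $-\log\Tr(\rho^2)$.

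To prove this, I would diagonalize $\rho = \sum_i \lambda_i \ketbra{i}{i}$ with eigenvalues $\lambda_i \ge 0$ satisfying $\sum_i \lambda_i = 1$, so that $S(\rho) = -\sum_i \lambda_i\log\lambda_i$ and $\Tr(\rho^2) = \sum_i \lambda_i^2$. Treating $(\lambda_i)_i$ as a probability distribution and applying Jensen's inequality to the concave function $\log$, with the random variable equal to $\lambda_i$ with probability $\lambda_i$, gives $\sum_i \lambda_i \log\lambda_i \le \log\big(\sum_i \lambda_i \cdot \lambda_i\big) = \log\Tr(\rho^2)$; negating both sides yields exactly $S(\rho) \ge -\log\Tr(\rho^2)$. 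Substituting back into the identity above completes the proof. Vanishing eigenvalues are handled by the usual convention $0\log 0 = 0$ and contribute nothing to either quantity.

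An even shorter route uses the monotonicity of the Petz--R\'enyi relative entropy in its order $\alpha$: since $D_1(\rho\|\sigma) = D(\rho\|\sigma) \le D_2(\rho\|\sigma)$ and
\[
D_2\!\left(\rho \,\Big\|\, \frac{\mathbb{I}}{2^n}\right) = \log\Tr\!\big[\rho^2 (\mathbb{I}/2^n)^{-1}\big] = n + \log\Tr(\rho^2),
\]
the lemma follows immediately; the support condition needed to define $D_2$ holds automatically because $\mathbb{I}/2^n$ has full rank.

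I do not anticipate a genuine obstacle here. The only points requiring some care are the bookkeeping of zero eigenvalues in the Jensen step and getting the direction of the entropy comparison right (larger R\'enyi order gives smaller entropy); the substance of the argument is just the algebraic rewriting $D(\rho\|\mathbb{I}/2^n) = n - S(\rho)$.
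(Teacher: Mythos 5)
Your proposal is correct, and in fact contains the paper's own proof as its ``shorter route'': the paper establishes the lemma exactly by invoking the ordering of the Petz--R\'enyi relative entropies, $D(\rho\|\sigma)\leq D_2(\rho\|\sigma)$, and then computing $D_2\left(\rho\,\big\|\,\mathbb{I}/2^n\right)=n+\log\Tr(\rho^2)$. Your primary argument is a genuinely more elementary alternative: the identity $D\left(\rho\,\big\|\,\mathbb{I}/2^n\right)=n-S(\rho)$ reduces the claim to $S(\rho)\geq -\log\Tr(\rho^2)$, which you obtain directly from Jensen's inequality applied to the eigenvalue distribution. This buys self-containedness---no appeal to the monotonicity-in-$\alpha$ theorem of Mosonyi and Hiai is needed, only concavity of $\log$---at the cost of being specific to the case where the second argument is maximally mixed (the R\'enyi-ordering route proves $D\leq D_2$ for arbitrary $\sigma$, which is the form the paper states before specializing). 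Both arguments are complete and correct; your handling of zero eigenvalues via the $0\log 0=0$ convention is the right bookkeeping.
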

\begin{proof}
The statement follows from the fact that relative entropy is upper-bounded by 2-R\'{e}nyi relative entropy
\begin{equation}
D_2(\rho\|\sigma) \coloneqq \log \Tr\left[\rho^{2} \sigma^{-1}\right].
\end{equation} 
This can be seen from the fact that the Petz-R\'{e}nyi relative entropies for $\alpha \in (0,1) \cup (1,\infty)$ satisfy an ordering property \cite{MosonyiHiai}, i.e., 
\begin{equation}
\text{for }\alpha>\beta>0:\quad D_{\alpha}(\rho \big\| \sigma) \geq D_{\beta}(\rho \big\| \sigma),
\end{equation}
and
\begin{equation}
\lim_{\alpha \rightarrow 1} D_{\alpha}(\rho \big\| \sigma) = D(\rho \| \sigma).
\end{equation}
For any $\rho$ on $n$ qubits and $\sigma$ the maximally mixed state, we thus have
\begin{eqnarray}\label{eq:boundD}
D\left(\rho \Big\|  \frac{\mathbb{I}}{2^n}\right) &\leq& D_2\left(\rho \| \frac{\mathbb{I}}{2^n}\right) = \log \Tr\left(\rho^2 \left(\frac{\mathbb{I}}{2^n}\right)^{-1}\right) \\
&=&  n + \log(\Tr(\rho^2))\nonumber
\end{eqnarray}
as stated.
\end{proof}

%
%\ynote{We may also be able to use the following slightly better statement which saves $\log(n)$ depth if we can deal with non-Clifford gates for the noise. Wait for an update!}
%
%Alternatively, with $O(n \log n \log \log n)$ non-Clifford gates that compose to a Clifford unitary and $\tOrd{n}$ auxiliary qubits,  in $O(\log(n))$ depth, it is possible to implement an {\em exact} unitary 2-design on $n$ qubits. 

\subsection{Pauli operators and Pauli channels}

We denote the single-qubit Pauli operators by $\mathbb{I},X,Y,Z$. Furthermore, for $a=(a_1,a_2,\dotsc,a_n)\in\{0,1\}^n$, we let
\begin{align}
    Z^{a}&\coloneqq Z^{a_1}\otimes Z^{a_2}\otimes\dotsb\otimes Z^{a_n},\\
    X^{a}&\coloneqq X^{a_1}\otimes X^{a_2}\otimes\dotsb\otimes X^{a_n},
\end{align}
where
\begin{align}
    Z^{a_i}&\coloneqq \left\{\begin{array}{l l} \mathbb{I} & a_i=0, \\ Z & a_i=1, \end{array}\right\},\\
    X^{a_i}&\coloneqq \left\{\begin{array}{l l} \mathbb{I} & a_i=0, \\ X & a_i=1. \end{array}\right\}.
\end{align}
We then define the Pauli group.

\begin{defn}[Pauli group and Pauli weight]
Let $n\in\{1,2,\dotsc\}$. The Pauli group $\mathcal{P}_{n}$, by definition, consists of all operators of the form $i^{k} X^{a} Z^{b}$, where $k \in$ $\{0,1,2,3\}$ and $a, b \in\{0,1\}^{n}$. Let $\mathcal{Q}_{n}\coloneqq\mathcal{P}_{n} /\{\pm 1, \pm i\}$ be the quotient group that results from disregarding global phases in $\mathcal{P}_{n}$. 
%Finally let $\mathcal{Q}_{n}^z$ be the group of $n$-qubit Paulis of the form $Z^a$, $a\in\{0,1\}^n$.
%
For every Pauli operator $P \in \Q_n$, we denote by $w(P)$ the {\em weight} of $P$, which is the number of qubits on which $P$ acts non-trivially. 
\end{defn}

We remind readers of the following basic fact about Pauli operators: For $n$-qubit Paulis $P,Q \in \Q_n$, 
\begin{equation}\label{eq:trpaulipdt}
\Tr(P\cdot Q) = \begin{cases} 
      2^n & P=Q,\\
      0 & \text{otherwise.}
   \end{cases}
   \end{equation}
A great deal of our analysis is devoted to error mitigation on circuits affected by depolarizing noise. Depolarizing noise is an example of a Pauli channel, which are channels that act on Hilbert space operators as 

\beq
\mathcal{P}(\cdot) = \sum_{P\in \mathcal{Q}_n} q_P P (\cdot) P
\eeqc
where $q_P$ is a probability distribution over $\mathcal{Q}_n$.

\subsubsection{Depolarizing noise}
%\ynote{Sumeet please change all $1-q$ to $p$ in this section.}
%\item \textbf{Clifford circuits with Pauli noise}
%Let $\C = \C_l \ldots \circ \C_2 \circ \C_1$ be an $l$-layer Clifford circuit and $\tilde{\C} = \C_1 \circ P_1 \circ \C_2, \ldots \circ \C_l \circ P_l$ be the same Clifford circuit interspersed with Paulis $P_1,\ldots P_l \in \Q_n$. Then for any two Pauli strings $Q, Q' \in \Q_n$, 
%\beq\label{eq:Pauliflip}
%\Tr(\C(Q) Q') = \pm \Tr(\tilde{\C}(Q)Q')
%\eeqp

\begin{defn}[Depolarizing channels]  For $M \in L(\mathbb{C}^d)$, the $d$-dimensional depolarizing channel $\mathcal{D}_p^{(d)}$, for $d\geq 2$, is defined as
\begin{equation}
    \mathcal{D}_p^{(d)}(M)\coloneqq pM+(1-p)\Tr[M]\frac{\mathbb{I}}{d},\quad p\in[-1/(d^2-1),1].
\end{equation}
When the superscript for the dimension is omitted, we implicitly refer to the single-qubit depolarizing channel (with $d=2$):
\begin{equation}
\mathcal{D}_p(M):=\mathcal{D}_p^{(2)}(M)=pM+(1-p)\Tr[M]\frac{\mathbb{I}}{2},\quad p\in[-1/3,1].
\end{equation}
\end{defn}
We note that the single-qubit depolarizing channel has an
alternate representation in terms of the Pauli operators
\begin{equation}\label{eq-single_qubit_depolar_Pauli}
\mathcal{D}_p(\rho) = q_X X\rho X + q_Y Y \rho Y + q_Z Z\rho Z + q_{\mathbb{I}} \rho,
\end{equation}
where 
\begin{equation}
q_X=q_Y=q_Z=\frac{1-p}{4} 
\end{equation}
and 
\begin{equation}
q_I=\frac{1+3p}{4}.
\end{equation}
In the context of depolarizing noise acting within an $n$-qubit circuit, a \textit{global depolarizing channel} is simply a $2^n$-dimensional depolarizing channel. 
Using \eqref{eq:trpaulipdt}, we find that the {\em global} depolarizing channel on $n$ qubits can be written as
\begin{equation}
    \mathcal{D}_p^{(2^n)}(P)=pP+(1-p)\delta_{P,\mathbb{I}}\mathbb{I}\quad\forall~P\in\mathcal{P}_n.
\end{equation}
Alternatively, one could also model the noise within a circuit as an $n$-fold local (single-qubit) depolarizing channel $\mathcal{D}_p^{\otimes n}$. In particular, the single-qubit depolarizing channel has the property that
\begin{equation}\label{eq:single_qubit_depolar_Pauli}
\mathcal{D}_p(X)=pX,\quad \mathcal{D}_p(Y)=pY,\quad \mathcal{D}_p(Z)=pZ,\quad\text{and}\quad\mathcal{D}_p(\mathbb{I})=\mathbb{I}.
\end{equation}

The following Lemma shows that depolarizing noise is particularly amenable to analysis in the Pauli basis:

\begin{lemma}[Action of depolarizing noise on a Pauli string depends on its weight\label{lem:depolarizing}]
For all $p\in[-1/3,1]$ and $P\in\mathcal{P}_n$,
\begin{equation}\label{eq:noisesensitivity}
\mathcal{D}_p^{\otimes n}(P) = p^{w(P)} P.
\end{equation}
\end{lemma}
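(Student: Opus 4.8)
The plan is to exploit the tensor-product structure of both the noise channel and of Pauli strings, reducing the claim to the single-qubit identities already recorded in Eq.~\eqref{eq:single_qubit_depolar_Pauli}. First I would write an arbitrary element $P \in \mathcal{P}_n$ in the form $P = i^k\, P_1 \otimes P_2 \otimes \cdots \otimes P_n$ with $k \in \{0,1,2,3\}$ and each $P_j \in \{\mathbb{I}, X, Y, Z\}$; this is possible by the definition of the Pauli group, and the weight $w(P)$ is by definition $\#\{j : P_j \neq \mathbb{I}\}$, which is independent of the global phase $i^k$.

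Next, since $\mathcal{D}_p^{\otimes n} = \mathcal{D}_p \otimes \cdots \otimes \mathcal{D}_p$ is a product channel and is (complex-)linear, it commutes with the global scalar $i^k$ and factorizes across tensor factors:
\[
\mathcal{D}_p^{\otimes n}(P) = i^k\, \mathcal{D}_p(P_1) \otimes \mathcal{D}_p(P_2) \otimes \cdots \otimes \mathcal{D}_p(P_n).
\]
Then I would substitute the single-qubit action from Eq.~\eqref{eq:single_qubit_depolar_Pauli}: $\mathcal{D}_p(P_j) = P_j$ when $P_j = \mathbb{I}$, and $\mathcal{D}_p(P_j) = p\,P_j$ when $P_j \in \{X,Y,Z\}$. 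Pulling the scalars out of the tensor product, each non-identity factor contributes exactly one power of $p$ and each identity factor contributes none, so the accumulated scalar is $p^{w(P)}$, giving $\mathcal{D}_p^{\otimes n}(P) = p^{w(P)}\, i^k\, P_1 \otimes \cdots \otimes P_n = p^{w(P)} P$, as claimed. The range $p \in [-1/3,1]$ plays no role beyond guaranteeing that $\mathcal{D}_p$ is a legitimate channel.

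There is essentially no hard step here; the only points requiring a modicum of care are (i) noting that the identities in Eq.~\eqref{eq:single_qubit_depolar_Pauli} cover all four single-qubit Paulis, so that the case split across factors is exhaustive, and (ii) invoking linearity of $\mathcal{D}_p$ to move the global phase $i^k$ and the per-factor scalars $p$ through the channel and out of the tensor product. One could alternatively phrase the argument by induction on $n$, peeling off one tensor factor at a time and applying $\mathcal{D}_p$ to it, but the direct factorization above is cleaner.
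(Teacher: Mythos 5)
Your proof is correct and is exactly the argument the paper intends: the paper simply states that the lemma ``follows immediately'' from the single-qubit identities and the tensor-product definition of $\mathcal{D}_p^{\otimes n}$, and you have filled in precisely those details (factorizing the channel across tensor factors, applying $\mathcal{D}_p(\mathbb{I})=\mathbb{I}$ and $\mathcal{D}_p(Q)=pQ$ for $Q\in\{X,Y,Z\}$, and collecting one power of $p$ per non-identity factor). No gaps.
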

\begin{proof} This follows immediately from \eqref{eq-single_qubit_depolar_Pauli} and the definition of $\mathcal{D}_p^{\otimes n}$. \end{proof}

\subsection{Unitary 2-designs and Clifford unitaries}

    A \textit{unitary $t$-design}, for $t\in\{1,2,\dotsc\}$, is a finite ensemble $\{(1/K,U_k)\}_{k=1}^K$ of unitaries such that~\cite{DCEL09}
    \begin{equation}
        \frac{1}{K}\sum_{k=1}^K U^{\otimes t}\otimes (U^{\dagger})^{\otimes t}=\int_U U^{\otimes t}\otimes (U^{\dagger})^{\otimes t}~\text{d}U,
    \end{equation}
    where the integral on the right-hand side is with respect to the Haar measure on the unitary group. The $n$-qubit Clifford group $\mathcal{C}_n$ forms a unitary 2-design~\cite[Theorem~1]{DCEL09}, where by definition the Clifford group is the normalizer of the Pauli group $\mathcal{P}_n$~\cite{Gottesman_thesis}, i.e.\ the unitaries that map elements of the Pauli group to elements of the Pauli group under conjugation.
    
When the unitaries in a given ensemble are all Cliffords, the ensemble additionally possesses the following property \cite{Cleve16} that will be crucial to us:

\begin{defn}[Pauli mixing\label{def:Paulimix}]
Consider an ensemble $\mathcal{E}=\left\{p_i, U_i\right\}_{i=1}^k$ where $U_i \in \mathcal{C}_n$. $\mathcal{E}$ is Pauli mixing, if for all $P \in \mathcal{Q}_n$ such that $P \neq I$, the distribution $\left\{p_i, \pi_{U_i}(P)\right\}$ is uniform over $\mathcal{Q}_n \backslash\{I\}$, where $\pi_{U_i}$ is the permutation of $\mathcal{Q}_n \backslash\{I\}$ induced by conjugating $P$ by $U_i$. 
\end{defn}
A Clifford 2-design (i.e., a unitary 2-design whose elements are Cliffords) if and only if it is {\em Pauli mixing}~\cite{Wat18_book}.

\subsection{Decay of purities}

Consider that the $n$-fold Paulis are an orthogonal basis for $\mathcal{H}_2^{\otimes n}$, and recall that the purity upper-bounds the relative entropy to the maximally-mixed state (see Lemma \ref{lem:relent}). A conceptual cornerstone of our proof construction is to consider how the purity of states decays after each successive layer in the circuit, by looking at the distribution over contributions of weight-$k$ Pauli strings to this quantity. More rigorously, for state of interest $\rho = \sum_{P \in \calP_n} c_P P$ expanded in the Pauli basis, the purity may also be expanded as
\begin{equation}
\Tr(\rho^2) = \sum_{P,P' \in \calP_n} c_P c_{P'} \Tr(PP') = \sum_{P \in \calP_n} c_P^2 2^n.
\end{equation}
Here, the second equality is because the product of two non-identical Paulis is another Pauli, which is traceless. We will look at the contribution
\begin{equation}
    C_k := \sum_{\substack{P \in \calP_n\\w(P) = k}} c_P^2 2^n,
\end{equation}
and see how it is distributed over $k$, for different choices of $\rho$. These $
\rho$s correspond roughly to a state progressing through different layers of our circuit. 

\begin{itemize}[leftmargin=*]
\item \textbf{Computational basis state:}
For $a\in\{0,1\}^n$, let
\begin{equation}
    \rho = \ketbra{a}{a} = \frac{1}{2^n} \sum_{b\in\{0,1\}^n} (-1)^{a\cdot b} Z^{b},
\end{equation}
where $a\cdot b=a_1b_1+a_2b_2+\dotsb+a_nb_n\text{ (mod 2)}$. Then 
\beq\label{eq:purity_compstate}
\Tr(\rho^2) = \frac{1}{2^n} \sum_{b \in\{0,1\}^n} 1
\eeqc
where we have invoked orthogonality of Pauli operators Eq.~\eqref{eq:trpaulipdt}. Eq.~\eqref{eq:purity_compstate} shows that the only Pauli strings contributing to the purity are those containing $Z$s, without $X$s and $Y$s. This means that $C_k$ is simply proportional to the total number of weight-$k$ $n$-bit-strings, which is 
${{n \choose k}}/({2^n-1})$. 
%\JJM{Is this argument important later? To me showing the purity of a pure state is one is a bit superfluous.}\ynote{Added a note to the effect that we want to look at what is being summed over}

\item \textbf{Pure product state:}
A very similar story to the above holds in this case. We may decompose an $n$-qubit product state as
\beq
\rho = \rho_1 \otimes \ldots \otimes \rho_n = \sum_k \sum_{P \in \mathcal{P}_n: w(P)=k} c_{P} P
\eeqc
where for $P$ such that $w(P)=k$, $c_{P} = \prod_{i=1}^k \Tr(P_i \rho_i)$ is a $k$-fold product. In particular, suppose that for each $\rho_i$ the distribution over Paulis is bounded away from uniform in the following sense: there exists some Pauli $Q\in \mathcal{P}\setminus \mathbbm{I}$ such that the coefficient of $Q$ in $\rho_i$ is at least $1-\eps$, then for product states, the contribution of weight-$k$ Pauli strings to the purity is approximately also ${{n \choose k}}/({2^n-1})$.
\item \textbf{State acted on by a noiseless Clifford 2-design}
We consider the {\em expected} purity after the two design acts. We see that for any initial state 
\begin{align}
\rho = \frac{1}{2^n} \mathbb{I} + \sum_{P \in \mathcal{P}_n\setminus \mathbbm{1}} c_P P, 
\end{align}
this is
\begin{align}
{\displaystyle\operatornamewithlimits{\mathbb{E}}_{\mathcal{C}\sim\mathcal{E}}}[\Tr(\calC(\rho)^2)]&= {\displaystyle\operatornamewithlimits{\mathbb{E}}_{\mathcal{C}\sim\mathcal{E}}}
\left[\Tr\left( \sum_{P\in \calP_n} c_P \calC(P)\right)^2 \right] \\
&= {\displaystyle\operatornamewithlimits{\mathbb{E}}_{\mathcal{C}\sim\mathcal{E}}} \left[\Tr\left( \sum_{P\in \calP_n} c_P^2  \calC(P) \calC(P)\right)\right]\nonumber\\
&= \Tr \left[ \left( \mathbbm{1}\frac{1}{4^n} + \sum_{P\in \calP_n \setminus \mathbbm{1}} c_P^2{\displaystyle\operatornamewithlimits{\mathbb{E}}_{\mathcal{C}\sim\mathcal{E}}} [\calC(P) \calC(P)]\right)\right] 
\nonumber
\end{align}
and hence
\begin{align}
{\displaystyle\operatornamewithlimits{\mathbb{E}}_{\mathcal{C}\sim\mathcal{E}}}[\Tr(\calC(\rho)^2)]&= \frac{1}{2^n} + \sum_{P\in \calP_n \setminus \mathbbm{1}} c_P^2 \Tr\left(\sum_{Q\in \calP_n \setminus \mathbbm{1}} \frac{1}{4^n-1} Q\cdot Q\right) \nonumber\\
&= \frac{1}{2^n} + \sum_{k=1}^n \sum_{\substack{Q \in \mathcal{P}_n\setminus \mathbbm{1} :\\ w(Q)=k}}  \frac{1}{4^n-1} \left(\sum_{P\in \calP_n \setminus \mathbbm{1}} c_P^2 \right) 2^n  .\label{eq:Panel1b}
\end{align}
Here, the second-last equality follows from the Pauli mixing property of Clifford 2-designs, which says that any non-identity $P$ gets mapped by such an ensemble to a uniform distribution over non-identity Paulis. This means that, in contrast to the above two scenarios, the contribution of weight-$k$ Pauli strings to the purity $\Tr(\mathbb{E}_{\mathcal{C}\sim\mathcal{E}}[\mathcal{C}(\rho)]^2)$ is proportional to the number of weight-$k$ Paulis there are, which is ${{n \choose k}3^k}/({4^n-1})$: {\em higher-weight Paulis contribute more}.
\item \textbf{State acted on by a noisy Clifford 2-design:}
\begin{align}
&{\displaystyle\operatornamewithlimits{\mathbb{E}}_{\mathcal{C}\sim\mathcal{E}}}[\Tr(\Phi_p \circ \mathcal{C}(\rho)^2)] \\ \nonumber
&= \mathbb{E}_{\mathcal{C}\sim\mathcal{E}}\left[\Tr\left(\left(\frac{1}{2^n}
\mathbb{I}+ \frac{1}{4^n-1} \sum_{P \in \mathcal{P}_n\setminus \mathbb{I}} c_P p^{w(\mC(P))} \mC(P) \right)^2\right)\right]\\ \nonumber
&= \mathbb{E}_{\mathcal{C}\sim\mathcal{E}}\left[ \Tr\left(\frac{1}{4^n}\mathbb{I} + \left(\frac{1}{4^n-1}\right)^2  \sum_{P \in \mathcal{P}_n\setminus \mathbbm{1}} c_P^2 p^{2w(\mC(P))} I\right) \right]\\
&= \frac{1}{2^n} + \left(\frac{1}{4^n-1}\right)^2 2^n \sum_{P \in \mathcal{P}_n\setminus \mathbbm{1}} c_P^2 \mathbb{E}_{\mathcal{C}\sim\mathcal{E}}[p^{2w(\mC(P))}] \\ \nonumber
&\approx \frac{1}{2^n} \left(1 + \frac{1}{4^n-1}\sum_{P \in \mathcal{P}_n\setminus \mathbbm{1}} c_P^2 \mathbb{E}_{\mathcal{C}\sim\mathcal{E}}[p^{2w(\mC(P))}]\right) . \nonumber
\end{align}
%\JJM{What do you think of using $\displaystyle\operatornamewithlimits{\mathbb{E}}_{\mathcal{C}\sim\mathcal{E}}$ instead of $\displaystyle\mathbb{E}_{\mathcal{C}\sim\mathcal{E}}$ to make this a bit more compact? Would also apply in the rest of the document.]}
%\JJM{Maybe explain how we get the second equality (using the traceless property)} \ynote{Explained in the preamble to this subsubsection}
By the Pauli mixing property, we find
\begin{align}
\mathbb{E}_{\mathcal{C}\sim\mathcal{E}}[p^{2w(\mC(P))}] &= \sum_{i=1}^{4^n-1} \frac{1}{4^n-1} p^{2w(Q_i)}\\
\nonumber
&= \sum_{k=1}^n \sum_{\substack{Q \in \mathcal{P}_n\setminus \mathbbm{1}:\\ w(Q)=k}} \frac{1}{4^n-1} p^{2k},
\end{align}
implying that 
\begin{align}
&\mathbb{E}_{\mathcal{C}\sim\mathcal{E}}[\Tr(\Phi_p \circ \mathcal{C}(\rho)^2)]\nonumber
\\ &\approx
\frac{1}{2^n} \left(1 + \frac{1}{4^n-1}\sum_{P \in \mathcal{P}_n\setminus \mathbbm{1}} c_P^2  \sum_{k=1}^n \sum_{\substack{Q \in \mathcal{P}_n\setminus \mathbbm{1}:\\ w(Q)=k}} \frac{1}{4^n-1} p^{2k}  \right) . \label{eq:Panel2}
\end{align}
In particular, we see that all weight-$k$ Pauli strings have exactly equal weight in the Pauli basis expansion of the state after the two-design, but now this weight is damped by a factor exponential in the Hamming weight $k$. That is, the contribution of all weight-$k$ Paulis is proportional to ${{n \choose k}3^k}/({4^n-1}) p^{2k}$.
\end{itemize}

\section{Error mitigation: The lay of the land}\label{sec:defs}

%Error mitigation is a procedure that takes copies of a noisy circuit's output and computes a representation of the noise{\em less} circuit's output. To begin the discussion, we will introduce some terminology to do with noisy and noiseless circuits.
%\JJM{[Proposal]
The aim of an error mitigation procedure is to produce a representation of the output of a noiseless quantum circuit given access to the actual noisy quantum device. Before we formally define different error mitigation tasks, we outline the model for noisy quantum circuits that is used throughout this work.
%{\em Noise models} --- 
We consider quantum circuits of depth $D$ acting on $n$ qubits. In the quantum channel picture, such circuits take the form
\begin{align}
\mathcal{C}=\mC^{(D)}\circ\dotsb\circ\mC^{(2)}\circ\mC^{(1)}
\end{align}
where each $\mC^{(1)},\dots,\mC^{(D)}$ denotes a layer of unitary quantum gates. We will then assume that in the noisy version, a quantum channel will act after each unitary we implement.
That is, for such a circuit, we take its noisy version to be 
\begin{equation}
\Phi_{\mC,\mathcal{N}^{(D)},\cdots \mathcal{N}^{(1)}}=\mathcal{N}^{(D)}\circ\mC^{(D)}\circ\dotsb\circ\mathcal{N}^{(2)}\circ\mC^{(2)}\circ\mathcal{N}^{(1)} \circ\mC^{(1)}, 
\end{equation}
where $\mathcal{N}^{(1)},\mathcal{N}^{(2)},\dotsc,\mathcal{N}^{(D)}$ are quantum channels that describe the noise.

For simplicity, we will often consider the case in which every layer has an identical noise channel acting on it; in that case we will denote the noisy version of the circuit as $\Phi_{\mC,\mathcal{N}}$, or in cases when the noise channel $\mathcal{N}$ can be described by a parameter $p$, $\Phi_{\mC,p}$. However, our results extend to the case where the noise is not uniform, or not unital. For pedagogical reasons, our results are initially derived for local depolarizing noise and then extended to other noise models.

\subsection{Error mitigation setting} 

%\JJM{TODO: Update definitions to include training data from other circuits (statistically uncorrelated); Can we update definitions such that we run poly many different circuits to include ZNE?; Motivate and add new definition of BOF (robust strong mitigation) which asserts that probability vectors in any sector can at most decrease polynomially, i.e. if we would like to sample from $p$ but instead sample from $q$ we need to have $p/q \leq O(\operatorname{poly}(n))$, the motivation being that solution states (ones that have $1/poly$ overlap) should be preserved but we don't know where they are; Show that this implies strong mitigation (probably more about finding out what scaling of $\epsilon$ this implies); Add as many examples for protocols}

The protocols listed in 
%Subsection \ref{suppmat:protocols} 
Section I
in the Supplemental Material are but a slice of the wealth of error mitigation techniques that have so far been proposed. In this section we aim to unify all of these techniques in introducing the model of error mitigation that we will be using in the rest of this work. As explained in Section I of the Supplemental Material,
%Subsection \ref{suppmat:protocols}, 
lower bounds against the model will introduce imply lower bounds for many of those protocols. We will take the input of an error mitigation protocol to be the following.

\begin{defn}[Resources for error mitigation]\label{def:error_mitigation_restricted}
An error mitigation algorithm predicts properties of the noiseless output state of a quantum circuit $\calC(\rho)$ given the following elements.
\begin{itemize}\setlength{\itemsep}{0pt}
    \item Classical descriptions of $\mC$ and the noise channel $\calN$ acting on $\mC$.
    \item (Optional) A classical description of the input state to the circuit, $\rho$. If the algorithm utilizes this classical description, we call the algorithm input-state {\em aware}. If the algorithm doesn't utilize this classical description, we call it input-state {\em agnostic}.
    \item The ability to perform collective quantum measurements on multiple copies of the noisy circuit output state $\Phi_{\calC,\calN}(\rho)$ (see the remarks below for the case of randomized families of circuits).
\end{itemize} 
\end{defn}
We now make several remarks about this definition. 
Firstly, our goal is to demarcate the information-theoretic limits of error mitigation. This we do by studying the {\em sample complexity} of error mitigation in our model---how many copies of the noisy output state are required to achieve the desired error mitigation output---rigorously quantifying how this number scales in the complexity of $\mC$ and the noise $\mN$. We note that sample complexity lower bounds computational complexity, and so lower bounds for sample complexity are also lower bounds for computational complexity.

Secondly, the practical scope of our model is broader than meets the eye. It encompasses even those protocols that run {\em multiple} different circuits with varying levels of noise, or take as additional input `training data' which are pairs of (experimental) noisy and (simulated) noiseless expectation values for different circuits. In this case, we can always identify a representative circuit/noise level out of the class of circuits that may be run; it is this circuit whose parameters will determine the complexity of error mitigation. The reader is referred to the discussion in 
Section I
%the Subsection \ref{suppmat:protocols} 
in the Supplemental Material for more details.

Thirdly, we start in Section II
%\ref{sec:conc} 
by proving lower bounds against input-state {\em agnostic} protocols, and then in 
Subsection II.B,
%Section \ref{sec:aware} 
we extend the results to input-state {\em aware} protocols. Input-state {\em agnostic} protocols are a natural starting point, because they cannot possibly work by simulating the circuit $\mathcal{C}$, no matter how shallow that circuit is---simply because they do not know the input to the circuit. In fact, many existing error mitigation protocols in the literature are already covered by the input state-agnostic model, because the classical description of the input state is not a parameter in the protocol. 

Fourthly, note that we have allowed for the ability to perform arbitrary collective measurements. This requires access to a quantum memory, a nontrivial quantum resource that might be out of reach. Because of this, most error mitigation protocols in the literature do not require this ability. However, our conclusions hold up {\em even} against algorithms that have such additional power.

Error mitigation is not an end in itself; typically, it is used as the last step in a pipeline to solve some (quantum) computational task. Depending on what kind of task that is, different outputs of the mitigation procedure are required.
Arguably, the most popular intended application for near-term quantum computers are \emph{variational quantum algorithms}~\cite{McClean_2016,cerezo2021variational,bharti2022noisy} where a quantum state is prepared through a parametrized quantum circuit and the parameters are iteratively adjusted to optimize a function $L(\langle O_1 \rangle, \langle O_2 \rangle, \dots)$ of expectation values of operators evaluated on said state. The archetypal variational quantum algorithm is the variational quantum eigensolver~\cite{McClean_2016,cerezo2021variational,bharti2022noisy} 
% Changed to save a reference.
where the function 
\beq
L(\ket{\psi}) = \sum_i \bra{\psi} O_i \ket{\psi} = \sum_i \langle O_i \rangle
\eeq
is the expectation value of a Hamiltonian $H = \sum_i O_i$. In this case, the ground state of $H$ yields the solution to the optimization problem, and so the optimized parametrized quantum circuit should ideally prepare a state close to the ground state of $H$. However, such circuits are noisy and the goal of error mitigation is to correct this. It stands to reason then that an error mitigation protocol should output estimates of the expectation 
values $\langle O_i \rangle$ on the state output by the noiseless version of these circuits. We can formally capture this task in the following 
definition of \emph{weak} error mitigation. In all of the definitions below, let $\ket{\psi} = \mC(\rho)$ be the state vector output by the noiseless circuit. 

\begin{defn}[Weak error mitigation (expectation 
value error mitigation)]\label{def:Weak_EMalg}
An $(\eps,\delta)$ weak error mitigation algorithm $\mA$ with resources as in Definition~\ref{def:error_mitigation_restricted} takes as input a classical description of a set of 
observables $\calM = \{O_1,\ldots,O_{\ell}\}$ satisfying $\|O_i\|\leq1$ and outputs estimates $\hat{o}_i$ of $o_i=\bra{\psi}O_i\ket{\psi}$ such that
\begin{align}\label{eq:weak}
    \bbP\big[ |\hat{o}_i - o_i| \leq \eps \text{ for all } 1\leq i\leq \ell\big] \geq 1 - \delta.
\end{align}
Here, the probability in Eq.~\eqref{eq:weak} is over the randomness of the error mitigation algorithm. This randomness could come from using classical random bits or from making measurements of the quantum states available as input. 
% eq:PAC and eq:weakoutput used to be here
\end{defn}
The task of computing expectation values is ubiquitous in near-term quantum computing. Most error mitigation algorithms in the literature address the weak error mitigation task, including all protocols listed in 
Section~I %~\ref{suppmat:protocols} 
of the Supplemental Material.
However, in some applications, knowing expectation values is not sufficient. In these cases, we would like to represent the strongest possible access, on par with access to the quantum computer, which is sampling of the circuit's output. This leads us to a definition of \emph{strong} error mitigation:

\begin{defn}[Strong error mitigation (sample error mitigation)]\label{def:Strong_EMalg}
An $(\eps,\delta)$ strong error mitigation algorithm $\mA$ with resources as in Definition~\ref{def:error_mitigation_restricted} outputs a bit-string $z$ sampled according to a distribution $z \sim \mu$ such that with probability $1-\delta$,
\begin{align}
d_{\mathrm{TV}}(\mu, D_{\ket{\psi}}) \leq \eps, \qquad \text{(additive error $\eps$)}
\end{align}
or alternatively
\begin{align}
\frac{D_{\ket{\psi}}(z)}{\mu(z)} \leq \kappa \text{ for all } z \in \01^n, \qquad \text{(multiplicative error $\kappa$)} 
\end{align}
where $D_{\ket{\psi}}$ is the distribution arising from a computational basis measurement of $\ket{\psi}$.
\end{defn}
As the strong error mitigation task is more difficult than the weak error mitigation task---as we will show below, strong error mitigation implies weak error mitigation---and weak error mitigation is usually sufficient for near-term applications, there are fewer methods available that achieve this, an example being virtual distillation~\cite{huggins2021virtual,koczor2021exponential}.
The two notions of error are related---$\kappa$ multiplicative error implies 
%\begin{equation}
$\eps = ({1-{1}/{\kappa}})^{1/2}
$
%\end{equation}
additive error. To see this, note that the multiplicative error requirement can be re-written as 
\beq
\frac{D_{\ket{\psi}}(z)}{\mu(z)} \leq \kappa \rightarrow D_{\infty}(D_{\ket{\psi}} \| \mu) \leq \log(\kappa)
\eeqp
By the monotonicity of relative entropies, we then have
\beq
D_{KL}(D_{\ket{\psi}} \| \mu) = D_{1}(D_{\ket{\psi}} \| \mu) \leq D_{\infty}(D_{\ket{\psi}} \| \mu) \leq \log(\kappa)
\eeqp
By the \emph{Bretagnolle-Huber inequality}, we can then relate this to total variation distance as
\begin{align}
    d_{TV}(D_{\ket{\psi}}, \mu) \leq \sqrt{1 - \exp(- D_{KL}(D_{\ket{\psi}}\|\mu))} \leq \sqrt{1 - \frac{1}{\kappa}}.
\end{align}
%This implies that, if a strong error mitigation algorithm returns samples from a distribution accurate up to multiplicative error $\kappa$, then the underlying distribution is also accurate up to additive error $\sqrt{1 - {1}/{\kappa}}$. Note that if we use $p_i = \delta_{i,1}$ and $q_1 = 1/\kappa$ and arbitrary everywhere else, we obtain $d_{TV}(p,q) = 1 - {1}/{\kappa}$, which means the above inequality is reasonably tight. 
On the other hand, it is easy to check that additive error does not imply multiplicative error for any setting of the parameters.

The additive-error requirement for strong error mitigation makes intuitive sense when sampling access to the noiseless quantum state is required and it is not important that the samples generated by the error-mitigated algorithm should come from any particular subset of the support. However, this is not the case for some of the near-term quantum algorithms that one might want to error mitigate. Consider for example variational quantum optimization algorithms like the quantum approximate optimization problem~\cite{QAOA}, where a diagonal Hamiltonian $\mathcal{H}$ encodes a hard combinatorial optimization problem. Here, computational basis states of low energy correspond to approximate solutions of the combinatorial optimization problem.  If the noiseless state $\ket{\psi}$ has an overlap of $O(1/\operatorname{poly}(n))$ with the low-energy subspace of $\mathcal{H}$, polynomially many samples from the noiseless distribution $D_{\ket{\psi}}$ are sufficient to solve the optimization problem with high probability. And then the same would hold for sampling from an error mitigated distribution with multiplicative error $\kappa=O(1/\operatorname{poly}(n))$: such distribution must also have an inverse polynomially small weight on the set of low-energy strings. Such important examples motivate our definition of the multiplicative error mitigation.

\subsection{Relationship between notions of error mitigation}

%\JJM{As a note, I would put this in the supplementary at the next iteration of the paper, maybe except for the parts that are tools (like the SQ hardness of parities). At the same time, I would put in something about the input-state aware case, especially as the main argument is also easy to state. In the same spirit, I would put in some methods of the beyond unital part, I think the relation between the expected overlap and the channel properties $\eta$ and $\nu$ would be nice. However, I would hold off with this until we get word from the editors, because they probably have their own idea how the methods could be altered.}
%\je{Again, I agree. No need to edit and re-edit now. We will hear from the editors and then  we can see.}

Having established that a comprehensive study of error mitigation must consider both weak and strong versions, we now ask: how are they related? We are particularly interested in understanding if the output of one kind of error mitigation can be used, in polynomial time, to compute the output of another kind of error mitigation. 
A first observation is that \textbf{strong error mitigation implies weak error mitigation with local observables.} Let $\mathcal{A}$ be a strong error mitigation algorithm. Definition \ref{def:Strong_EMalg} requires $\mathcal{A}$ to output a sample from the computational basis of the noiseless circuit's output state. However, if $O_i$ is a local observable (say a product Pauli observable) and we assume that the cost of strong error mitigation on $\mathcal{C}$ does not increase significantly by appending a layer of product unitaries to $\mathcal{C}$, then $\mathcal{A}$ can also output samples from the probability distribution associated with measuring $\mathcal{C}$ in an eigenbasis of $O_i$. After applying the strong error mitigation procedure to obtain enough clean samples from the eigenbasis of $O_i$, we can estimate the expectation value $\Tr(O_i \mathcal{C}(\rho))$ empirically to a desired precision, thereby achieving weak error mitigation.
One of the main questions asked in this work is whether we can hope for the other direction to hold. That is, whether weak error mitigation protocols can be used to also obtain samples from the noiseless circuit. The remainder of this subsection answers this question.

\subsubsection{Weak error mitigation implies strong error mitigation only with exponentially-many observables}
Let us now consider the problem of using the outputs of a weak error mitigation algorithm to obtain a strong error mitigation algorithm. That is, what is the minimum number of expectation values required to produce samples from the noiseless circuit? Here we allow arbitrarily complex post-processing steps because our focus is on sample complexity, or fundamental information-theoretic limits. Restricting our focus to bounded-time (i.e., realistic) computations would only further limit the set of protocols under consideration.

First note that in the limit of {\em exponentially} many error-mitigated expectation values, obtaining a (potentially inefficient) strong error mitigation algorithm is possible. 
Indeed, suppose we were able to perform weak error mitigation, outputting estimates of all $n$-qubit Paulis 
    \begin{equation}
    \Tr(\mathcal{C}(\rho) P_i) \qquad \forall P_i \in \mathcal{Q}_n
    \end{equation}
up to exponentially small precision. This allows us to perform full tomography on the noiseless output $\mathcal{C}(\rho)$---if we were allowed to query exponentially-many such expectation values (since there are exponentially many members of $\mathcal{Q}_n$). However, this procedure would clearly be inefficient and likely more costly than just simulating the circuit classically. 
Thus, the more interesting question is: \textbf{could there be an algorithm that only needs to see {\em polynomially}-many expectation values from the output of weak error-mitigation to obtain strong error mitigation?} 

We proceed to provide a partial negative answer to this question, showing that there cannot exist such an algorithm if the expectation values are all of observables {\em diagonal in the same eigenbasis}. To do so, we showcase an instance of weak error mitigation that provides an explicit counterexample to the conjecture.  
First, we will need to introduce the notion of a {\em statistical query} \cite{KearnsSQ}:

\begin{defn}[Statistical query] A statistical query is a pair $(q,\tau)$ with
\begin{itemize}
\item a function $q : \01^n \times \01  \rightarrow \{0,1\}$.
\item $\tau$: a tolerance parameter $\tau \geq 0$.
\end{itemize}
\end{defn}

Now we are ready to define a statistical query oracle.

\begin{defn}[Statistical query oracle\label{def:SQoracle}] Fix an unknown function $c:\01^n \rightarrow \{0,1\}$ and a probability distribution $D$ over the domain $\01^n$. The statistical query oracle, $SQ(q,\tau)$, when given a statistical query,
returns any value in the range as
\beq
\big[\mathbb{E}_{x\sim D}[q(x,c(x))] - \tau, \mathbb{E}_{x\sim D}[q(x,c(x))] + \tau \big] .
\eeq
\end{defn}

We will also need to use the observation that the problem of $\Par$ on $n$ bits (which we now define) can be solved with $\poly(n)$ samples but requires $\exp(n)$ statistical queries.

%Let us now introduce the class of problems $\Par$:
\begin{defn}[$\Par$]\label{def:par}
The class of $\Par$ is the set of functions $\{c_s: \01^n \rightarrow \01 \}_{s\in \01^n}$ where
\beq
c_s(x) = x\cdot s \quad \text{for }s\in \01^n
\eeqp
For every such $c_s$, we may define the 
associated distribution $P_{s}:\01^{n+1}\rightarrow [0,1]$ via
\begin{equation}
    P_{s}(x\Join y) = \begin{cases}
        2^{-n}\,,& \text{if}\quad y=x\cdot s,\\
        0 \,,&  \text{else}\,,
    \end{cases}
\end{equation}
where $x\in\{0,1\}^n$, $y\in\{0,1\}$, and the symbol $\Join$ means concatenation. That is, $P_s$ is the distribution which is supported uniformly on those bit-strings whose last bit is the parity of the subset of the first $n$ bits that are indexed by the `secret' $n$-bit string $s$. 
\end{defn}

There are three important facts about $\Par$ that we will use. 
\begin{enumerate}
    \item For every $s\in \01^n$, there exists an $n+1$-qubit quantum circuit $\mC_s$ that `encodes' the distribution $P_s$, in the following sense: when initialized on the all-$0$s state, the output state has support only on the computational basis states whose labels are in the support of the distribution $P_s$:
    \beq\label{def:Cs}
    \mC_s\ket{0}^{\otimes n+1} = \frac{1}{\sqrt{2^n}} \sum_{x\in \01^n} \ket{x \Join c_s(x)}
    \eeqc
    for the parity function $c_s(x) = x \cdot s.$ This has been proven in, for example, Ref.~\cite{ouramazingTgatepaper}.
    \item $\Par$ are not solvable with Clifford data regression sub-exponentially-many statistical queries \cite{BFJKMR94}:
    \begin{lemma}[SQ-hardness of $\Par$ \cite{BFJKMR94}\label{lem:SQparities}]
    Any learning algorithm that is restricted to make statistical queries of the form $(\chi, \tau )$, where $\tau \geq \tau_0$ for each query, and for all $c_s\in \Par$, and all $D$, is able to output a hypothesis $h: \01^n \rightarrow \01$ such that $\mathbb{P}_{x\sim D}[h(x)\neq c_s(x)] \leq \eps$ must make $\Omega(\tau_0^2 \cdot 2^n)$ queries. 
    \end{lemma}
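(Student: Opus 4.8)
The plan is to run the standard statistical-query-dimension argument for parities, specialized to the uniform distribution as the hard instance. First I would take $D$ to be the uniform distribution $U$ on $\01^n$ and build an \emph{oblivious} adversary that answers the learner's queries while maintaining a version space $V\subseteq\01^n$ of secrets still consistent with every answer given so far, initialized to $V=\01^n$. The crucial observation is that the adversary's reply to a query $(q_i,\tau_i)$, with $q_i:\01^n\times\01\to\{0,1\}$ and $\tau_i\geq\tau_0$, can be chosen independently of the true secret: writing $r_i(x)\coloneqq q_i(x,1)-q_i(x,0)\in\{-1,0,1\}$ and using $c_s(x)=\tfrac12\big(1-(-1)^{s\cdot x}\big)$, one has
\begin{equation}
\Exp_{x\sim U}\big[q_i(x,c_s(x))\big]=\underbrace{\Exp_{x\sim U}[q_i(x,0)]+\tfrac12\Exp_{x\sim U}[r_i(x)]}_{=:\,v_i}-\tfrac12\,\widehat{r_i}(s),
\end{equation}
with $\widehat{r_i}(s)=\Exp_{x\sim U}[(-1)^{s\cdot x}r_i(x)]$. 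The adversary returns $v_i$ to every query and then deletes from $V$ precisely those $s$ for which $v_i$ is not within $\tau_i$ of the true value, i.e.\ those with $|\widehat{r_i}(s)|>2\tau_i$.

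The key estimate bounds the damage per query. Since $r_i$ is $\{-1,0,1\}$-valued, Parseval gives $\sum_{s\in\01^n}\widehat{r_i}(s)^2=\Exp_{x\sim U}[r_i(x)^2]\leq1$, so at most $1/(4\tau_i^2)\leq1/(4\tau_0^2)$ secrets can satisfy $|\widehat{r_i}(s)|>2\tau_i$; hence each query removes at most $1/(4\tau_0^2)$ elements of $V$, and after $T$ queries $|V|\geq 2^n-T/(4\tau_0^2)$. Because the adversary is oblivious, the query--answer transcript it induces against a fixed deterministic learner is a single fixed sequence, and for \emph{every} $s$ surviving in $V$ this transcript is simultaneously a legal run of the real oracle $SQ$ with target $c_s$; consequently the learner is forced to output one fixed hypothesis $h$ that must be $\eps$-accurate for $c_s$ for all $s\in V$.

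I would finish by showing this is impossible unless $V$ is small. Identifying $\01$ with $\{\pm1\}$ via $b\mapsto(-1)^b$, for any fixed $h$ we have $\mathbb{P}_{x\sim U}[h(x)\neq c_s(x)]=\tfrac12\big(1-\widehat{(-1)^h}(s)\big)$, so by Cauchy--Schwarz and Parseval,
\begin{equation}
\sum_{s\in V}\mathbb{P}_{x\sim U}[h(x)\neq c_s(x)]\geq\tfrac12|V|-\tfrac12\sqrt{|V|},
\end{equation}
which exceeds $\eps|V|$ once $|V|>(1-2\eps)^{-2}$; then some surviving $s^\star$ has $\mathbb{P}_{x\sim U}[h(x)\neq c_{s^\star}(x)]>\eps$, contradicting the success guarantee. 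Combining with $|V|\geq 2^n-T/(4\tau_0^2)$ forces $T>4\tau_0^2\big(2^n-(1-2\eps)^{-2}\big)=\Omega(\tau_0^2\,2^n)$, as claimed. A randomized learner is handled by fixing its coins (randomness is known not to help against an SQ oracle), and restricting attention to $D=U$ suffices for a learner required to work for all $D$.

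The step I expect to require the most care is the adversary bookkeeping: one must check that the single oblivious answer $v_i$ is simultaneously legal for every target left in $V$ --- which is exactly enforced by the deletion rule --- and that adaptivity buys the learner nothing against an oblivious adversary. The two displayed computations are routine Fourier/Parseval manipulations; the only real modelling subtlety is the $\{0,1\}$-versus-$\{\pm1\}$ translation, which is what yields the factor $1/4$ in $1/(4\tau_0^2)$ and hence the $\tau_0^2$ scaling in the final bound.
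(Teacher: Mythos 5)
The paper does not actually prove this lemma --- it is imported verbatim from Blum et al.~\cite{BFJKMR94} as a black box, so there is no in-paper argument to compare against. Your proposal supplies a correct, self-contained proof, and it is essentially the canonical statistical-query-dimension argument for parities (the one underlying \cite{BFJKMR94} in its streamlined Fourier form): the decomposition $\Exp_{x\sim U}[q(x,c_s(x))]=v-\tfrac12\widehat{r}(s)$ with $r(x)=q(x,1)-q(x,0)$ is exactly right, the Parseval bound $\sum_s\widehat{r}(s)^2=\Exp[r^2]\le 1$ correctly limits each query to eliminating at most $1/(4\tau_0^2)$ candidate secrets, and the closing step --- that a single hypothesis $h$ cannot be $\eps$-accurate against more than $(1-2\eps)^{-2}$ distinct parities, again by Parseval applied to $(-1)^h$ --- is the standard way to convert a large surviving version space into a contradiction. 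The bookkeeping you flag as delicate (that the oblivious answer $v_i$ is simultaneously legal for every surviving $s$, so the fixed transcript is a valid oracle run for each of them and adaptivity buys nothing) is handled correctly by your deletion rule. The only places where the writeup is slightly informal are the treatment of randomized learners (fixing the coins is fine for a learner required to succeed with certainty, but if the lemma is read as allowing a constant failure probability one needs the standard averaging argument over coin tosses) and the implicit assumption $\eps<1/2$; neither affects the conclusion, and the final bound $T\ge 4\tau_0^2\bigl(2^n-(1-2\eps)^{-2}\bigr)=\Omega(\tau_0^2\,2^n)$ matches the constant-free form quoted in the lemma.
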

    \item We can lower-bound the total variation distance between any two $\Par$ distributions as follows.
    
    \begin{lemma}[{Parity distributions}]\label{lem:paritiesdist}
    For any two $s, s' \in \01^k$, $d_{TV}(P_s, P_{s'}) \geq 1/2.$ 
    \end{lemma}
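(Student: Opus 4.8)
The plan is to compute $d_{TV}(P_s, P_{s'})$ essentially by hand, directly from the identity $d_{TV}(P_s, P_{s'}) = \frac12 \sum_{w} |P_s(w) - P_{s'}(w)|$, where $w$ ranges over $\{0,1\}^{n+1}$ and $n$ denotes the length of the ``secret'' strings. The feature to exploit is that $P_s$ and $P_{s'}$ have extremely structured supports: each is the uniform distribution on the graph of the $\mathbb{F}_2$-linear function $x \mapsto x\cdot s$, namely on the $2^n$ points $(x, x\cdot s)$.

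First I would group the sum by the first $n$ bits: write $w = x \Join y$ with $x \in \{0,1\}^n$ and $y \in \{0,1\}$. For a fixed $x$, the only point carrying $P_s$-mass is $(x, x\cdot s)$, with mass $2^{-n}$, and likewise $(x, x\cdot s')$ for $P_{s'}$. Hence the contribution of a fixed $x$ to $\sum_w |P_s(w) - P_{s'}(w)|$ is $0$ if $x\cdot s = x\cdot s'$, and $2\cdot 2^{-n}$ if $x\cdot s \ne x\cdot s'$. The next step is to count the $x$'s of the second type: $x\cdot s \ne x \cdot s'$ is equivalent to $x\cdot(s \oplus s') = 1$ in $\mathbb{F}_2$, and since $s \ne s'$ the vector $s\oplus s'$ is nonzero, so $x \mapsto x\cdot(s\oplus s')$ is a nontrivial $\mathbb{F}_2$-linear functional and thus balanced, vanishing on a hyperplane of size $2^{n-1}$ and equal to $1$ on the other $2^{n-1}$ inputs. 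Summing up, $\sum_w |P_s(w) - P_{s'}(w)| = 2^{n-1}\cdot 2\cdot 2^{-n} = 1$, so $d_{TV}(P_s, P_{s'}) = \frac12$, which is an exact equality and in particular at least $1/2$ as claimed.

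I do not expect any real obstacle here: the only slightly nontrivial ingredient is that a nonzero linear functional over $\mathbb{F}_2^n$ is balanced, which is immediate since its kernel is a codimension-$1$ subspace. If one wishes to avoid even that computation, the bound follows from a softer event-based argument: taking $A = \mathrm{supp}(P_s)$, we have $P_s(A) = 1$ while $P_{s'}(A)$ equals the $P_{s'}$-probability that $x\cdot s' = x\cdot s$, i.e.\ that $x\cdot(s\oplus s') = 0$, which is at most $1/2$; hence $d_{TV}(P_s, P_{s'}) \ge P_s(A) - P_{s'}(A) \ge 1 - 1/2 = 1/2$.
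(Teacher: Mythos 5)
Your proof is correct, and it takes a genuinely different (and more elementary) route than the paper. The paper proves the bound operationally via Le Cam's two-point method: it exhibits an explicit hypothesis test that, given a sample $x\Join c$, answers at random when $x\cdot s = x\cdot s'$ and answers correctly otherwise, shows this test errs with probability $1/4$, and then inverts the Le Cam inequality $P_{\mathrm{error}}\geq (1-d_{TV})/2$ to conclude $d_{TV}\geq 1/2$. You instead compute $\tfrac12\sum_w|P_s(w)-P_{s'}(w)|$ directly by grouping over $x$, which yields the exact value $d_{TV}(P_s,P_{s'})=1/2$ rather than only a one-sided bound, and your fallback argument via the event $A=\mathrm{supp}(P_s)$ is simpler still. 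Both arguments ultimately rest on the same fact---that $x\mapsto x\cdot(s\oplus s')$ is a balanced functional when $s\neq s'$---but your formulation in terms of $s\oplus s'$ sidesteps a minor imprecision in the paper, which counts solutions of the pair of constraints $x\cdot s=0$, $x\cdot s'=1$ as a ``vector space of dimension $n-2$'' (it is an affine coset, and the count $2^{n-2}$ tacitly assumes $s,s'$ linearly independent, though the combined total $2^{n-1}$ is correct in all cases). One small point common to both proofs: the lemma as literally stated (``for any two $s,s'$'') is false for $s=s'$, so the hypothesis $s\neq s'$ is being assumed implicitly; you at least invoke it explicitly where it is needed.
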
 
    %In fact we can prove, using a different technique, that it is exactly $1/2$.\ynote{Insert the second proof if I have time}).
    \begin{proof}
    Le Cam's two point method says that the failure probability of any binary hypothesis test $\Psi:X \rightarrow \{P,Q\}$ on two distributions $P, Q:X \rightarrow [0,1]$ each chosen with probability $1/2$, is lower bounded as
    \beq\label{eq:lecam}
    \min_{\Psi} P_{\text{error}}(\Psi) \geq \frac{1-d_{TV}(P,Q)}{2}
    \eeqp
    For any two strings $s, s' \in \01^k$, let $P= P_s$ and $Q=P_{s'}$. We wish to bound $d_{TV}(P_s,P_{s'})$. To do so we will exhibit a hypothesis test on $P_s$ versus $P_{s'}$ that has error probability $1/4$. Plugging this into Eq.~\eqref{eq:lecam} yields the desired lower bound on $d_{TV}(P_s, P_{s'})$. 
    A simple test is then as follows: given a string $x\Join c$ drawn from either $P_s$ or $P_{s'}$, where $c$ is one bit, we compute the parity of $x$ with $s$ and $s'$. In Case 1, we have $x\cdot s=x\cdot s'$. Then we output $s$ or $s'$ uniformly at random. This succeeds with probability $1/2$. Otherwise (call this Case 2) $x\cdot s\not=x\cdot s'$. Then we output whichever string $s$ or $s'$ yields the right parity $c$. This succeeds with probability 1 in distinguishing the two distributions. So the probability of error of the test we have described is simply
    \beq
    P_{\text{error}} = P(\text{Case 1}) 1/2
    \eeqp
    Now let us compute $P(\text{Case 1}) = 1- P(\text{Case 2})$. As the $x$ part of the string is chosen uniformly from $\01^n$, computing $P(\text{Case 2})$ boils down to computing the number of $x$'s that either satisfy the equations $x\cdot s=0$ and $x\cdot s'=1 $ or $x\cdot s=1$ and $x\cdot s'=0$. In both cases, $x$ must satisfy two linearly independent relations. Thus, we conclude that the strings that satisfy $x\cdot s=0$ and $x\cdot s'=1$ form a vector space of dimension $n-2$, which contains $2^{n-2}$ strings. Thus, the number of strings that satisfy either $x\cdot s=0$ and $x\cdot s'=1 $ or $x\cdot s=1$ and $x\cdot s'=0$ is $2^{n-1}$ which is half of all the strings $x$. Thus, $P(\text{Case 2}) = P(\text{Case 1}) = 1/2$, and so the overall $P_{\text{error}} = 1/2*1/2 = 1/4$.
    \end{proof}
\end{enumerate}

In addition, we will need the notion of hypothesis selection and an algorithm that achieves a good approximation guarantee for it, due to Yatracos \cite{Yatracos85}:

\begin{theorem}[$3$-proper hypothesis selection]\label{thm:hs}
Fix a class of distributions $\mathcal{Q} = \{q_1, \dots , q_k\}$ and $\eps, \delta >0$. Given $O(\log |\mathcal{Q}|/\eps^2)$ samples from a target distribution $p$ (which may not be in $\mathcal{Q}$), there is an algorithm to output a distribution $q^{\ast} \in \mathcal{Q}$ satisfying 
\beq
d_{TV}(p, q^{\ast}) \leq 3 \, \min_{i\in[k]} {d_{TV}(p, q_i)} + \eps
\eeq
with probability at least $1-\delta$.
\end{theorem}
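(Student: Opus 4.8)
The plan is the classical Scheffé-test / minimum-distance-estimate argument underlying \cite{Yatracos85} (and Devroye--Lugosi). For each ordered pair $i\ne j$ in $[k]$, introduce the \emph{Scheffé set} $A_{ij}\coloneqq\{x\in\mathcal{X}:q_i(x)>q_j(x)\}$; these sets witness total variation distance in the sense that $q_i(A_{ij})-q_j(A_{ij})=d_{TV}(q_i,q_j)$, and there are at most $k^2$ of them. The estimator then works as follows: draw $m$ i.i.d.\ samples from the unknown target $p$, form the empirical distribution $\hat p$, assign to each index $i$ the score $\Delta_i\coloneqq\max_{j\ne i}\bigl|q_i(A_{ij})-\hat p(A_{ij})\bigr|$, and output $q^\ast\coloneqq q_{i^\ast}$ with $i^\ast\in\arg\min_i\Delta_i$.

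First I would establish the sampling guarantee. For a fixed pair $(i,j)$, $\hat p(A_{ij})$ is an average of $m$ independent $\{0,1\}$ random variables with mean $p(A_{ij})$, so Hoeffding's inequality gives $\mathbb{P}\bigl[\,|\hat p(A_{ij})-p(A_{ij})|>\eps/4\,\bigr]\le 2\exp(-m\eps^2/8)$. A union bound over the at most $k^2$ Scheffé sets shows that taking $m=O\bigl((\log k+\log(1/\delta))/\eps^2\bigr)=O(\log|\mathcal{Q}|/\eps^2)$ samples makes the event $\mathcal G\coloneqq\{\,|\hat p(A_{ij})-p(A_{ij})|\le\eps/4\text{ for all }i\ne j\,\}$ hold with probability at least $1-\delta$. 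The rest of the argument is deterministic conditioned on $\mathcal G$.

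Next I would run the accuracy analysis. Let $i_0\in\arg\min_i d_{TV}(p,q_i)$ and write $\mathrm{OPT}\coloneqq d_{TV}(p,q_{i_0})$; the case $i^\ast=i_0$ is trivial, so assume $i^\ast\ne i_0$. Since for every $j$ we have $|q_{i_0}(A_{i_0j})-\hat p(A_{i_0j})|\le d_{TV}(q_{i_0},p)+|p(A_{i_0j})-\hat p(A_{i_0j})|\le\mathrm{OPT}+\eps/4$, it follows that $\Delta_{i^\ast}\le\Delta_{i_0}\le\mathrm{OPT}+\eps/4$. Now put $A\coloneqq A_{i^\ast i_0}$ and use the Scheffé identity together with the three-term split $q_{i^\ast}(A)-q_{i_0}(A)=[q_{i^\ast}(A)-\hat p(A)]+[\hat p(A)-p(A)]+[p(A)-q_{i_0}(A)]$, whose summands are bounded by $\Delta_{i^\ast}$, by $\eps/4$ (event $\mathcal G$, as $A$ is a Scheffé set), and by $d_{TV}(p,q_{i_0})=\mathrm{OPT}$ respectively; this yields $d_{TV}(q_{i^\ast},q_{i_0})\le 2\,\mathrm{OPT}+\eps/2$. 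A final triangle inequality then gives $d_{TV}(p,q^\ast)\le d_{TV}(p,q_{i_0})+d_{TV}(q_{i_0},q^\ast)\le\mathrm{OPT}+2\,\mathrm{OPT}+\eps/2\le 3\,\mathrm{OPT}+\eps$, which is exactly the claimed bound, at the stated sample cost $O(\log|\mathcal{Q}|/\eps^2)$.

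The closest thing to a genuine difficulty here is bookkeeping rather than an idea: since the paper restricts attention to finite $\mathcal{X}$ the Scheffé sets are manifestly well defined, and the argument is entirely classical. The points that need care are verifying the Scheffé identity (the supremum in $\sup_B(q_i(B)-q_j(B))$ is attained precisely on $A_{ij}$), checking that the three contributions in the split combine to give exactly the constant $3$ rather than a worse constant, and tracking constants so that after an innocuous rescaling of $\eps$ the advertised bound comes out cleanly; the $\log(1/\delta)$ dependence is the usual one, absorbed into the $O(\cdot)$ or suppressed by treating $\delta$ as a constant.
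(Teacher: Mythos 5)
Your proof is correct. Note that the paper itself does not prove this statement: it is imported verbatim as a known result of Yatracos \cite{Yatracos85}, so there is no in-paper argument to compare against. What you have written is precisely the standard minimum-distance-estimate / Scheff\'e-set argument from that literature (in the Devroye--Lugosi formulation), and all the steps check out: the Scheff\'e identity $q_i(A_{ij})-q_j(A_{ij})=d_{TV}(q_i,q_j)$ holds for the paper's definition of $d_{TV}$ on a finite $\mathcal{X}$; the Hoeffding-plus-union-bound step over the at most $k^2$ sets gives the stated $O(\log|\mathcal{Q}|/\eps^2)$ sample count; and the three-term split delivers $2\,\mathrm{OPT}+\eps/2$ for $d_{TV}(q_{i^\ast},q_{i_0})$, hence the factor $3$ after the final triangle inequality. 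The one point worth flagging explicitly (which you handle implicitly) is that your score $\Delta_i$ maximizes only over the sets $A_{ij}$ with $j\neq i$ rather than over the full Yatracos class $\{A_{ij}\}_{i\neq j}$; the accuracy analysis still goes through because the only set you need to control for $i^\ast$ is $A_{i^\ast i_0}$, which is included in the max precisely because you have reduced to the case $i^\ast\neq i_0$.
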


Here, the word {\em proper} refers to the fact that the output distribution is required to be in $\mathcal{Q}$, a feature that we will require for the argument we are about to make. %
The connection between expectation values (the output of weak error mitigation) and statistical queries is encapsulated in the following observations (where for $b\in \01^n$, $Z^{b}\coloneqq Z^{b_1}\otimes Z^{b_2}\otimes\dots\otimes Z^{b_n}$):

\begin{lemma}[Weak error mitigation on the circuits $\calC_s$ outputs statistical queries to the $\Par_s$ distribution]\label{lem:SQ}

Consider error mitigation on the $n+1$-qubit $\Par$ circuit $\mC_s$ (see Def.~\ref{def:par}). That is, let $\mathcal{M} = \{(\mathbb{I}+Z^b)/{2}\}_{b\in \01^{n+1}}$, and suppose we have a weak error mitigation algorithm $\mathcal{A}(\mathcal{C}_s,\mathcal{N},\mathcal{M})$ that, with probability $1-\delta$, outputs $\tau$-accurate estimates
\beq\label{eq:SQweak}
|\hat{o}_i - \Tr[\calC_s(\ketbra{0}{0}^{\otimes n+1}) O_b)]| < \tau \qquad \text{for all $O_b \in \mathcal{M}$}
\eeqp
The outputs $\hat{o}_i$ are valid responses to statistical queries with tolerance $\tau$ for the distribution $P_s$.
\end{lemma}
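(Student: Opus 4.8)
The plan is to show that, for every observable $O_b = (\mathbb{I}+Z^b)/2$ in $\mathcal{M}$, the noiseless expectation value $\Tr[\calC_s(\ketbra{0}{0}^{\otimes n+1})\,O_b]$ is \emph{exactly} the quantity $\mathbb{E}_{x\sim U}[q_b(x,c_s(x))]$ that a statistical-query oracle $SQ(q_b,\tau)$ for the parity concept $c_s$ (with uniform input distribution $U$, i.e.\ for the distribution $P_s$ in the sense of Definition~\ref{def:SQoracle}) must return up to tolerance $\tau$, for a suitable query function $q_b$ built from $b$ alone. Granting this identity, the weak error mitigation guarantee~\eqref{eq:SQweak} says that with probability $1-\delta$ each output $\hat o_i$ satisfies $|\hat o_i - \mathbb{E}_{x\sim U}[q_b(x,c_s(x))]| < \tau$, which is precisely the condition for $\hat o_i$ to be an admissible response of $SQ(q_b,\tau)$; this is the assertion of the lemma.

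First I would unpack the circuit output. By~\eqref{def:Cs}, $\calC_s(\ketbra{0}{0}^{\otimes n+1})$ is the pure state $\tfrac{1}{\sqrt{2^n}}\sum_{x\in\01^n}\ket{x\Join c_s(x)}$, whose diagonal in the computational basis is, by Definition~\ref{def:par}, exactly the distribution $P_s$ on $\01^{n+1}$. Since $Z^b$ is diagonal with eigenvalue $(-1)^{z\cdot b}$ on $\ket{z}$, only that diagonal matters and
\begin{equation}
\Tr\!\left[\calC_s(\ketbra{0}{0}^{\otimes n+1})\,\frac{\mathbb{I}+Z^b}{2}\right] \;=\; \mathbb{E}_{z\sim P_s}\!\left[\frac{1+(-1)^{z\cdot b}}{2}\right].
\end{equation}
Now define $q_b:\01^n\times\01\to\01$ by $q_b(x,y)\coloneqq \tfrac12\bigl(1+(-1)^{(x\Join y)\cdot b}\bigr)$; this is $\{0,1\}$-valued, and it depends only on $b$ and not on the secret $s$, hence is a legitimate statistical-query function. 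Because the marginal of $P_s$ on the first $n$ bits is uniform and the last bit is $c_s(x)=x\cdot s$, drawing $z\sim P_s$ is the same as drawing $x\sim U$ and setting $z=x\Join c_s(x)$, so the right-hand side above equals $\mathbb{E}_{x\sim U}[q_b(x,c_s(x))]$. Combining this with~\eqref{eq:SQweak} closes the argument.

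I do not expect a real obstacle here; the only point that needs care is the bookkeeping translation between the $(n+1)$-bit computational-basis label $z$ produced by the circuit and the (input, label) pair $(x,y)$ used by the statistical-query model, together with the routine checks that $q_b$ lands in $\{0,1\}$ and is independent of $s$. It is worth flagging, for the downstream use of this lemma, that the correspondence is one query per observable: running weak error mitigation on $\mathcal{M}$ simulates exactly $|\mathcal{M}|$ statistical queries to $P_s$, so any lower bound on the statistical-query complexity of $\Par$ (Lemma~\ref{lem:SQparities}) transfers directly into a lower bound on $|\mathcal{M}|$.
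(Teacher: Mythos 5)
Your proof is correct and follows essentially the same route as the paper's: both identify $q_b(x,y)=\tfrac12\bigl(1+(-1)^{(x\Join y)\cdot b}\bigr)$ (equivalently $\tfrac12(1+\langle z|Z^b|z\rangle)$), set $c=c_s$ and $D=\mathrm{Unif}\{0,1\}^n$, and observe that the noiseless expectation value equals $\mathbb{E}_{z\sim P_s}[q_b(z)]$ so that the $\tau$-accuracy guarantee of weak error mitigation is exactly the admissibility condition for a statistical-query response. Your write-up is somewhat more explicit about the diagonal-only reduction and the independence of $q_b$ from $s$, but there is no substantive difference.
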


% \begin{fact}[Weak error mitigation outputs statistical queries]
% The outputs of weak error mitigation, the set of estimates of
% \beq\label{eq:expSQ}
% \{\Tr[\mathcal{C}(\ketbra{0}{0}) O_i]\}_i
% \eeq
% to error $\tau$, each take the form of a response to a (classical) statistical query of tolerance $\tau$, as long as $\lVert O_i \rVert$ is bounded for all $i$.
% \end{fact}
The observation that weak error mitigation outputs statistical queries has also been made in Ref.~\cite{AGY20} which introduces the notion of a {\em quantum statistical query} (QSQs) in the quantum PAC learning setting. 
We will come back to this after we state our main theorem. 
\begin{proof}
We will prove that each $\tau$-accurate expectation value output by $\mA$ that satisfies Eq.~\eqref{eq:SQweak} takes the form of a response of a statistical query oracle to a statistical query of tolerance $\tau$, by specifying what $c, q, D$ from Definition \ref{def:SQoracle} correspond to. Let us denote the circuit's clean output state vector as $\calC_s\ket{0}^{n+1} = \ket{\psi} = %\frac{1}{\sqrt{2^n}} 
2^{-n/2}
\sum_{x\in \01^n} \ket{x \Join c_s(x)}$. Then it is easy to see, from inspecting the basis vectors in the support of $\ket{\psi}$, that the desired expectation values can be rewritten as
\begin{align}
\Tr(\ketbra{\psi}{\psi} (\mathbb{I}+Z^b)/2) &= \mathbb{E}_{z\sim P_{s}}[q_b(z)] \nonumber 
\\
&= \mathbb{E}_{x\sim \text{Unif}\{0,1\}^n}[q_b(x,c_s(x))]
\end{align}
where $q_b: \01^{n+1} \rightarrow \01$ is defined as
\beq
q_b(z) = \frac{1+\langle z| Z^b | z \rangle}{2}
\eeqp
That is, $q=q_b, c= c_s$ and $D = \text{Unif}\{0,1\}^n$.
\end{proof}
The re-scaling of the observables $Z^b$ is cosmetic; it is merely to obtain a set of observables whose eigenvalues take values in $\01$ due to the way we have defined statistical queries in 
Def.~\ref{def:SQoracle}. In a similar fashion, while this lemma is written for the observables $Z^b$, one could write an analogous lemma for observables that are all diagonal in some other common eigenbasis, by simply applying the corresponding basis change at the end of the circuit $\mC_s$ and re-scaling the value of the tolerance parameter $\tau$ for the statistical query.

This observation will be crucial to the main theorem of this section (Theorem \ref{thm:WeaktoStrong}), which states that with sub-exponentially-many expectation values of observables which share the same eigenbasis, \text{weak} error mitigation cannot imply strong error mitigation in the worst-case. We now state this more formally:
%%%%%%%%%

\begin{theorem}[Exponentially-many expectation values of observables in the same eigenbasis are required to output samples from that basis]\label{thm:WeaktoStrong}
There is a class of $n$-qubit circuits, such that for every circuit $\mC$ in the class (denote the clean circuit output state vector as $\ket{\psi} = \mC(\ket{0})$) and the set of observables $\mathcal{M} = \{Z^b\}_{b\in \01^{n+1}}$ with $m = o(\tau^2 \cdot 2^n)$, the following holds:

No algorithm $\mathcal{B}$ exists that takes as input the output of weak error mitigation $\mA(\mC,\mN,\mathcal{M})$---the estimates $\{\hat{z}_b\}_{b \in \01^m}$ with $|\hat{z}_b - \bra{\psi}Z_b\ket{\psi}| < \tau$ for all $b$---and outputs $O(n)$-many samples from some distribution $p$ where
\beq
d_{TV}(p,D_{\ket{\psi}}) \leq 1/16
\eeqc
where $D_{\ket{\psi}}$ is the computational basis distribution on $\ket{\psi}$.
\end{theorem}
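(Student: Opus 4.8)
The plan is to argue by contradiction, taking the claimed class to be the parity circuits $\{\mC_s\}_{s\in\01^n}$ of Definition~\ref{def:par} and showing that any bootstrapping algorithm $\mathcal{B}$ would yield a statistical-query learner for $\Par$ violating Lemma~\ref{lem:SQparities}. Recall that $\mC_s$ prepares $\ket{\psi}$ with $D_{\ket{\psi}} = P_s$. So suppose, for contradiction, that some $\mathcal{B}$ as in the statement exists: it reads at most $m = o(\tau^2 2^n)$ of the estimates $\hat z_b$ (possibly adaptively choosing which ones) and returns $\Ord{n}$ samples from a distribution $p$ with $d_{TV}(p,P_s) \le 1/16$.

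First I would build an SQ learner $L$ for $\Par$ with respect to the uniform input distribution $D = \mathrm{Unif}(\01^n)$, for an unknown secret $s$. Whenever $\mathcal{B}$ asks for the estimate $\hat z_b$ of $\bra{\psi}Z^b\ket{\psi}$, the learner issues the statistical query $(q_b,\tau)$ with $q_b(z) = (1+\bra{z}Z^b\ket{z})/2$; by Lemma~\ref{lem:SQ} (up to the cosmetic rescaling between $Z^b$ and $(\mathbb{I}+Z^b)/2$) the oracle's reply is a legitimate $\tau$-accurate value of $\bra{\psi}Z^b\ket{\psi}$, so feeding it to $\mathcal{B}$ faithfully reproduces the input distribution $\mathcal{B}$ expects. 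Since $\mathcal{B}$ consults only $m$ estimates, $L$ makes only $m$ statistical queries, all of tolerance $\tau$.

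Next, $L$ runs $\mathcal{B}$ to obtain $\Ord{n}$ samples from $p$ and applies the \emph{proper} hypothesis-selection procedure of Theorem~\ref{thm:hs} with candidate class $\mathcal{Q} = \{P_{s'}\}_{s'\in\01^n}$ and accuracy $\eps = 1/8$; since $\log|\mathcal{Q}| = n$ this needs only $\Ord{n/\eps^2} = \Ord{n}$ samples, matching what $\mathcal{B}$ supplies. The returned $P_{s^\ast}\in\mathcal{Q}$ then satisfies $d_{TV}(p,P_{s^\ast}) \le 3\min_{s'}d_{TV}(p,P_{s'}) + \eps \le 3/16 + 1/8$, whence by the triangle inequality $d_{TV}(P_s,P_{s^\ast}) \le 1/16 + 3/16 + 1/8 = 3/8 < 1/2$. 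By Lemma~\ref{lem:paritiesdist} any two distinct parity distributions are $1/2$-far in total variation, so this forces $s^\ast = s$, and $L$ outputs $h = c_{s^\ast} = c_s$ --- i.e. it learns the parity \emph{exactly}. A union bound over the constant failure probabilities of the error-mitigation estimates, of $\mathcal{B}$, and of hypothesis selection keeps the overall success probability bounded below by a constant. But $L$ has made only $m = o(\tau^2 2^n)$ queries of tolerance $\tau$ while outputting $h$ with $\mathbb{P}_{x\sim D}[h(x)\neq c_s(x)] = 0$, contradicting the $\Omega(\tau_0^2 2^n)$ lower bound of Lemma~\ref{lem:SQparities} with $\tau_0 = \tau$. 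Hence no such $\mathcal{B}$ exists.

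The only genuinely routine parts are the constant chase in the triangle inequality (this is exactly why the threshold $1/16$ appears: one needs $4\times(\text{threshold}) + \eps < 1/2$) and the union bound over failure probabilities. The load-bearing step --- and the one I would flag as the crux --- is the identification, via Lemma~\ref{lem:SQ}, of each error-mitigated expectation value with a response of a statistical-query oracle for $P_s$: this is what converts a sample-complexity statement about error mitigation into a query-complexity statement to which the classical SQ lower bound for learning parities applies. It is also precisely what restricts the argument to families of observables sharing a common eigenbasis, since then a single basis-change layer appended to $\mC_s$ suffices and the estimates $\hat z_b$ really are SQ answers for one fixed distribution.
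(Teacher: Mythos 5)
Your proposal is correct and follows essentially the same route as the paper's proof: instantiate the circuit class as the $\Par$ circuits, use Lemma~\ref{lem:SQ} to reinterpret each queried expectation-value estimate as a statistical query to $P_s$, feed the $\Ord{n}$ samples from $\mathcal{B}$ into Yatracos's proper hypothesis selection over $\{P_{s'}\}_{s'}$, and use the $1/2$-separation of parity distributions (Lemma~\ref{lem:paritiesdist}) plus the triangle inequality to recover $s$ exactly, contradicting the SQ lower bound of Lemma~\ref{lem:SQparities}. The only differences are cosmetic (your choice of $\eps=1/8$ versus the paper's $1/16$ in the hypothesis-selection step, and your more explicit packaging of the reduction as a simulated SQ learner), and your identification of the SQ reinterpretation as the load-bearing step matches the paper's own emphasis.
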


After we prove this, we will remark that not even the power to choose the observables {\em adaptively} will make it possible to transform a sub-exponential number of expectation values into linearly-many samples.

% \begin{theorem}[Superpolynomially-many observables from weak error mitigation are required for strong error mitigation]\label{thm:WeaktoStrong2}
% Suppose we have a weak error mitigation algorithm $\mathcal{A}(\mathcal{C},\mathcal{N},\mathcal{M})$ in the setting of Lemma \ref{lem:SQ}. That is, $\mathcal{A}$ outputs a set of estimates $\{\hat{o}_i\}_{i=1}^m$ where $m = O(poly(n))$, such that, with probability $1-\delta$, $|\hat{o}_i - \bra{\psi}O_i\ket{\psi}| < \eps$. 

% Then there does not exist an algorithm $\mathcal{B}$ that takes as input the output of $\mathcal{A}$ and outputs $O(n)$-many samples from some distribution $p'$ where
% \beq
% d_{TV}(p',D_{\ket{\psi}}) \leq \eps
% \eeqp
% Hence, $poly(n)$-many outputs of weak error mitigation do not imply strong error mitigation in the worst case. 
% \end{theorem}

\begin{proof}[Proof of Theorem \ref{thm:WeaktoStrong}]
Ref.~\cite{ouramazingTgatepaper} has shown that there is a class of $n$-qubit Clifford circuits whose output distributions are exactly the set of $\Par$ distributions. Suppose to the contrary that for some circuit $\mC$ in this class whose output distribution is some $P_s \in \Par$, there exists some set of observables $\mathcal{M}$ with $|\mathcal{M}|= o(\tau^2 \cdot 2^n)$ and some algorithm $\mathcal{B}$, which takes as input the estimates $\hat{z}_i$ of weak error mitigation up to error $\tau$ and outputs $O(n)$ samples from some distribution $p$ such that $d_{TV}(p,P_s) \leq 1/16$. 

We will show that we can use these samples to solve for the hidden string and thus solve the problem of $\Par$. However, this would imply a contradiction: Recall from Lemma \ref{lem:SQ} that each expectation value estimate is a statistical query to some distribution from the class of $\Par$, but as stated in Lemma \ref{lem:SQparities}, there is no \emph{statistical query} algorithm to solve $\Par$ with $o(\tau^2 \cdot 2^n)$ statistical queries.

In the rest of this proof, we will explain how to use the samples output by the presumed $\mathcal{B}$ to solve $\Par$. The key is simply to run the hypothesis selection algorithm of Ref.~\cite{Yatracos85} (Theorem \ref{thm:hs}), with the set of candidate hypotheses being the set of all Clifford distributions encoding $\Par$. This algorithm will take as input the $O(n)$ samples from $p$ and by the guarantees of Theorem \ref{thm:hs}, will output a distribution $P_{s'} \in \Par$ such that
\beq\label{eq:hs_guarantee}
d_{TV}(p,P_{s'}) \leq 3 \min_{\tilde{s} \in \01^n} d_{TV}(p, P_{\tilde{s}}) + 1/16
\eeq
Now, consider that by the guarantees of $\mathcal{B}$, $3 d_{TV}(p,P_{s}) + 1/16 = 1/4$ and so Eq.~\eqref{eq:hs_guarantee} yields that $d_{TV}(p,P_{s'}) < 1/4.$ But as stated in Lemma \ref{lem:paritiesdist}, for all $s, s'$,
\beq\label{eq:half}
 d_{TV}(P_{s}, P_{s'}) \geq 1/2
\eeqc
 and so by the triangle inequality, 
 \beq
 \arg \min_{\tilde{s} \in \01^n} d_{TV}(p, P_{\tilde{s}}) = s
 \eeq
i.e., hypothesis selection recovers the 
hidden string of the $\Par$ problem successfully. This implies the contradiction. 
\end{proof}

\textbf{Extensions of our proof.}  
We remark that our bound holds against even an {\em adaptive} choice of 
observables in error mitigation because the statistical query hardness of $\Par$ holds against an adaptive choice of statistical queries. 
However, we reiterate a limitation of our result: we can only rule out obtaining strong error mitigation from weak error mitigation with sub-exponentially-many {\em observables whose eigenbasis is the basis we desire to sample from}. It is natural to ask if we can lift this restriction. Here, we point out that the discussion in Ref.~\cite{francaGameQuantumAdvantage2020} shows that for `flat' circuit output distributions, and a specially-crafted set of observables with multiple different eigenbases, only polynomially-many expectation values are needed to obtain a sampler (hence we cannot lift the restriction in general). We could also take a different tack: our proof is based on the hardness of $\Par$ from {\em classical} statistical queries. 

As has been observed by Ref.~\cite{AGY20}, it is also possible to define {\em quantum} statistical queries for a given unknown distribution $D$, and unknown function $c$, and QSQs generalize classical statistical queries by allowing for observables not diagonal in the eigenbasis defined by $c$. Hence, if we knew of a problem that was hard for certain classes of {\em quantum} statistical queries (in the sense that exponentially-many QSQs are needed to solve them), our proof technique could then be applied to say something about the hardness of obtaining strong error mitigation from weak error mitigation without the restriction mentioned at the beginning of this paragraph. Unfortunately we do not know of any such problems.  

\section{Outline of the proof of Theorem~\ref{thm:2_intro}}

To prove Theorem~\ref{thm:2_intro}, and more generally Theorem~4
%\ref{thm:overall_lightcones} 
in the Supplemental Material, we engineer a family of quantum circuits $\mC$ that converge very quickly to the maximally-mixed state under noise, so that $D(\mC'(\ketbra{z}{z})\rVert \mathbb{I}/2^n) \leq n+ \log \Tr(\mC'(\ketbra{z}{z})^2) \leq p^{nD}$. Figure \ref{fig:error_mitigation_englament} illustrates the intuition behind our construction. A toy model of our construction is a circuit consisting of alternating noiseless 2-designs and depolarizing noise. The 2-designs are from an ensemble of Clifford circuits that is {\em Pauli mixing}, which means that it maps each Pauli to a uniformly random Pauli. Intuitively, such random circuits spread entanglement very fast, as such a uniform distribution puts significant weight on the set of higher Hamming-weight, i.e., more non-local, Paulis. 

But this renders such circuits more sensitive to noise. Our proof illustrates this phenomenon quantitatively by tracking the evolution of the purity $\Tr(\rho^2)$ of some state $\rho$ progressing through our circuit. Expanding the purity in terms of the Pauli basis and studying the distribution over contributions by Paulis (grouped by Hamming weight), we see that applying a Pauli-mixing circuit shifts the distribution toward higher-weight Paulis. Such Paulis, however, are damped more by the next layer of depolarizing noise, since depolarizing noise causes the coefficient of the Pauli expansion to decay exponentially in its Hamming weight. We then iterate the argument for every additional layer of a 2-design followed by noise. Formal definitions of the Pauli mixing property as well as a rigorous version of this argument, dealing also with noise {\em within} the 2-designs, are provided in the Supplementary Material. %\ynote{Sumeet please shift this to the Methods, as well as insert references in the preceding paragraph.}
}
\medskip

\newpage
\pagebreak
\clearpage
\foreach \x in {1,...,\the\pdflastximagepages}
{
	\clearpage
	\includepdf[pages={\x,{}}]{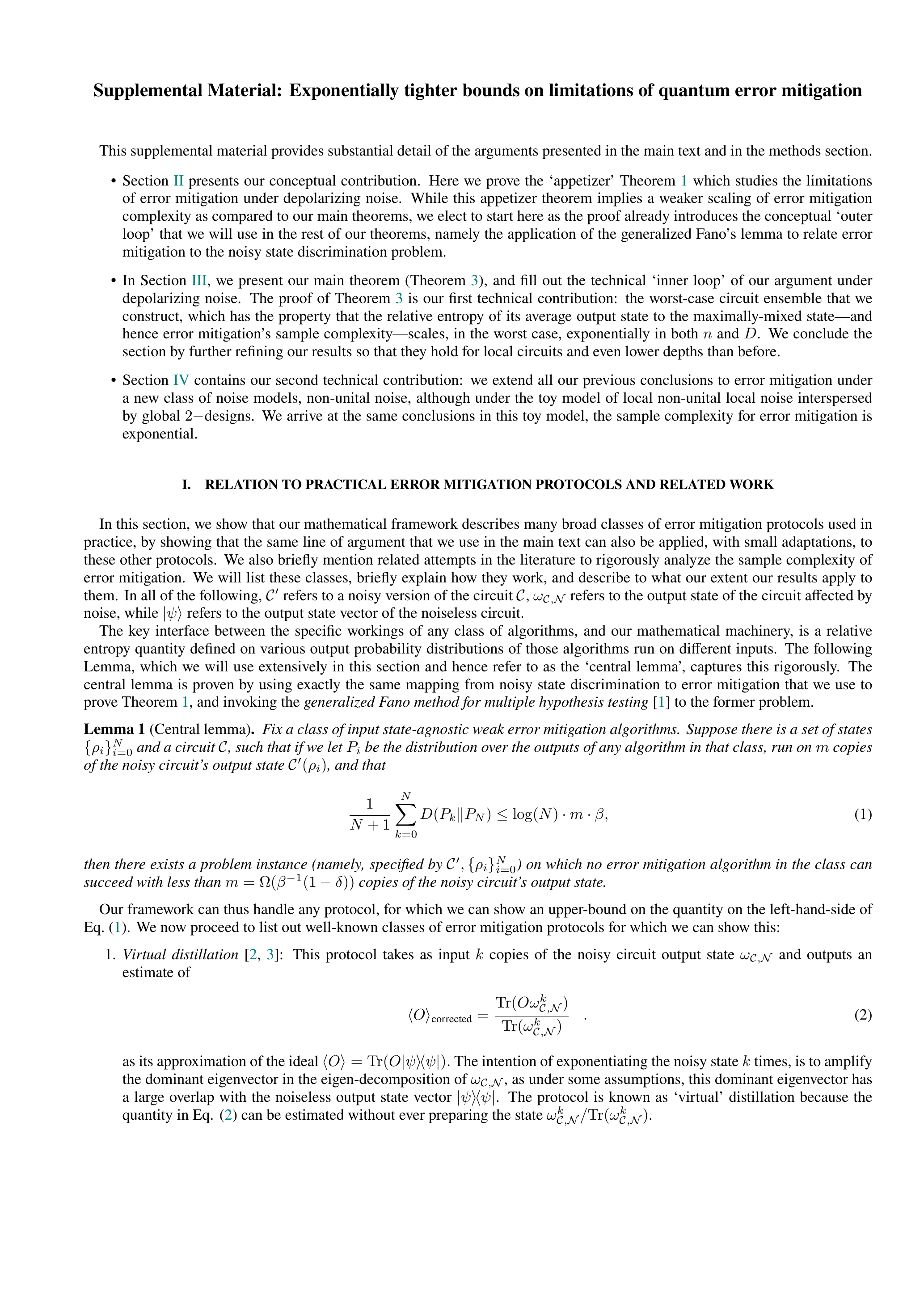}
}


\begin{thebibliography}{49}%
\makeatletter
\providecommand \@ifxundefined [1]{%
 \@ifx{#1\undefined}
}%
\providecommand \@ifnum [1]{%
 \ifnum #1\expandafter \@firstoftwo
 \else \expandafter \@secondoftwo
 \fi
}%
\providecommand \@ifx [1]{%
 \ifx #1\expandafter \@firstoftwo
 \else \expandafter \@secondoftwo
 \fi
}%
\providecommand \natexlab [1]{#1}%
\providecommand \enquote  [1]{``#1''}%
\providecommand \bibnamefont  [1]{#1}%
\providecommand \bibfnamefont [1]{#1}%
\providecommand \citenamefont [1]{#1}%
\providecommand \href@noop [0]{\@secondoftwo}%
\providecommand \href [0]{\begingroup \@sanitize@url \@href}%
\providecommand \@href[1]{\@@startlink{#1}\@@href}%
\providecommand \@@href[1]{\endgroup#1\@@endlink}%
\providecommand \@sanitize@url [0]{\catcode `\\12\catcode `\$12\catcode
  `\&12\catcode `\#12\catcode `\^12\catcode `\_12\catcode `\%12\relax}%
\providecommand \@@startlink[1]{}%
\providecommand \@@endlink[0]{}%
\providecommand \url  [0]{\begingroup\@sanitize@url \@url }%
\providecommand \@url [1]{\endgroup\@href {#1}{\urlprefix }}%
\providecommand \urlprefix  [0]{URL }%
\providecommand \Eprint [0]{\href }%
\providecommand \doibase [0]{https://doi.org/}%
\providecommand \selectlanguage [0]{\@gobble}%
\providecommand \bibinfo  [0]{\@secondoftwo}%
\providecommand \bibfield  [0]{\@secondoftwo}%
\providecommand \translation [1]{[#1]}%
\providecommand \BibitemOpen [0]{}%
\providecommand \bibitemStop [0]{}%
\providecommand \bibitemNoStop [0]{.\EOS\space}%
\providecommand \EOS [0]{\spacefactor3000\relax}%
\providecommand \BibitemShut  [1]{\csname bibitem#1\endcsname}%
\let\auto@bib@innerbib\@empty
%</preamble>
\bibitem [{\citenamefont {Feynman}(1986)}]{Feynman-1986}%
  \BibitemOpen
  \bibfield  {author} {\bibinfo {author} {\bibfnamefont {R.~P.}\ \bibnamefont
  {Feynman}},\ }\bibfield  {title} {\bibinfo {title} {Quantum mechanical
  computers},\ }\href {https://doi.org/10.1007/BF01886518} {\bibfield
  {journal} {\bibinfo  {journal} {Found. Phys.}\ }\textbf {\bibinfo {volume}
  {16}},\ \bibinfo {pages} {507} (\bibinfo {year} {1986})}\BibitemShut
  {NoStop}%
\bibitem [{\citenamefont {Shor}(1994)}]{Shor-1994}%
  \BibitemOpen
  \bibfield  {author} {\bibinfo {author} {\bibfnamefont {P.~W.}\ \bibnamefont
  {Shor}},\ }\bibfield  {title} {\bibinfo {title} {Algorithms for quantum
  computation: discrete logarithms and factoring},\ }in\ \href
  {https://doi.org/10.1109/sfcs.1994.365700} {\emph {\bibinfo {booktitle}
  {{Proc. 35th Ann. Symp. Found. Comp. Sc.}}}}\ (\bibinfo {year} {1994})\ pp.\
  \bibinfo {pages} {124--134}\BibitemShut {NoStop}%
\bibitem [{\citenamefont {Shor}(1995)}]{PhysRevA.52.R2493}%
  \BibitemOpen
  \bibfield  {author} {\bibinfo {author} {\bibfnamefont {P.~W.}\ \bibnamefont
  {Shor}},\ }\bibfield  {title} {\bibinfo {title} {Scheme for reducing
  decoherence in quantum computer memory},\ }\href
  {https://doi.org/10.1103/PhysRevA.52.R2493} {\bibfield  {journal} {\bibinfo
  {journal} {Phys. Rev. A}\ }\textbf {\bibinfo {volume} {52}},\ \bibinfo
  {pages} {R2493} (\bibinfo {year} {1995})}\BibitemShut {NoStop}%
\bibitem [{\citenamefont {Gottesman}(2009)}]{QEC2}%
  \BibitemOpen
  \bibfield  {author} {\bibinfo {author} {\bibfnamefont {D.}~\bibnamefont
  {Gottesman}},\ }\bibfield  {title} {\bibinfo {title} {An introduction to
  quantum error correction and fault-tolerant quantum computation},\ }\href
  {https://doi.org/10.48550/arXiv.0904.2557} {\bibfield  {journal} {\bibinfo
  {journal} {arXiv:0904.2557}\ } (\bibinfo {year} {2009})}\BibitemShut
  {NoStop}%
\bibitem [{\citenamefont {Campbell}\ \emph {et~al.}(2017)\citenamefont
  {Campbell}, \citenamefont {Terhal},\ and\ \citenamefont {Vuillot}}]{Roads}%
  \BibitemOpen
  \bibfield  {author} {\bibinfo {author} {\bibfnamefont {E.~T.}\ \bibnamefont
  {Campbell}}, \bibinfo {author} {\bibfnamefont {B.~M.}\ \bibnamefont
  {Terhal}},\ and\ \bibinfo {author} {\bibfnamefont {C.}~\bibnamefont
  {Vuillot}},\ }\bibfield  {title} {\bibinfo {title} {Roads towards
  fault-tolerant universal quantum computation},\ }\href
  {https://doi.org/10.1038/nature23460} {\bibfield  {journal} {\bibinfo
  {journal} {Nature}\ }\textbf {\bibinfo {volume} {549}},\ \bibinfo {pages}
  {172} (\bibinfo {year} {2017})}\BibitemShut {NoStop}%
\bibitem [{\citenamefont {Li}\ and\ \citenamefont
  {Benjamin}(2017)}]{PhysRevX.7.021050}%
  \BibitemOpen
  \bibfield  {author} {\bibinfo {author} {\bibfnamefont {Y.}~\bibnamefont
  {Li}}\ and\ \bibinfo {author} {\bibfnamefont {S.~C.}\ \bibnamefont
  {Benjamin}},\ }\bibfield  {title} {\bibinfo {title} {Efficient variational
  quantum simulator incorporating active error minimization},\ }\href
  {https://doi.org/10.1103/PhysRevX.7.021050} {\bibfield  {journal} {\bibinfo
  {journal} {Phys. Rev. X}\ }\textbf {\bibinfo {volume} {7}},\ \bibinfo {pages}
  {021050} (\bibinfo {year} {2017})}\BibitemShut {NoStop}%
\bibitem [{\citenamefont {Temme}\ \emph {et~al.}(2017)\citenamefont {Temme},
  \citenamefont {Bravyi},\ and\ \citenamefont
  {Gambetta}}]{PhysRevLett.119.180509}%
  \BibitemOpen
  \bibfield  {author} {\bibinfo {author} {\bibfnamefont {K.}~\bibnamefont
  {Temme}}, \bibinfo {author} {\bibfnamefont {S.}~\bibnamefont {Bravyi}},\ and\
  \bibinfo {author} {\bibfnamefont {J.~M.}\ \bibnamefont {Gambetta}},\
  }\bibfield  {title} {\bibinfo {title} {Error mitigation for short-depth
  quantum circuits},\ }\href {https://doi.org/10.1103/PhysRevLett.119.180509}
  {\bibfield  {journal} {\bibinfo  {journal} {Phys. Rev. Lett.}\ }\textbf
  {\bibinfo {volume} {119}},\ \bibinfo {pages} {180509} (\bibinfo {year}
  {2017})}\BibitemShut {NoStop}%
\bibitem [{\citenamefont {Endo}\ \emph
  {et~al.}(2018{\natexlab{a}})\citenamefont {Endo}, \citenamefont {Benjamin},\
  and\ \citenamefont {Li}}]{PhysRevX.8.031027}%
  \BibitemOpen
  \bibfield  {author} {\bibinfo {author} {\bibfnamefont {S.}~\bibnamefont
  {Endo}}, \bibinfo {author} {\bibfnamefont {S.~C.}\ \bibnamefont {Benjamin}},\
  and\ \bibinfo {author} {\bibfnamefont {Y.}~\bibnamefont {Li}},\ }\bibfield
  {title} {\bibinfo {title} {Practical quantum error mitigation for near-future
  applications},\ }\href {https://doi.org/10.1103/PhysRevX.8.031027} {\bibfield
   {journal} {\bibinfo  {journal} {Phys. Rev. X}\ }\textbf {\bibinfo {volume}
  {8}},\ \bibinfo {pages} {031027} (\bibinfo {year}
  {2018}{\natexlab{a}})}\BibitemShut {NoStop}%
\bibitem [{\citenamefont {Cai}\ \emph {et~al.}(2022)\citenamefont {Cai},
  \citenamefont {Babbush}, \citenamefont {Benjamin}, \citenamefont {Endo},
  \citenamefont {Huggins}, \citenamefont {Li}, \citenamefont {McClean},\ and\
  \citenamefont {O'Brien}}]{MitigationReview}%
  \BibitemOpen
  \bibfield  {author} {\bibinfo {author} {\bibfnamefont {Z.}~\bibnamefont
  {Cai}}, \bibinfo {author} {\bibfnamefont {R.}~\bibnamefont {Babbush}},
  \bibinfo {author} {\bibfnamefont {S.~C.}\ \bibnamefont {Benjamin}}, \bibinfo
  {author} {\bibfnamefont {S.}~\bibnamefont {Endo}}, \bibinfo {author}
  {\bibfnamefont {W.~J.}\ \bibnamefont {Huggins}}, \bibinfo {author}
  {\bibfnamefont {Y.}~\bibnamefont {Li}}, \bibinfo {author} {\bibfnamefont
  {J.~R.}\ \bibnamefont {McClean}},\ and\ \bibinfo {author} {\bibfnamefont
  {T.~E.}\ \bibnamefont {O'Brien}},\ }\bibfield  {title} {\bibinfo {title}
  {{Quantum Error Mitigation}},\ }\href@noop {} {\bibfield  {journal} {\bibinfo
   {journal} {arXiv:2210.00921}\ } (\bibinfo {year} {2022})}\BibitemShut
  {NoStop}%
\bibitem [{\citenamefont {{van den Berg}}\ \emph {et~al.}(2023)\citenamefont
  {{van den Berg}}, \citenamefont {{Minev}}, \citenamefont {{Kandala}},\ and\
  \citenamefont {{Temme}}}]{IBMexponential}%
  \BibitemOpen
  \bibfield  {author} {\bibinfo {author} {\bibfnamefont {E.}~\bibnamefont {{van
  den Berg}}}, \bibinfo {author} {\bibfnamefont {Z.~K.}\ \bibnamefont
  {{Minev}}}, \bibinfo {author} {\bibfnamefont {A.}~\bibnamefont {{Kandala}}},\
  and\ \bibinfo {author} {\bibfnamefont {K.}~\bibnamefont {{Temme}}},\
  }\bibfield  {title} {\bibinfo {title} {{Probabilistic error cancellation with
  sparse Pauli-Lindblad models on noisy quantum processors}},\ }\href
  {https://doi.org/10.1038/s41567-023-02042-2} {\bibfield  {journal} {\bibinfo
  {journal} {Nature Physics}\ }\textbf {\bibinfo {volume} {19}},\ \bibinfo
  {pages} {1116} (\bibinfo {year} {2023})},\ \Eprint
  {https://arxiv.org/abs/2201.09866} {2201.09866} \BibitemShut {NoStop}%
\bibitem [{\citenamefont {McClean}\ \emph {et~al.}(2016)\citenamefont
  {McClean}, \citenamefont {Romero}, \citenamefont {Babbush},\ and\
  \citenamefont {Aspuru-Guzik}}]{McClean_2016}%
  \BibitemOpen
  \bibfield  {author} {\bibinfo {author} {\bibfnamefont {J.~R.}\ \bibnamefont
  {McClean}}, \bibinfo {author} {\bibfnamefont {J.}~\bibnamefont {Romero}},
  \bibinfo {author} {\bibfnamefont {R.}~\bibnamefont {Babbush}},\ and\ \bibinfo
  {author} {\bibfnamefont {A.}~\bibnamefont {Aspuru-Guzik}},\ }\bibfield
  {title} {\bibinfo {title} {The theory of variational hybrid quantum-classical
  algorithms},\ }\href {https://doi.org/10.1088/1367-2630/18/2/023023}
  {\bibfield  {journal} {\bibinfo  {journal} {New J. Phys.}\ }\textbf {\bibinfo
  {volume} {18}},\ \bibinfo {pages} {023023} (\bibinfo {year}
  {2016})}\BibitemShut {NoStop}%
\bibitem [{\citenamefont {Farhi}\ \emph {et~al.}(2014)\citenamefont {Farhi},
  \citenamefont {Goldstone},\ and\ \citenamefont {Gutmann}}]{QAOA}%
  \BibitemOpen
  \bibfield  {author} {\bibinfo {author} {\bibfnamefont {E.}~\bibnamefont
  {Farhi}}, \bibinfo {author} {\bibfnamefont {J.}~\bibnamefont {Goldstone}},\
  and\ \bibinfo {author} {\bibfnamefont {S.}~\bibnamefont {Gutmann}},\
  }\bibfield  {title} {\bibinfo {title} {A quantum approximate optimization
  algorithm},\ }\href {https://doi.org/10.48550/arXiv.1411.4028} {\bibfield
  {journal} {\bibinfo  {journal} {arXiv:1411.4028}\ } (\bibinfo {year}
  {2014})}\BibitemShut {NoStop}%
\bibitem [{\citenamefont {Huggins}\ \emph {et~al.}(2021)\citenamefont
  {Huggins}, \citenamefont {McArdle}, \citenamefont {O'Brien}, \citenamefont
  {Lee}, \citenamefont {Rubin}, \citenamefont {Boixo}, \citenamefont {Whaley},
  \citenamefont {Babbush},\ and\ \citenamefont {McClean}}]{huggins2021virtual}%
  \BibitemOpen
  \bibfield  {author} {\bibinfo {author} {\bibfnamefont {W.~J.}\ \bibnamefont
  {Huggins}}, \bibinfo {author} {\bibfnamefont {S.}~\bibnamefont {McArdle}},
  \bibinfo {author} {\bibfnamefont {T.~E.}\ \bibnamefont {O'Brien}}, \bibinfo
  {author} {\bibfnamefont {J.}~\bibnamefont {Lee}}, \bibinfo {author}
  {\bibfnamefont {N.~C.}\ \bibnamefont {Rubin}}, \bibinfo {author}
  {\bibfnamefont {S.}~\bibnamefont {Boixo}}, \bibinfo {author} {\bibfnamefont
  {K.~B.}\ \bibnamefont {Whaley}}, \bibinfo {author} {\bibfnamefont
  {R.}~\bibnamefont {Babbush}},\ and\ \bibinfo {author} {\bibfnamefont {J.~R.}\
  \bibnamefont {McClean}},\ }\bibfield  {title} {\bibinfo {title} {Virtual
  distillation for quantum error mitigation},\ }\href
  {https://doi.org/10.1103/PhysRevX.11.041036} {\bibfield  {journal} {\bibinfo
  {journal} {Phys. Rev. X}\ }\textbf {\bibinfo {volume} {11}},\ \bibinfo
  {pages} {041036} (\bibinfo {year} {2021})}\BibitemShut {NoStop}%
\bibitem [{\citenamefont {Koczor}(2022)}]{koczor2021exponential}%
  \BibitemOpen
  \bibfield  {author} {\bibinfo {author} {\bibfnamefont {B.}~\bibnamefont
  {Koczor}},\ }\bibfield  {title} {\bibinfo {title} {Exponential error
  suppression for near-term quantum devices},\ }\href
  {https://doi.org/10.1103/PhysRevX.11.031057} {\bibfield  {journal} {\bibinfo
  {journal} {Phys. Rev. X}\ }\textbf {\bibinfo {volume} {11}},\ \bibinfo
  {pages} {031057} (\bibinfo {year} {2022})}\BibitemShut {NoStop}%
\bibitem [{\citenamefont {Czarnik}\ \emph {et~al.}(2022)\citenamefont
  {Czarnik}, \citenamefont {Arrasmith}, \citenamefont {Coles},\ and\
  \citenamefont {Cincio}}]{czarnik2021error}%
  \BibitemOpen
  \bibfield  {author} {\bibinfo {author} {\bibfnamefont {P.}~\bibnamefont
  {Czarnik}}, \bibinfo {author} {\bibfnamefont {A.}~\bibnamefont {Arrasmith}},
  \bibinfo {author} {\bibfnamefont {P.~J.}\ \bibnamefont {Coles}},\ and\
  \bibinfo {author} {\bibfnamefont {L.}~\bibnamefont {Cincio}},\ }\bibfield
  {title} {\bibinfo {title} {{Error mitigation with Clifford quantum-circuit
  data}},\ }\href {https://doi.org/10.22331/q-2021-11-26-592} {\bibfield
  {journal} {\bibinfo  {journal} {Quantum}\ }\textbf {\bibinfo {volume} {5}},\
  \bibinfo {pages} {592} (\bibinfo {year} {2022})}\BibitemShut {NoStop}%
\bibitem [{\citenamefont {Takagi}\ \emph
  {et~al.}(2022{\natexlab{a}})\citenamefont {Takagi}, \citenamefont {Endo},
  \citenamefont {Minagawa},\ and\ \citenamefont {Gu}}]{TEMG21}%
  \BibitemOpen
  \bibfield  {author} {\bibinfo {author} {\bibfnamefont {R.}~\bibnamefont
  {Takagi}}, \bibinfo {author} {\bibfnamefont {S.}~\bibnamefont {Endo}},
  \bibinfo {author} {\bibfnamefont {S.}~\bibnamefont {Minagawa}},\ and\
  \bibinfo {author} {\bibfnamefont {M.}~\bibnamefont {Gu}},\ }\bibfield
  {title} {\bibinfo {title} {Fundamental limits of quantum error mitigation},\
  }\href {https://doi.org/10.1038/s41534-022-00618-z} {\bibfield  {journal}
  {\bibinfo  {journal} {npj Quant. Inf.}\ }\textbf {\bibinfo {volume} {8}},\
  \bibinfo {pages} {114} (\bibinfo {year} {2022}{\natexlab{a}})}\BibitemShut
  {NoStop}%
\bibitem [{\citenamefont {Takagi}\ \emph
  {et~al.}(2022{\natexlab{b}})\citenamefont {Takagi}, \citenamefont {Tajima},\
  and\ \citenamefont {Gu}}]{TTG22}%
  \BibitemOpen
  \bibfield  {author} {\bibinfo {author} {\bibfnamefont {R.}~\bibnamefont
  {Takagi}}, \bibinfo {author} {\bibfnamefont {H.}~\bibnamefont {Tajima}},\
  and\ \bibinfo {author} {\bibfnamefont {M.}~\bibnamefont {Gu}},\ }\href@noop
  {} {\bibinfo {title} {Universal sample lower bounds for quantum error
  mitigation}} (\bibinfo {year} {2022}{\natexlab{b}}),\ \Eprint
  {https://arxiv.org/abs/2208.09178} {arXiv:2208.09178} \BibitemShut {NoStop}%
\bibitem [{\citenamefont {Tsubouchi}\ \emph {et~al.}(2022)\citenamefont
  {Tsubouchi}, \citenamefont {Sagawa},\ and\ \citenamefont
  {Yoshioka}}]{Fisher22}%
  \BibitemOpen
  \bibfield  {author} {\bibinfo {author} {\bibfnamefont {K.}~\bibnamefont
  {Tsubouchi}}, \bibinfo {author} {\bibfnamefont {T.}~\bibnamefont {Sagawa}},\
  and\ \bibinfo {author} {\bibfnamefont {N.}~\bibnamefont {Yoshioka}},\
  }\href@noop {} {\bibinfo {title} {Universal cost bound of quantum error
  mitigation based on quantum estimation theory}} (\bibinfo {year} {2022}),\
  \Eprint {https://arxiv.org/abs/2208.09385} {arXiv:2208.09385} \BibitemShut
  {NoStop}%
\bibitem [{\citenamefont {Deshpande}\ \emph {et~al.}(2022)\citenamefont
  {Deshpande}, \citenamefont {Niroula}, \citenamefont {Shtanko}, \citenamefont
  {Gorshkov}, \citenamefont {Fefferman},\ and\ \citenamefont
  {Gullans}}]{PRXQuantum.3.040329}%
  \BibitemOpen
  \bibfield  {author} {\bibinfo {author} {\bibfnamefont {A.}~\bibnamefont
  {Deshpande}}, \bibinfo {author} {\bibfnamefont {P.}~\bibnamefont {Niroula}},
  \bibinfo {author} {\bibfnamefont {O.}~\bibnamefont {Shtanko}}, \bibinfo
  {author} {\bibfnamefont {A.~V.}\ \bibnamefont {Gorshkov}}, \bibinfo {author}
  {\bibfnamefont {B.}~\bibnamefont {Fefferman}},\ and\ \bibinfo {author}
  {\bibfnamefont {M.~J.}\ \bibnamefont {Gullans}},\ }\bibfield  {title}
  {\bibinfo {title} {Tight bounds on the convergence of noisy random circuits
  to the uniform distribution},\ }\href
  {https://doi.org/10.1103/PRXQuantum.3.040329} {\bibfield  {journal} {\bibinfo
   {journal} {PRX Quantum}\ }\textbf {\bibinfo {volume} {3}},\ \bibinfo {pages}
  {040329} (\bibinfo {year} {2022})}\BibitemShut {NoStop}%
\bibitem [{\citenamefont {M{\"u}ller-Hermes}\ \emph {et~al.}(2016)\citenamefont
  {M{\"u}ller-Hermes}, \citenamefont {Franca},\ and\ \citenamefont
  {Wolf}}]{MullerHermes}%
  \BibitemOpen
  \bibfield  {author} {\bibinfo {author} {\bibfnamefont {A.}~\bibnamefont
  {M{\"u}ller-Hermes}}, \bibinfo {author} {\bibfnamefont {D.~S.}\ \bibnamefont
  {Franca}},\ and\ \bibinfo {author} {\bibfnamefont {M.~M.}\ \bibnamefont
  {Wolf}},\ }\bibfield  {title} {\bibinfo {title} {Relative entropy convergence
  for depolarizing channels},\ }\href {https://doi.org/10.1063/1.4939560}
  {\bibfield  {journal} {\bibinfo  {journal} {J. Math. Phys.}\ }\textbf
  {\bibinfo {volume} {57}},\ \bibinfo {pages} {2} (\bibinfo {year}
  {2016})}\BibitemShut {NoStop}%
\bibitem [{\citenamefont {Stilck~Fran{\c c}a}\ and\ \citenamefont
  {Garc\'{i}a-Patr\'{o}n}(2021)}]{FG20}%
  \BibitemOpen
  \bibfield  {author} {\bibinfo {author} {\bibfnamefont {D.}~\bibnamefont
  {Stilck~Fran{\c c}a}}\ and\ \bibinfo {author} {\bibfnamefont
  {R.}~\bibnamefont {Garc\'{i}a-Patr\'{o}n}},\ }\bibfield  {title} {\bibinfo
  {title} {Limitations of optimization algorithms on noisy quantum devices},\
  }\href {https://doi.org/10.1038/s41567-021-01356-3} {\bibfield  {journal}
  {\bibinfo  {journal} {Nature Phys.}\ }\textbf {\bibinfo {volume} {17}},\
  \bibinfo {pages} {1221} (\bibinfo {year} {2021})}\BibitemShut {NoStop}%
\bibitem [{\citenamefont {Tsybakov}(2009)}]{Tsybakov_2009}%
  \BibitemOpen
  \bibfield  {author} {\bibinfo {author} {\bibfnamefont {A.~B.}\ \bibnamefont
  {Tsybakov}},\ }\href {https://doi.org/10.1007/b13794} {\emph {\bibinfo
  {title} {{Introduction to non-parametric estimation}}}}\ (\bibinfo
  {publisher} {Springer New York},\ \bibinfo {year} {2009})\BibitemShut
  {NoStop}%
\bibitem [{\citenamefont {De~Palma}\ \emph {et~al.}(2022)\citenamefont
  {De~Palma}, \citenamefont {Marvian}, \citenamefont {Rouzé},\ and\
  \citenamefont {Stilck~Franca}}]{Wasserstein_variational_22}%
  \BibitemOpen
  \bibfield  {author} {\bibinfo {author} {\bibfnamefont {G.}~\bibnamefont
  {De~Palma}}, \bibinfo {author} {\bibfnamefont {M.}~\bibnamefont {Marvian}},
  \bibinfo {author} {\bibfnamefont {C.}~\bibnamefont {Rouzé}},\ and\ \bibinfo
  {author} {\bibfnamefont {D.}~\bibnamefont {Stilck~Franca}},\ }\href@noop {}
  {\bibinfo {title} {Limitations of variational quantum algorithms: a quantum
  optimal transport approach}} (\bibinfo {year} {2022}),\ \Eprint
  {https://arxiv.org/abs/2204.03455} {arXiv:2204.03455} \BibitemShut {NoStop}%
\bibitem [{\citenamefont {Wang}\ \emph
  {et~al.}(2021{\natexlab{a}})\citenamefont {Wang}, \citenamefont {Czarnik},
  \citenamefont {Arrasmith}, \citenamefont {Cerezo}, \citenamefont {Cincio},\
  and\ \citenamefont {Coles}}]{LosAlamos_EM}%
  \BibitemOpen
  \bibfield  {author} {\bibinfo {author} {\bibfnamefont {S.}~\bibnamefont
  {Wang}}, \bibinfo {author} {\bibfnamefont {P.}~\bibnamefont {Czarnik}},
  \bibinfo {author} {\bibfnamefont {A.}~\bibnamefont {Arrasmith}}, \bibinfo
  {author} {\bibfnamefont {M.}~\bibnamefont {Cerezo}}, \bibinfo {author}
  {\bibfnamefont {L.}~\bibnamefont {Cincio}},\ and\ \bibinfo {author}
  {\bibfnamefont {P.~J.}\ \bibnamefont {Coles}},\ }\href@noop {} {\bibinfo
  {title} {Can error mitigation improve trainability of noisy variational
  quantum algorithms?}} (\bibinfo {year} {2021}{\natexlab{a}}),\ \Eprint
  {https://arxiv.org/abs/2109.01051} {arXiv:2109.01051} \BibitemShut {NoStop}%
\bibitem [{\citenamefont {Thanasilp}\ \emph {et~al.}(2022)\citenamefont
  {Thanasilp}, \citenamefont {Wang}, \citenamefont {Cerezo},\ and\
  \citenamefont {Holmes}}]{LosAlamos_Kernels}%
  \BibitemOpen
  \bibfield  {author} {\bibinfo {author} {\bibfnamefont {S.}~\bibnamefont
  {Thanasilp}}, \bibinfo {author} {\bibfnamefont {S.}~\bibnamefont {Wang}},
  \bibinfo {author} {\bibfnamefont {M.}~\bibnamefont {Cerezo}},\ and\ \bibinfo
  {author} {\bibfnamefont {Z.}~\bibnamefont {Holmes}},\ }\bibfield  {title}
  {\bibinfo {title} {Exponential concentration and untrainability in quantum
  kernel methods},\ }\href {https://doi.org/10.48550/arXiv.2208.11060}
  {\bibfield  {journal} {\bibinfo  {journal} {arXiv:2208.11060}\ } (\bibinfo
  {year} {2022})}\BibitemShut {NoStop}%
\bibitem [{\citenamefont {Wang}\ \emph
  {et~al.}(2021{\natexlab{b}})\citenamefont {Wang}, \citenamefont {Fontana},
  \citenamefont {Cerezo}, \citenamefont {Sharma}, \citenamefont {Sone},
  \citenamefont {Cincio},\ and\ \citenamefont {Coles}}]{Wang2021}%
  \BibitemOpen
  \bibfield  {author} {\bibinfo {author} {\bibfnamefont {S.}~\bibnamefont
  {Wang}}, \bibinfo {author} {\bibfnamefont {E.}~\bibnamefont {Fontana}},
  \bibinfo {author} {\bibfnamefont {M.}~\bibnamefont {Cerezo}}, \bibinfo
  {author} {\bibfnamefont {K.}~\bibnamefont {Sharma}}, \bibinfo {author}
  {\bibfnamefont {A.}~\bibnamefont {Sone}}, \bibinfo {author} {\bibfnamefont
  {L.}~\bibnamefont {Cincio}},\ and\ \bibinfo {author} {\bibfnamefont {P.~J.}\
  \bibnamefont {Coles}},\ }\bibfield  {title} {\bibinfo {title} {Noise-induced
  barren plateaus in variational quantum algorithms},\ }\href
  {https://doi.org/10.1038/s41467-021-27045-6} {\bibfield  {journal} {\bibinfo
  {journal} {Nature Comm.}\ }\textbf {\bibinfo {volume} {12}},\ \bibinfo
  {pages} {6961} (\bibinfo {year} {2021}{\natexlab{b}})}\BibitemShut {NoStop}%
\bibitem [{\citenamefont {Deshpande}\ \emph {et~al.}(2021)\citenamefont
  {Deshpande}, \citenamefont {Niroula}, \citenamefont {Shtanko}, \citenamefont
  {Gorshkov}, \citenamefont {Fefferman},\ and\ \citenamefont
  {Gullans}}]{Abhinav_21}%
  \BibitemOpen
  \bibfield  {author} {\bibinfo {author} {\bibfnamefont {A.}~\bibnamefont
  {Deshpande}}, \bibinfo {author} {\bibfnamefont {P.}~\bibnamefont {Niroula}},
  \bibinfo {author} {\bibfnamefont {O.}~\bibnamefont {Shtanko}}, \bibinfo
  {author} {\bibfnamefont {A.~V.}\ \bibnamefont {Gorshkov}}, \bibinfo {author}
  {\bibfnamefont {B.}~\bibnamefont {Fefferman}},\ and\ \bibinfo {author}
  {\bibfnamefont {M.~J.}\ \bibnamefont {Gullans}},\ }\href@noop {} {\bibinfo
  {title} {Tight bounds on the convergence of noisy random circuits to the
  uniform distribution}} (\bibinfo {year} {2021}),\ \Eprint
  {https://arxiv.org/abs/2112.00716} {arXiv:2112.00716} \BibitemShut {NoStop}%
\bibitem [{\citenamefont {Cleve}\ \emph {et~al.}(2016)\citenamefont {Cleve},
  \citenamefont {Leung}, \citenamefont {Liu},\ and\ \citenamefont
  {Wang}}]{Cleve16}%
  \BibitemOpen
  \bibfield  {author} {\bibinfo {author} {\bibfnamefont {R.}~\bibnamefont
  {Cleve}}, \bibinfo {author} {\bibfnamefont {D.}~\bibnamefont {Leung}},
  \bibinfo {author} {\bibfnamefont {L.}~\bibnamefont {Liu}},\ and\ \bibinfo
  {author} {\bibfnamefont {C.}~\bibnamefont {Wang}},\ }\bibfield  {title}
  {\bibinfo {title} {Near-linear constructions of exact unitary 2-designs},\
  }\href {https://doi.org/10.26421/QIC16.9-10-1} {\bibfield  {journal}
  {\bibinfo  {journal} {Quant. Inf. Comp.}\ }\textbf {\bibinfo {volume} {16}},\
  \bibinfo {pages} {721–756} (\bibinfo {year} {2016})}\BibitemShut {NoStop}%
\bibitem [{\citenamefont {Reyzin}(2020)}]{Reyzin}%
  \BibitemOpen
  \bibfield  {author} {\bibinfo {author} {\bibfnamefont {L.}~\bibnamefont
  {Reyzin}},\ }\href@noop {} {\bibinfo {title} {Statistical queries and
  statistical algorithms: Foundations and applications}} (\bibinfo {year}
  {2020}),\ \Eprint {https://arxiv.org/abs/2004.00557} {arXiv:2004.00557}
  \BibitemShut {NoStop}%
\bibitem [{\citenamefont {Yatracos}(1985)}]{Yatracos85}%
  \BibitemOpen
  \bibfield  {author} {\bibinfo {author} {\bibfnamefont {Y.~G.}\ \bibnamefont
  {Yatracos}},\ }\bibfield  {title} {\bibinfo {title} {{Rates of convergence of
  minimum distance estimators and Kolmogorov's entropy}},\ }\href
  {https://doi.org/10.1214/aos/1176349553} {\bibfield  {journal} {\bibinfo
  {journal} {Ann. Stat.}\ }\textbf {\bibinfo {volume} {13}},\ \bibinfo {pages}
  {768 } (\bibinfo {year} {1985})}\BibitemShut {NoStop}%
\bibitem [{\citenamefont {Blum}\ \emph {et~al.}(1994)\citenamefont {Blum},
  \citenamefont {Furst}, \citenamefont {Jackson}, \citenamefont {Kearns},
  \citenamefont {Mansour},\ and\ \citenamefont {Rudich}}]{BFJKMR94}%
  \BibitemOpen
  \bibfield  {author} {\bibinfo {author} {\bibfnamefont {A.}~\bibnamefont
  {Blum}}, \bibinfo {author} {\bibfnamefont {M.}~\bibnamefont {Furst}},
  \bibinfo {author} {\bibfnamefont {J.}~\bibnamefont {Jackson}}, \bibinfo
  {author} {\bibfnamefont {M.}~\bibnamefont {Kearns}}, \bibinfo {author}
  {\bibfnamefont {Y.}~\bibnamefont {Mansour}},\ and\ \bibinfo {author}
  {\bibfnamefont {S.}~\bibnamefont {Rudich}},\ }\bibfield  {title} {\bibinfo
  {title} {{Weakly learning DNF and characterizing statistical query learning
  using Fourier analysis}},\ }in\ \href {https://doi.org/10.1145/195058.195147}
  {\emph {\bibinfo {booktitle} {Proceedings of the Twenty-Sixth Annual ACM
  Symposium on Theory of Computing}}},\ \bibinfo {series and number} {STOC
  '94}\ (\bibinfo  {publisher} {Association for Computing Machinery},\ \bibinfo
  {address} {New York, NY, USA},\ \bibinfo {year} {1994})\ p.\ \bibinfo {pages}
  {253–262}\BibitemShut {NoStop}%
\bibitem [{\citenamefont {Fran\ifmmode~\mbox{\c{c}}\else \c{c}\fi{}a}\ \emph
  {et~al.}(2021)\citenamefont {Fran\ifmmode~\mbox{\c{c}}\else \c{c}\fi{}a},
  \citenamefont {Strelchuk},\ and\ \citenamefont {Studzi\ifmmode~\acute{n}\else
  \'{n}\fi{}ski}}]{PhysRevLett.126.210502}%
  \BibitemOpen
  \bibfield  {author} {\bibinfo {author} {\bibfnamefont {D.~S.}\ \bibnamefont
  {Fran\ifmmode~\mbox{\c{c}}\else \c{c}\fi{}a}}, \bibinfo {author}
  {\bibfnamefont {S.}~\bibnamefont {Strelchuk}},\ and\ \bibinfo {author}
  {\bibfnamefont {M.}~\bibnamefont {Studzi\ifmmode~\acute{n}\else
  \'{n}\fi{}ski}},\ }\bibfield  {title} {\bibinfo {title} {{Efficient classical
  simulation and benchmarking of quantum processes in the Weyl basis}},\ }\href
  {https://doi.org/10.1103/PhysRevLett.126.210502} {\bibfield  {journal}
  {\bibinfo  {journal} {Phys. Rev. Lett.}\ }\textbf {\bibinfo {volume} {126}},\
  \bibinfo {pages} {210502} (\bibinfo {year} {2021})}\BibitemShut {NoStop}%
\bibitem [{\citenamefont {Rall}\ \emph {et~al.}(2019)\citenamefont {Rall},
  \citenamefont {Liang}, \citenamefont {Cook},\ and\ \citenamefont
  {Kretschmer}}]{PhysRevA.99.062337}%
  \BibitemOpen
  \bibfield  {author} {\bibinfo {author} {\bibfnamefont {P.}~\bibnamefont
  {Rall}}, \bibinfo {author} {\bibfnamefont {D.}~\bibnamefont {Liang}},
  \bibinfo {author} {\bibfnamefont {J.}~\bibnamefont {Cook}},\ and\ \bibinfo
  {author} {\bibfnamefont {W.}~\bibnamefont {Kretschmer}},\ }\bibfield  {title}
  {\bibinfo {title} {{Simulation of qubit quantum circuits via Pauli
  propagation}},\ }\href {https://doi.org/10.1103/PhysRevA.99.062337}
  {\bibfield  {journal} {\bibinfo  {journal} {Phys. Rev. A}\ }\textbf {\bibinfo
  {volume} {99}},\ \bibinfo {pages} {062337} (\bibinfo {year}
  {2019})}\BibitemShut {NoStop}%
\bibitem [{\citenamefont {Bravyi}\ \emph {et~al.}(2020)\citenamefont {Bravyi},
  \citenamefont {Kliesch}, \citenamefont {Koenig},\ and\ \citenamefont
  {Tang}}]{PhysRevLett.125.260505}%
  \BibitemOpen
  \bibfield  {author} {\bibinfo {author} {\bibfnamefont {S.}~\bibnamefont
  {Bravyi}}, \bibinfo {author} {\bibfnamefont {A.}~\bibnamefont {Kliesch}},
  \bibinfo {author} {\bibfnamefont {R.}~\bibnamefont {Koenig}},\ and\ \bibinfo
  {author} {\bibfnamefont {E.}~\bibnamefont {Tang}},\ }\bibfield  {title}
  {\bibinfo {title} {Obstacles to variational quantum optimization from
  symmetry protection},\ }\href
  {https://doi.org/10.1103/PhysRevLett.125.260505} {\bibfield  {journal}
  {\bibinfo  {journal} {Phys. Rev. Lett.}\ }\textbf {\bibinfo {volume} {125}},\
  \bibinfo {pages} {260505} (\bibinfo {year} {2020})}\BibitemShut {NoStop}%
\bibitem [{\citenamefont {Farhi}\ \emph {et~al.}(2020)\citenamefont {Farhi},
  \citenamefont {Gamarnik},\ and\ \citenamefont {Gutmann}}]{whole_graph}%
  \BibitemOpen
  \bibfield  {author} {\bibinfo {author} {\bibfnamefont {E.}~\bibnamefont
  {Farhi}}, \bibinfo {author} {\bibfnamefont {D.}~\bibnamefont {Gamarnik}},\
  and\ \bibinfo {author} {\bibfnamefont {S.}~\bibnamefont {Gutmann}},\
  }\bibfield  {title} {\bibinfo {title} {The quantum approximate optimization
  algorithm needs to see the whole graph: Worst case examples},\ }\href
  {https://doi.org/10.48550/arXiv.2005.08747} {\bibfield  {journal} {\bibinfo
  {journal} {arXiv:2005.08747}\ } (\bibinfo {year} {2020})}\BibitemShut
  {NoStop}%
\bibitem [{\citenamefont {Eldar}\ and\ \citenamefont {Harrow}(2017)}]{8104078}%
  \BibitemOpen
  \bibfield  {author} {\bibinfo {author} {\bibfnamefont {L.}~\bibnamefont
  {Eldar}}\ and\ \bibinfo {author} {\bibfnamefont {A.~W.}\ \bibnamefont
  {Harrow}},\ }\bibfield  {title} {\bibinfo {title} {{Local Hamiltonians whose
  ground states are hard to approximate}},\ }in\ \href
  {https://doi.org/10.1109/FOCS.2017.46} {\emph {\bibinfo {booktitle} {2017
  IEEE 58th Annual Symposium on Foundations of Computer Science (FOCS)}}}\
  (\bibinfo {year} {2017})\ pp.\ \bibinfo {pages} {427--438}\BibitemShut
  {NoStop}%
\bibitem [{\citenamefont {Anshu}\ \emph {et~al.}(2023)\citenamefont {Anshu},
  \citenamefont {Breuckmann},\ and\ \citenamefont {Nirkhe}}]{nlts_conjecture}%
  \BibitemOpen
  \bibfield  {author} {\bibinfo {author} {\bibfnamefont {A.}~\bibnamefont
  {Anshu}}, \bibinfo {author} {\bibfnamefont {N.~P.}\ \bibnamefont
  {Breuckmann}},\ and\ \bibinfo {author} {\bibfnamefont {C.}~\bibnamefont
  {Nirkhe}},\ }\bibfield  {title} {\bibinfo {title} {{NLTS Hamiltonians from
  Good Quantum Codes}},\ }in\ \href {https://doi.org/10.1145/3564246.3585114}
  {\emph {\bibinfo {booktitle} {{Proceedings of the 55th Annual ACM Symposium
  on Theory of Computing}}}},\ \bibinfo {series and number} {{STOC 2023}}\
  (\bibinfo  {publisher} {Association for Computing Machinery},\ \bibinfo
  {address} {New York, NY, USA},\ \bibinfo {year} {2023})\ pp.\ \bibinfo
  {pages} {1090--1096},\ \Eprint {https://arxiv.org/abs/2206.13228}
  {2206.13228} \BibitemShut {NoStop}%
\bibitem [{\citenamefont {Gonz\'alez-Garc{\i}a}\ \emph
  {et~al.}(2022)\citenamefont {Gonz\'alez-Garc{\i}a}, \citenamefont {Trivedi},\
  and\ \citenamefont {Cirac}}]{CiracNoise}%
  \BibitemOpen
  \bibfield  {author} {\bibinfo {author} {\bibfnamefont {G.}~\bibnamefont
  {Gonz\'alez-Garc{\i}a}}, \bibinfo {author} {\bibfnamefont {R.}~\bibnamefont
  {Trivedi}},\ and\ \bibinfo {author} {\bibfnamefont {J.~I.}\ \bibnamefont
  {Cirac}},\ }\bibfield  {title} {\bibinfo {title} {{Error propagation in NISQ
  devices for solving classical optimization problems}},\ }\href
  {https://doi.org/10.1103/PRXQuantum.3.040326} {\bibfield  {journal} {\bibinfo
   {journal} {PRX Quantum}\ }\textbf {\bibinfo {volume} {3}},\ \bibinfo {pages}
  {040326} (\bibinfo {year} {2022})},\ \Eprint
  {https://arxiv.org/abs/2203.15632} {2203.15632} \BibitemShut {NoStop}%
\bibitem [{\citenamefont {Endo}\ \emph
  {et~al.}(2018{\natexlab{b}})\citenamefont {Endo}, \citenamefont {Benjamin},\
  and\ \citenamefont {Li}}]{endo2018practical}%
  \BibitemOpen
  \bibfield  {author} {\bibinfo {author} {\bibfnamefont {S.}~\bibnamefont
  {Endo}}, \bibinfo {author} {\bibfnamefont {S.~C.}\ \bibnamefont {Benjamin}},\
  and\ \bibinfo {author} {\bibfnamefont {Y.}~\bibnamefont {Li}},\ }\bibfield
  {title} {\bibinfo {title} {Practical quantum error mitigation for near-future
  applications},\ }\href {https://doi.org/10.1103/PhysRevX.8.031027} {\bibfield
   {journal} {\bibinfo  {journal} {Phys. Rev. X}\ }\textbf {\bibinfo {volume}
  {8}},\ \bibinfo {pages} {031027} (\bibinfo {year}
  {2018}{\natexlab{b}})}\BibitemShut {NoStop}%
\bibitem [{\citenamefont {Mosonyi}\ and\ \citenamefont
  {Hiai}(2011)}]{MosonyiHiai}%
  \BibitemOpen
  \bibfield  {author} {\bibinfo {author} {\bibfnamefont {M.}~\bibnamefont
  {Mosonyi}}\ and\ \bibinfo {author} {\bibfnamefont {F.}~\bibnamefont {Hiai}},\
  }\bibfield  {title} {\bibinfo {title} {{On the quantum Rényi relative
  entropies and related capacity formulas}},\ }\href
  {https://doi.org/10.1109/TIT.2011.2110050} {\bibfield  {journal} {\bibinfo
  {journal} {IEEE Trans. Inf. Th.}\ }\textbf {\bibinfo {volume} {57}},\
  \bibinfo {pages} {2474} (\bibinfo {year} {2011})}\BibitemShut {NoStop}%
\bibitem [{\citenamefont {Dankert}\ \emph {et~al.}(2009)\citenamefont
  {Dankert}, \citenamefont {Cleve}, \citenamefont {Emerson},\ and\
  \citenamefont {Livine}}]{DCEL09}%
  \BibitemOpen
  \bibfield  {author} {\bibinfo {author} {\bibfnamefont {C.}~\bibnamefont
  {Dankert}}, \bibinfo {author} {\bibfnamefont {R.}~\bibnamefont {Cleve}},
  \bibinfo {author} {\bibfnamefont {J.}~\bibnamefont {Emerson}},\ and\ \bibinfo
  {author} {\bibfnamefont {E.}~\bibnamefont {Livine}},\ }\bibfield  {title}
  {\bibinfo {title} {Exact and approximate unitary 2-designs and their
  application to fidelity estimation},\ }\href
  {https://doi.org/10.1103/PhysRevA.80.012304} {\bibfield  {journal} {\bibinfo
  {journal} {Phys. Rev. A}\ }\textbf {\bibinfo {volume} {80}},\ \bibinfo
  {pages} {012304} (\bibinfo {year} {2009})}\BibitemShut {NoStop}%
\bibitem [{\citenamefont {Gottesman}(1997)}]{Gottesman_thesis}%
  \BibitemOpen
  \bibfield  {author} {\bibinfo {author} {\bibfnamefont {D.}~\bibnamefont
  {Gottesman}},\ }\bibfield  {title} {\bibinfo {title} {{Stabilizer codes and
  quantum error correction}},\ }\href
  {https://doi.org/10.48550/arXiv.quant-ph/9705052} {\bibfield  {journal}
  {\bibinfo  {journal} {arXiv:quant-ph/9705052}\ } (\bibinfo {year}
  {1997})}\BibitemShut {NoStop}%
\bibitem [{\citenamefont {Watrous}(2018)}]{Wat18_book}%
  \BibitemOpen
  \bibfield  {author} {\bibinfo {author} {\bibfnamefont {J.}~\bibnamefont
  {Watrous}},\ }\href {https://doi.org/10.1017/9781316848142} {\emph {\bibinfo
  {title} {{The theory of quantum information}}}}\ (\bibinfo  {publisher}
  {Cambridge University Press},\ \bibinfo {year} {2018})\BibitemShut {NoStop}%
\bibitem [{\citenamefont {Cerezo}\ \emph {et~al.}(2021)\citenamefont {Cerezo},
  \citenamefont {Arrasmith}, \citenamefont {Babbush}, \citenamefont {Benjamin},
  \citenamefont {Endo}, \citenamefont {Fujii}, \citenamefont {{McClean}},
  \citenamefont {Mitarai}, \citenamefont {Yuan}, \citenamefont {Cincio},\ and\
  \citenamefont {Coles}}]{cerezo2021variational}%
  \BibitemOpen
  \bibfield  {author} {\bibinfo {author} {\bibfnamefont {M.}~\bibnamefont
  {Cerezo}}, \bibinfo {author} {\bibfnamefont {A.}~\bibnamefont {Arrasmith}},
  \bibinfo {author} {\bibfnamefont {R.}~\bibnamefont {Babbush}}, \bibinfo
  {author} {\bibfnamefont {S.~C.}\ \bibnamefont {Benjamin}}, \bibinfo {author}
  {\bibfnamefont {S.}~\bibnamefont {Endo}}, \bibinfo {author} {\bibfnamefont
  {K.}~\bibnamefont {Fujii}}, \bibinfo {author} {\bibfnamefont {J.~R.}\
  \bibnamefont {{McClean}}}, \bibinfo {author} {\bibfnamefont {K.}~\bibnamefont
  {Mitarai}}, \bibinfo {author} {\bibfnamefont {X.}~\bibnamefont {Yuan}},
  \bibinfo {author} {\bibfnamefont {L.}~\bibnamefont {Cincio}},\ and\ \bibinfo
  {author} {\bibfnamefont {P.~J.}\ \bibnamefont {Coles}},\ }\bibfield  {title}
  {\bibinfo {title} {Variational quantum algorithms},\ }\href
  {https://doi.org/10.1038/s42254-021-00348-9} {\bibfield  {journal} {\bibinfo
  {journal} {Nature Rev. Phys.}\ }\textbf {\bibinfo {volume} {3}},\ \bibinfo
  {pages} {625} (\bibinfo {year} {2021})}\BibitemShut {NoStop}%
\bibitem [{\citenamefont {Bharti}\ \emph {et~al.}(2022)\citenamefont {Bharti},
  \citenamefont {Cervera-Lierta}, \citenamefont {Kyaw}, \citenamefont {Haug},
  \citenamefont {Alperin-Lea}, \citenamefont {Anand}, \citenamefont {Degroote},
  \citenamefont {Heimonen}, \citenamefont {Kottmann}, \citenamefont {Menke},
  \citenamefont {Mok}, \citenamefont {Sim}, \citenamefont {Kwek},\ and\
  \citenamefont {Aspuru-Guzik}}]{bharti2022noisy}%
  \BibitemOpen
  \bibfield  {author} {\bibinfo {author} {\bibfnamefont {K.}~\bibnamefont
  {Bharti}}, \bibinfo {author} {\bibfnamefont {A.}~\bibnamefont
  {Cervera-Lierta}}, \bibinfo {author} {\bibfnamefont {T.~H.}\ \bibnamefont
  {Kyaw}}, \bibinfo {author} {\bibfnamefont {T.}~\bibnamefont {Haug}}, \bibinfo
  {author} {\bibfnamefont {S.}~\bibnamefont {Alperin-Lea}}, \bibinfo {author}
  {\bibfnamefont {A.}~\bibnamefont {Anand}}, \bibinfo {author} {\bibfnamefont
  {M.}~\bibnamefont {Degroote}}, \bibinfo {author} {\bibfnamefont
  {H.}~\bibnamefont {Heimonen}}, \bibinfo {author} {\bibfnamefont {J.~S.}\
  \bibnamefont {Kottmann}}, \bibinfo {author} {\bibfnamefont {T.}~\bibnamefont
  {Menke}}, \bibinfo {author} {\bibfnamefont {W.-K.}\ \bibnamefont {Mok}},
  \bibinfo {author} {\bibfnamefont {S.}~\bibnamefont {Sim}}, \bibinfo {author}
  {\bibfnamefont {L.-C.}\ \bibnamefont {Kwek}},\ and\ \bibinfo {author}
  {\bibfnamefont {A.}~\bibnamefont {Aspuru-Guzik}},\ }\bibfield  {title}
  {\bibinfo {title} {Noisy intermediate-scale quantum algorithms},\ }\href
  {https://doi.org/10.1103/RevModPhys.94.015004} {\bibfield  {journal}
  {\bibinfo  {journal} {Rev. Mod. Phys.}\ }\textbf {\bibinfo {volume} {94}},\
  \bibinfo {pages} {015004} (\bibinfo {year} {2022})}\BibitemShut {NoStop}%
\bibitem [{\citenamefont {Kearns}(1998)}]{KearnsSQ}%
  \BibitemOpen
  \bibfield  {author} {\bibinfo {author} {\bibfnamefont {M.}~\bibnamefont
  {Kearns}},\ }\bibfield  {title} {\bibinfo {title} {Efficient noise-tolerant
  learning from statistical queries},\ }\href
  {https://doi.org/10.1145/293347.293351} {\bibfield  {journal} {\bibinfo
  {journal} {J. ACM}\ }\textbf {\bibinfo {volume} {45}},\ \bibinfo {pages}
  {983–1006} (\bibinfo {year} {1998})}\BibitemShut {NoStop}%
\bibitem [{\citenamefont {Hinsche}\ \emph {et~al.}(2023)\citenamefont
  {Hinsche}, \citenamefont {Ioannou}, \citenamefont {Nietner}, \citenamefont
  {Haferkamp}, \citenamefont {Quek}, \citenamefont {Hangleiter}, \citenamefont
  {Seifert}, \citenamefont {Eisert},\ and\ \citenamefont
  {Sweke}}]{ouramazingTgatepaper}%
  \BibitemOpen
  \bibfield  {author} {\bibinfo {author} {\bibfnamefont {M.}~\bibnamefont
  {Hinsche}}, \bibinfo {author} {\bibfnamefont {M.}~\bibnamefont {Ioannou}},
  \bibinfo {author} {\bibfnamefont {A.}~\bibnamefont {Nietner}}, \bibinfo
  {author} {\bibfnamefont {J.}~\bibnamefont {Haferkamp}}, \bibinfo {author}
  {\bibfnamefont {Y.}~\bibnamefont {Quek}}, \bibinfo {author} {\bibfnamefont
  {D.}~\bibnamefont {Hangleiter}}, \bibinfo {author} {\bibfnamefont {J.-P.}\
  \bibnamefont {Seifert}}, \bibinfo {author} {\bibfnamefont {J.}~\bibnamefont
  {Eisert}},\ and\ \bibinfo {author} {\bibfnamefont {R.}~\bibnamefont
  {Sweke}},\ }\bibfield  {title} {\bibinfo {title} {{A single $T$-gate makes
  distribution learning hard}},\ }\href
  {https://doi.org/10.1103/PhysRevLett.130.240602} {\bibfield  {journal}
  {\bibinfo  {journal} {Phys. Rev. Lett.}\ }\textbf {\bibinfo {volume} {130}},\
  \bibinfo {pages} {240602} (\bibinfo {year} {2023})}\BibitemShut {NoStop}%
\bibitem [{\citenamefont {Arunachalam}\ \emph {et~al.}(2020)\citenamefont
  {Arunachalam}, \citenamefont {Grilo},\ and\ \citenamefont {Yuen}}]{AGY20}%
  \BibitemOpen
  \bibfield  {author} {\bibinfo {author} {\bibfnamefont {S.}~\bibnamefont
  {Arunachalam}}, \bibinfo {author} {\bibfnamefont {A.~B.}\ \bibnamefont
  {Grilo}},\ and\ \bibinfo {author} {\bibfnamefont {H.}~\bibnamefont {Yuen}},\
  }\bibfield  {title} {\bibinfo {title} {Quantum statistical query learning},\
  }\href {https://doi.org/10.48550/arXiv.2002.08240} {\bibfield  {journal}
  {\bibinfo  {journal} {arXiv:2002.08240}\ } (\bibinfo {year}
  {2020})}\BibitemShut {NoStop}%
\bibitem [{\citenamefont {Fran{\c c}a}\ and\ \citenamefont
  {{Garcia-Patron}}(2022)}]{francaGameQuantumAdvantage2020}%
  \BibitemOpen
  \bibfield  {author} {\bibinfo {author} {\bibfnamefont {D.~S.}\ \bibnamefont
  {Fran{\c c}a}}\ and\ \bibinfo {author} {\bibfnamefont {R.}~\bibnamefont
  {{Garcia-Patron}}},\ }\bibfield  {title} {\bibinfo {title} {A game of quantum
  advantage: Linking verification and simulation},\ }\href
  {https://doi.org/10.48550/arXiv.2011.12173} {\bibfield  {journal} {\bibinfo
  {journal} {Quantum}\ }\textbf {\bibinfo {volume} {6}},\ \bibinfo {pages}
  {753} (\bibinfo {year} {2022})}\BibitemShut {NoStop}%
\end{thebibliography}
\end{document}